\documentclass[12pt]{scrartcl}

\usepackage{fixltx2e}
\usepackage{lmodern}
\usepackage[T1]{fontenc}
\usepackage[utf8]{inputenc}
\usepackage{booktabs}
\usepackage{siunitx}
\usepackage{mathpazo}
\usepackage[margin=0.7in]{geometry}
\setlength{\parindent}{1em}
\setlength{\parskip}{0em}
\usepackage{titling}
\predate{}
\postdate{}
\usepackage{authblk}
\usepackage{amssymb}
\usepackage{amsmath}
\usepackage{xcolor}
\usepackage{graphicx}
\usepackage{enumitem}
\usepackage{float}
\usepackage{amsthm}

\newtheorem{theorem}{Theorem}

\newtheorem{lemma}[theorem]{Lemma}
\newtheorem{definition}{Definition}

\addtokomafont{disposition}{\rmfamily}
\sisetup{obeybold}

\newunit{\point}{pt}
\newunit{\inch}{in}

\title{Compactification of Extensive Game Structures and Backward Dominance Procedure\thanks{
The author owes the editors and two anonymous reviewers a great debt of gratitude for their comments and advice. She thanks Pierpaolo Battigalli, Dmitry Kvasov, and Andr\'{e}s Perea for valuable discussions and encouragements. She gratefully acknowledges the support of Grant-in-Aids for Young Scientists (B) of JSPS No.17K13707 and Grant for Special Research Projects No.2019C-484 and No. 2020C-018 of Waseda University.  }}
\author{Shuige Liu\thanks{School of Political Science and Economics, Waseda University, Nishi-Shinjuku 1-6-1, Shinjuku-Ku, 169-8050, Tokyo, Japan. EPICENTER, Maastricht University,
6200 MD Maastricht, The Netherlands. (\textsf{shuige\_liu@aoni.waseda.jp}) }}
\date{}

\begin{document}

\maketitle
\begin{abstract}
\textbf{Abstract} We study the relationship between invariant transformations on extensive game structures and backward dominance procedure (BD), a generalization of the classical backward induction introduced in Perea \cite{pe12}. We show that behavioral equivalence with unambiguous orderings of information sets, a critical property that guarantees BD's applicability, can be characterized by the classical Coalescing and a modified Interchange/Simultanizing in Battigalli et al. \cite{blm20}. We also give conditions on transformations that improve BD's efficiency. In addition, we discuss the relationship between transformations and Bonanno \cite{bo14}'s generalized backward induction.
\medskip

\textbf{Keywords} backward dominance procedure, behavioral equivalence, invariant transformations, generalized backward induction, unambiguous orderings on information sets
\end{abstract}

\section{Introduction}\label{sec:int}

A decision-maker in a dynamic situation is concerned at each move with what happened before and what may happen in the future. Hence the chronological order of plays, formulated by the arrangement of information sets, is a critical issue in dynamic epistemic game theory, a prosperous field which studies assumptions about strategic reasoning and their influence on behaviors in extensive games. See Battigalli and Bonnano \cite{bb99}, Perea \cite{pe12}, and Dekel and Siniscalchi \cite{ds15} for surveys of the field.

Here arises the problem of multiple representations. A dynamic decision-making situation can be represented by a class of extensive game \emph{structures}. Those structures describe the same information status, that is, a player's knowledge at each of her information set about another information set. They can be simplified to the same reduced form and transformed into each other through invariant operations. Their difference pertains to the orderings of information sets. See Battigalli et al. \cite{blm20}. The literature on \emph{game} equivalence, which is relevant to this issue, will be discussed in Section 1.1.

Some authors, like Kohlberg and Mertens \cite{km86}, claim that the multiplicity is not an essential obstacle. However, in dynamic epistemic game theory, for concepts related to backward induction, like Penta \cite{pe09}'s backwards rationalizability procedure, Perea \cite{pe14}'s belief in the opponents' future rationality, and Bonnano \cite{bo14}'s forward belief of rationality, the optimal strategies change with orderings of information sets. Further, some invariant transformation may mess up the ordering of information sets, making the algorithms impossible to be applied.

This sensitivity to orderings of information sets is due to the asymmetricity toward past and future. A player with a backward induction-related reasoning structure tends to ignore the past at each move and to believe that all her opponents will be rational henceforth. Therefore,  shifting-up an information set which was the future to the past changes her reasoning structure and may alter her optimal choices. In general, since the inconsistency between the past and the future rationality makes discrimination on the ground of chronological order unavoidable (see Reny \cite{re92}, \cite{re93}, Perea \cite{pe06}, \cite{pe07}), the multiplicity of orderings of information sets matters in epistemic game theory.

This paper takes Perea \cite{pe14}'s \emph{backward dominance procedure} (BD) as a representative and explores its relationship with invariant transformations on extensive game structures. BD is an algorithm that characterizes strategies that can be rationally chosen under common belief in future rationality if common belief in Bayesian updating is not imposed. For each information set, BD starts with the strategy profiles that reaches it; at each stage, the strategies that are strictly dominated at the information set \emph{and} at those simultaneous with or following it are eliminated. Both BD and Penta \cite{pe09}'s backwards rationalizability procedure generalize the classical backward induction in extensive games with imperfect information, while the latter eliminates strategies \emph{and} conditional beliefs.
BD is also related to Shimoji and Watson \cite{sw98}'s iterated conditional dominance procedure, while the elimination in the latter is symmetric toward past and future.\footnote {This feature makes the epistemic game theoretical concepts characterized by it, like extensive form rationality (Pearce \cite{pe84}, Battigalli \cite{ba97}) and common strong belief in rationality  (Battigalli and Siniscalchi \cite{bs02}), more robust to different orderings of information sets.} Bonnano \cite{bo14}'s generalized backward induction is another relevant concept and will be discussed later.

We do not follow the classical paradigm to characterize games having the same strategies surviving BD in terms of transformations since, as argued above, the latter alter the ordering of information sets and makes the survival invariance impossible. Instead, we raise two important conditions related to BD and investigate what invariant transformations satisfy them.

The first condition is called \emph{unambiguity of the ordering of information sets} (UO). As defined in Perea \cite{pe14}, it holds iff no information set is both before and after another one.\footnote{Following the tradition of Paul Halmos and Kelley \cite{ke55}, we use ``iff''  in definitions. The full expression ``if and only if'' is reserved for the logical connective indicating necessary and sufficient conditions in statements.} In other words, UO requires that the past does not get entangled with the future at each information set. If UO is violated, the elimination order between the two entangled information sets cannot be defined and BD fails to work.

\begin{figure}
\centering
  \includegraphics[width=0.8\columnwidth]{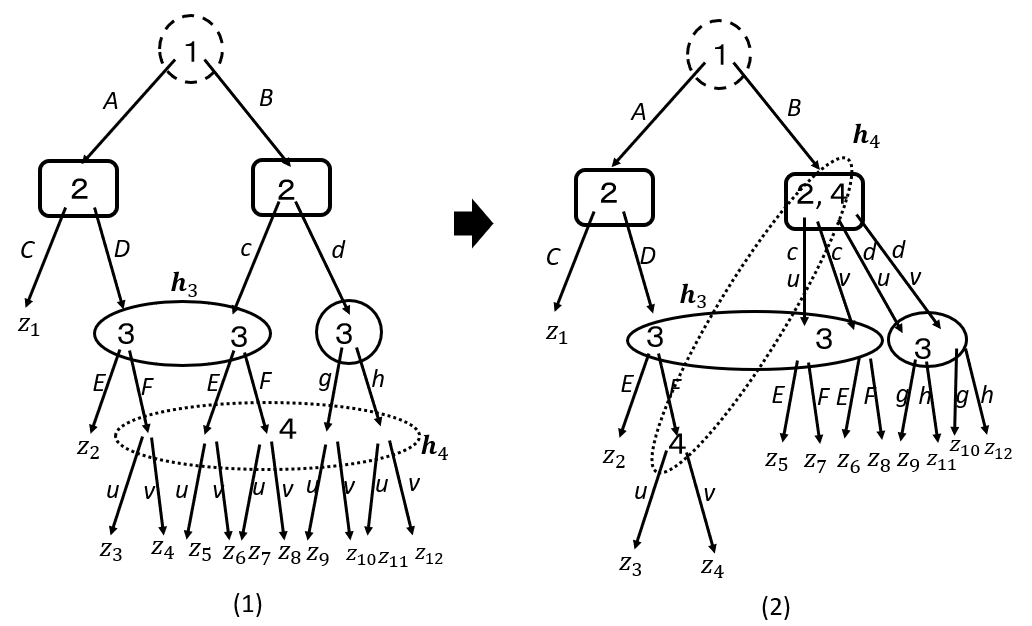}
  \caption{An interchanging/simultanizing destroys the the unambiguous ordering of information sets}
  \label{fig:MES}
\end{figure}

Our first question is what invariant transformations preserve UO. Since BD eliminates behaviorally equivalent strategies, or ``plans of actions'' \`a la Rubinstein \cite{ru91}, preserving behavioral equivalence is needed. As shown in Battigalli et al. \cite{blm20}, two transformations called \emph{Coalescing} and \emph{Interchange/Simultanizing} (IS) characterize behavioral equivalence. Roughly speaking, a Coalescing shifts up and combines one information set with another if both belong to the same player and the former is an extension of the latter; an IS synchronizes some histories in an information set with their common predecessor. Originated in Thompson \cite{to52}, both are classical invariant transformations compactifying extensive games.\footnote{We use the word \emph{compactification} instead of ``reduction'' adopted in the literature (e.g., Battigalli et al. \cite{blm20}) because we want to reserve the latter to the reduced normal forms/static structures.} However, an IS may destroy UO. For example, by applying IS on the structure in Figure \ref{fig:MES} (1), we obtained the structure in (2) where UO is violated since $\mathbf{h}_{4}$ is both before and after $\mathbf{h}_{3}$.

To preserve UO, we modify IS by prohibiting any ``partial crossing'' synchronization, called a \emph{non-crossing} IS. Theorem 1 shows that Coalescing and non-crossing IS characterize behavioral equivalence with UO. In other words, two extensive game structures \emph{satisfying UO} are behaviorally equivalent if and only they can be transformed into each other through a sequence of Coalescings, non-crossing ISs, and their inverses up to isomorphisms. This result provides a base for the applicability of BD in a behavioral equivalence class. It is a refinement of Battigalli et al. \cite{blm20}'s characterization of behavioral equivalence by Coalescing and IS.
\medskip

Our second question is when a compactification is \emph{monotonic}, that is, every strategy that can be eliminated in a structure can also be eliminated after a compactification. A monotonic compactification makes BD more efficient. Coalescing and non-crossing IS are not monotonic because both may remove the ``base'' for eliminating a strategy. Monotonicity requires a compactification to preserve simultaneity and to weakly preserve following, that is, after the compactification, two previously simultaneous information sets should be simultaneous, and one information following another still follows it or is simultaneous with it. Consequently, such a compactification should be ``collective'', that is, an information set shifts up if and only if all information sets related to it via simultaneity also shift up and their destinations are simultaneous. Also, the shifting-ups should only occur between neighbors because bypassing an in-between information set may cause some strategies reaching there no longer eliminatable.

In Section \ref{sec:cpa}, we formulate those requirements in a concept called \emph{complete immediate compactification opportunities} (complete ICO). Theorem 2 shows the transformation on a complete ICO is monotonic. We also proclaim that, though complete ICO does not characterize monotonicity, violating any of its conditions destroys monotonicity in some cases. In this sense, complete ICO can be regarded as necessary for monotonicity in a weak sense.

A drawback of iterative transformations on complete ICOs is the order-dependence. We introduce the \emph{backward compactification} which folds up a structure from the leaves to the root. Among all compactification orders, the backward compactification makes it possible to eliminate strategy at more information sets . Hence it may be taken as a benchmark.

\smallskip

We also briefly discuss Bonanno \cite{bo14}'s generalized backward induction (GBI), an algorithm closely related to BD. It is applied on terminal histories, instead of on strategy profiles in BD, to characterize \emph{forward belief of rationality}, a generalization of backward induction in a doxastic model which does not base on the classical subjective counterfactuals. GBI is defined on von Neumann structures in which all histories in an information set have the same length. In Section \ref{sec:dis}, we characterize behavioral equivalence in von Neumann structures by an invariant transformation and consider the monotonicity for GBI.

\subsection{The literature on game equivalence}

There are several kinds of equivalence in extensive games. For example, strategic equivalence defined in von Neumann and Morgenstern \cite{vnm} (pp.245-248) focuses on payoff functions leading to the same solution. Krentel et al. \cite{kmq51}, Thompson \cite{to52}, and Dalkey \cite{da53} defines \emph{game equivalence} in terms of reduced normal form. Roughly speaking, given an extensive game, we group a player's strategies that yield the same consequence against all her opponents' strategy profiles (called \emph{behavioral equivalence}) into an equivalence class and reduce the extensive game to a static one (called the \emph{reduced normal form}). Two extensive games are \emph{equivalent} iff they have the same reduced normal form up to isomorphisms.

Thompson \cite{to52} provides four elementary transformations on extensive games, called \emph{Inflation/Deflation}, \emph{Addition of a superfluous move}, \emph{Coalescing of moves}, and \emph{Interchange of moves}. He shows that two finite extensive games are equivalent if and only if they can be transformed into each other by applying the transformations finitely many times. His result is extended in later papers. The most famous one is Elmes and Reny \cite{er94} who notice that Inflation/Deflation may destroy perfect recall and characterize equivalence with perfect recall by Coalescing, Interchange of moves, and a modified version of Addition of a superfluous move. Recently, logicians started to use game algebra to explore game equivalence. See, for example, Goranko \cite{go03} and van Benthem et al. \cite{vbbe19}.

However, payoff equivalence in a game does not imply behavioral equivalence in the game structure. This problem hinders game equivalence's application  in the fields where game structures play a crucial role, like dynamic epistemic game theory. Endeavor devoted to fill this gap starts from Bonanno \cite{bo92}, which shows that the equivalence in the sense of set-theoretic forms can be characterized by Interchange of moves alone. Battigalli et al. \cite{blm20}, which profoundly influences this paper, characterize behavioral equivalence on extensive game \emph{structures} by Coalescing and Interchange/Simultanizing, the latter an adaption of Interchange of moves suiting simultaneous moves.

Initially, Thompson \cite{to52}'s  elementary transformations are algorithmic issues. They are re-interpreted in Kohlberg and Mertens \cite{km86} in terms of strategic features. They add two transformations to Thompson \cite{to52}'s and argue that all strategically stable equilibria are invariant under the six transformations. Their approach is inherited and extended in later researches. In that vein, de Bruin \cite{de99} studies the (in)variance of many important solution concepts under Kohlberg and Mertens \cite{km86}'s six transformations. Hoshi and Isaac \cite{hi10} studies the relationship between invariant transformations and games with unawareness.

This paper shares Kohlberg and Mertens \cite{km86}'s concern of transformations and strategic features of extensive games. However, we disagree with their assertion that ``elementary transformations [...] are irrelevant for correct decision-makings'' (p.1011). Chronological order is a nature of dynamics. As shown theoretically and experimentally (e.g., Amershi et al. \cite{ass89}, Hammond \cite{ha08}, Weber et al. \cite{wck04}), the orderings of information sets influence people's decision-making. This paper investigates how it influences the strategies optimal to common belief of future rationality. We anticipate more researches explicitly considering the relationship between time and reasoning structures in dynamic situations.

\smallskip
The rest of the paper is organized as follows. Section \ref{sec:pre} defines extensive game structures, unambiguous orderings of information sets, and behavioral equivalence. Section \ref{sec:tinf} introduces the invariant transformations and characterizes behavioral equivalence with unambiguous orderings of information sets. Section \ref{sec:cpa} studies backward dominance procedure and monotonic compactifications. Section \ref{sec:dis} discusses the necessity of conditions in Section \ref{sec:cpa} and the relationship between transformations and Bonanno \cite{bo14}'s generalized backward induction.

\section{Preliminaries}\label{sec:pre}
\subsection{Extensive Game Structures with Simultaneous Moves}

We adopt the history-based definition of an extensive game structure with simultaneous moves.\footnote{This definition originates in Osborne and Rubinstein \cite{or94}, Chapter 6. For its applications in epistemic game theory, see Battigalli and Bonanno \cite{bb99}, Battigalli and Siniscalchi \cite{bs02}, Bonanno \cite{bo14}, Perea \cite{pe12} (Chapters 8 and 9), \cite{pe14}, Battigalli \cite{ba20} (Chapter 10), to name but a few. The history-based definition is in essence equivalent to Kuhn \cite{ku53}'s classical one. See Al\'{o}s-Ferrer and Rizberger \cite{ar05}, \cite{ar08}, and \cite{ar16} (Chapters 2 and 3), for a detailed and insightful exploration on the relationship between the order  (history-originated) and graph (tree-originated) formulations of extensive game structures.} The formulation and terminologies follow Battigalli et al. \cite{blm20}.

\smallskip

Fix an arbitrary non-empty finite set $X$. A \emph{partition} $\mathbf{P}$ of $X$ is a set of subsets of $X$ satisfying (i) $P \neq \emptyset$ for each $P \in \mathbf{P}$, (ii) for each $P, Q \in \mathbf{P}, P \cap Q \neq \emptyset$ implies $P = Q$, and (iii) $\bigcup_{P \in \mathbf{P}}P=X$. We use $X^{*}$ to denote the set of finite sequences with each term in $X$. For sequences $x,y \in X^{*}$, $x$ is called a \emph{prefix} of $y$, denoted by $x \preceq y$,  iff $x =\emptyset$ or $x=x_{1}...x_{m}$, $y=y_{1}...y_{n}$, $m \leq n$, and $x_{i} = y_{i}$ for $i = 1,...,m$. The asymmetric part of $\preceq$ is denoted by $\prec$, i.e., $x \prec y$ iff $x \preceq y$ and $y \npreceq x$. We call $x$ an \emph{immediate predecessor} of $y$ or $y$ is an \emph{immediate successor} of $x$ iff $x \prec y$ and there is no $z$ such that $x \prec z \prec y$.

\smallskip

An \emph{extensive game structure} (abbreviated as \emph{structure}) is a tuple $\langle I, \bar{H},(A_{i},\mathbf{H}_{i})_{i \in I} \rangle$, where
\begin{itemize}
\item $I \neq \emptyset$ is the set of players.

\item For each $i \in I$, $A_{i}$ is the set of potentially feasible actions of player $i$. We define $A=\bigcup_{\emptyset \neq J \subseteq I} \bigl( \prod_{i \in J} A_{i} \bigr)$.

\item $\bar{H}$ is a finite subset of $A^{*}$, called the set of \emph{histories}. $(\bar{H},\preceq)$ is a finite tree, i.e., $\bar{H}$ contains the null history $\emptyset$  (called the \emph{root}) and $x \in \bar{H}$ implies that all prefixes of $x$ is in $\bar{H}$. A history $z$ is a \emph{terminal} iff $za \notin \bar{H}$ for all $a \in A$. We use $Z$ to denote the set of all terminals. Each $h \in H:= \bar{H} \setminus Z$ is called a \emph{non-terminal} history.

$H$ satisfies the following condition: for each non-terminal history $h \in H$, there is some $I(h) \subseteq I$ such that for each $a,a' \in A$ with $ha, ha' \in \bar{H}$, $a, a' \in \prod_{i \in I(h)} A_{i}$. The correspondence $I(\cdot): H \twoheadrightarrow I$ is called the \emph{active player correspondence}. 

For each $i \in I$, we define $H_{i}=\{h \in H: i \in I(h)\}$, i.e., the set of histories where player $i$ is active. We assume that $H_{i} \neq \emptyset$ for each $i \in I$, i.e., there is no idle player.

\item For each $i \in I$, $\mathbf{H}_{i}$ is a partition of $H_{i}$, called the \emph{information partition} of player $i$. It is an equivalence class describing  player $i$'s knowledge about what have occurred whenever it is her turn to move.

For each non-terminal history $h \in H$ and $i \in I(h)$,  we define $F_{i}(h)=$ Proj$_{A_{i}}\big\{a \in \prod_{i \in I(h)} A_{i}: ha \in \bar{H} \big\}$, i.e., the set of feasible actions for active player $i$ at $h$. We need the following assumptions:

\begin{itemize}
\item For each $h \in H$ and $i \in I(h)$, $|F_{i}(h)|\geq 2$. The motivation is that a player has no choice with only one feasible action and cannot be called active.\footnote{A tree satisfying this condition is called a \emph{decision tree} in the literature. It plays an important role in the general extensive game structures. See Al\'{o}s-Ferrer and Ritzberger \cite{ar05}, \cite{ar16} for a detailed discussion.}

\item For each $a \in A$ and $h \in H$, $ha \in \bar{H}$ if and only if $a \in \prod_{i \in I(h)}F_{i}(h)$. In words, at each history, every active player's decision-making is independent.

\item For each player $i \in I$, $F_{i}$ is $\mathbf{H}_{i}$-measurable, i.e., for each $\mathbf{h}_{i} \in \mathbf{H}_{i}$ and each $h,h^{\prime} \in \mathbf{h}_{i}$, $F_{i}(h)=F_{i}(h^{\prime})$. This classical requirements means that a player cannot differentiate histories in an information set by the available actions. Based on it, for each $\mathbf{h}_{i} \in \mathbf{H}_{i}$, we define $F_{i}(\mathbf{h}_{i})=F_{i}(h_{i})$ for some $h_{i} \in \mathbf{h}_{i}$. 

\item For each player $i$, $\mathbf{h}_{i}, \mathbf{h}_{i}^{\prime} \in \mathbf{H}_{i}$, $\mathbf{h}_{i} \neq \mathbf{h}_{i}^{\prime}$ implies $F_{i}(\mathbf{h}_{i}) \cap F_{i}(\mathbf{h}_{i}^{\prime}) = \emptyset$. This assumption is to simplify notations.
\end{itemize}

\end{itemize}

For example, the structure in Figure \ref{fig:RED} (1) is formulated as $I = \{1,2\}$, $\bar{H} = \{\emptyset, A,O,B$, $A(c,E),$ $A(c,F), A(d,E), A(d,F),$ $Oh, Oi, Bh, Bi\}$, $A_{1}=\{A,O,B,E,F\}$, $A_{2} = \{c,d,h,i\}$, $\mathbf{H}_{1} = \{\mathbf{h}_{11}, \mathbf{h}_{12}\}$, $\mathbf{H}_{2} = \{\mathbf{h}_{21}, \mathbf{h}_{22}\}$, where $\mathbf{h}_{11} = \{\emptyset\}$, $\mathbf{h}_{12} = \mathbf{h}_{21}=\{A\}$, and $\mathbf{h}_{22} = \{O, B\}$. 

\smallskip

For each $h \in \bar{H}$, we define $\lfloor h \rfloor = \{g \in \bar{H}: h \preceq g\}$ and  $Z(h) = Z \cap \lfloor h \rfloor$. In words, $\lfloor h \rfloor$ is the set of histories that can be reached from $h$ and $Z(h)$ is the terminals in $\lfloor h \rfloor$. This concept can be generalized to each subset $U \subset \bar{H}$. Namely, $\lfloor U \rfloor := \{g \in \bar{H}: h \preceq g$ for some $h \in U\}$ and $Z(U):=\cup_{h \in U}Z(h)$. For each $\mathbf{h}_{i} \in \mathbf{H}_{i}$ and $a_{i}^{\star} \in F_{i}(\mathbf{h}_{i})$, $\lfloor \mathbf{h}_{i}a_{i}^{\star} \rfloor := \cup_{h \in \mathbf{h}_{i}} \cup_{a_{-i} \in F_{-i}(h)}\lfloor h(a_{i}^{\star}, a_{-i}) \rfloor$ and $Z(\mathbf{h}_{i}a_{i}^{\star}):=Z \cap \lfloor \mathbf{h}_{i}a_{i}^{\star} \rfloor$.

Finally, we assume that a structure has perfect recall. For each $i \in I$ and $h \in H_{i}$, $X_{i} (h): = \{(\mathbf{h}_{i},a_{i}): h \in \lfloor \mathbf{h}_{i}a_{i} \rfloor\}$. In words, $X_{i} (h)$ is the pairs of player $i$'s information sets she encountered \emph{en route} for $h$ and the actions she took there. For example, in Figure \ref{fig:RED} (1), $X_{1}(A) = \{(\mathbf{h}_{11},A)\}$ and $X_{2}(O)$ is empty. A player $i \in I$ has \emph{perfect recall} iff $X_{i}(h) = X_{i}(h^{\prime})$ whenever $h, h^{\prime} \in H_{i}$ are in the same information set of $i$. A structure has perfect recall if each player has perfect recall. If a structure has perfect recall, then no history can go through an information set more than once.\footnote{Our definition follows Osborne and Rubinstein \cite{or94}, p. 203. See Piccione and Rubinstein \cite{pr97}, Al\'{o}s-Ferrer and Ritzberger \cite{ar16} (Chapter 6.4), \cite{ar17}, Battigalli et al. \cite{blm20},  Hillas and Kvasov \cite{hk20}, \cite{hk20b} for a detailed discussion.}

We use $\mathcal{G}$ to denote the set of extensive game structures.

\subsection{Unambiguous orderings of information sets}\label{sec:UO}

Fix an extensive game structure $G = \langle I, \bar{H},(A_{i},\mathbf{H}_{i})_{i \in I} \rangle$. We use $\mathbf{H}$ to denote the set of all information sets. Note that two information sets are distinct elements of $\mathbf{H}$ if and only if they contain different histories \emph{or} they belong to different players. For example, in Figure \ref{fig:RED} (1), $\mathbf{h}_{12}$ and $\mathbf{h}_{21}$, both containing only $A$, are distinct information sets since they belong to different players. In the following, sometimes we omit the subscript and use barely $\mathbf{h}$ as a representative element of $\mathbf{H}$. This omission only means that we focus on the set-theoretical aspect of the information set and the owner of it is inessential in the context there.

A history $h \in H$ is \emph{before} an information set $\mathbf{h} \in \mathbf{H}$ (or $\mathbf{h}$ \emph{follows} $h$), denoted by $h < \mathbf{h}$ (or $\mathbf{h} >h$), iff there is some $h^{\prime} \in \mathbf{h}$ such that $h \prec h^{\prime}$. Similarly, we can define $\mathbf{h} < h$. For two information sets $\mathbf{h}, \mathbf{h}^{\prime} \in \mathbf{H}$, we say that $\mathbf{h}$ is \emph{before} $\mathbf{h}^{\prime}$ or $\mathbf{h}^{\prime}$ \emph{follows} $\mathbf{h}$, denoted by $\mathbf{h} < \mathbf{h}^{\prime}$ or $\mathbf{h}^{\prime} > \mathbf{h}$, iff $h < \mathbf{h}^{\prime}$ for some $h \in \mathbf{h}$; $\mathbf{h}$ is \emph{simultaneous} with $\mathbf{h}^{\prime}$, denoted by $\mathbf{h} \sim \mathbf{h}^{\prime}$, iff $\mathbf{h} \cap \mathbf{h}^{\prime} \neq \emptyset$. We use $\mathbf{h} \lesssim \mathbf{h}^{\prime}$ as an abbreviation of ``$\mathbf{h} < \mathbf{h}^{\prime}$ or  $\mathbf{h} \sim \mathbf{h}^{\prime}$'', called $\mathbf{h}^{\prime}$ \emph{weakly follows} $\mathbf{h}$.

Since $G$ has perfect recall, for each $i \in I$, $\mathbf{H}_{i}$ is partially ordered, i.e., for each $\mathbf{h}_{i},\mathbf{g}_{i} \in \mathbf{H}_{i}$, at most one relation holds in $\mathbf{h}_{i} < \mathbf{g}_{i}$, $\mathbf{h}_{i} = \mathbf{g}_{i}$, and $\mathbf{h}_{i} > \mathbf{g}_{i}$. Yet those relations may not be mutually exclusive in $\mathbf{H}$. In Figure \ref{fig:KSM} (1), $\mathbf{h}_{2} <\mathbf{h}_{3}$, $\mathbf{h}_{2} \sim\mathbf{h}_{3}$, and $\mathbf{h}_{2} >\mathbf{h}_{3}$ all hold. A famous example where two information sets follow each other is the Kuhn-McKinsey-Shapley structure in Figure \ref{fig:KSM} (2) (see Kuhn \cite{ku53}).

\begin{figure}
\centering
  \includegraphics[width=0.8\columnwidth]{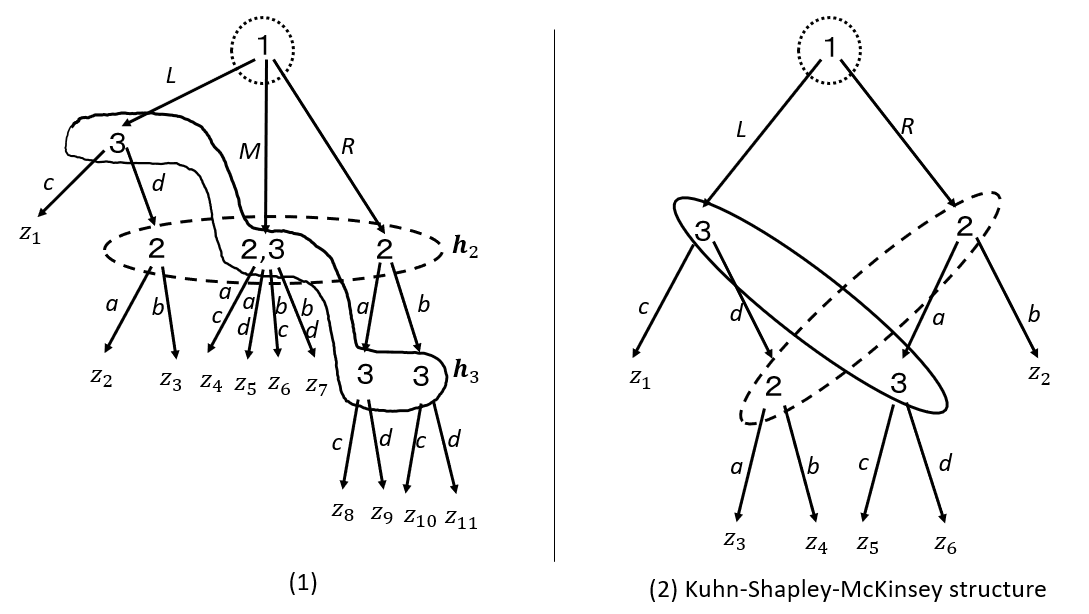}
  \caption{Two structures without unambiguous orderings of information sets}
  \label{fig:KSM}
\end{figure}

We say that $G$ has an \emph{unambiguous ordering} of the information sets iff for each $\mathbf{h}, \mathbf{h}^{\prime} \in \mathbf{H}$, $\mathbf{h} < \mathbf{h}^{\prime}$ implies $\mathbf{h} \ngtr \mathbf{h}^{\prime}$. In words, if $\mathbf{h}^{\prime}$ follows $\mathbf{h}$, then it cannot be followed by $\mathbf{h}$. We use UO as an abbreviation of the property. UO prohibits entangling the future with the past. The two structures in Figure \ref{fig:KSM} violate UO. Note that UO allows to mix up the present and the future (or the past). We use $\mathcal{G}_{UO}$ to denote the set of all extensive game structures satisfying UO.

A stronger notion is von Neumann structure. The the \emph{length} of each history $h \in H$, denoted by $\ell(h)$,  is defined inductively on $H$: $\ell(\emptyset)=0$ and $\ell(ha)=\ell(h)+1$ iff $h \in H$, $a \in \prod_{i \in I(h)}F_{i}(h)$, and $ha \in \bar{H}$. An information set $\mathbf{h}$ has the \emph{equal-length property} (EL) iff for each $h,h^{\prime} \in  \mathbf{h}$, $\ell(h)=\ell(h^{\prime})$. $G$ is called a \emph{von Neumann structure (vNM)} iff every information set of it satisfies EL.\footnote{This condition was required in von Neumann and Mogernstein \cite{vnm} (pp.60-66)'s formulation of games and was dropped in Kuhn \cite{ku53}.} Two information sets in a vNM have at most one relation in $<, \sim$, or $>$. Hence, a vNM satisfies UO.

 Bonanno \cite{bo14} shows that a structure satisfying UO can be transformed into a vNM by adding fictitious players (``the clocks'') having singleton information sets and only one action that can ``delay'' moves. However, this may cause problems in our context. A clock player has only one action, while we require that every active player at each history should have at least two actions. Adding a ``copy'' of the original action for the clock player leads to equalization of several terminals, which is not allowed in the framework of game structures here. Concerning those problems, we separate vNMs from structures satisfying UO.

\subsection{Behavioral equivalence}

For each $i \in I$, we define $\mathbf{H}_{i}^{o}$ to be the set of minimal information sets in $\mathbf{H}_{i}$ with respect to $<$. For example, in Figure \ref{fig:RED} (1), $\mathbf{H}_{1}^{o} = \{\mathbf{h}_{11}\}$ and $\mathbf{H}_{2}^{o} = \{\mathbf{h}_{21}, \mathbf{h}_{22}\}$. Since $G$ is finite and $\mathbf{H}_{i}$ is partially ordered, $\mathbf{H}_{i}^{o}$ is non-empty. Since $G$ has perfect recall, each $\mathbf{h}_{i} \in \mathbf{H}_{i} \setminus \mathbf{H}_{i}^{o}$ has a unique immediate predecessor $\mathbf{g}_{i}$ in $\mathbf{H}_{i}$ and a unique action in $F_{i}(\mathbf{g}_{i})$ leading to $\mathbf{h}_{i}$, denoted by $A_{i}(\mathbf{g}_{i}, \mathbf{h}_{i})$. For example, in Figure \ref{fig:RED} (1), $A_{1} (\mathbf{h}_{11},\mathbf{h}_{12}) = A$.

A \emph{strategy} of player $i \in I$ is a partial function $s_{i}: \mathbf{H}_{i} \hookrightarrow A_{i}$ satisfying the following conditions:\footnote{Recall that a partial function $f: X \hookrightarrow Y$ is a function whose domain is a subset of $X$.}
\begin{enumerate}[label=(\roman*)]
\item $s_{i}(\mathbf{h}_{i}) \in F_{i}(\mathbf{h}_{i})$ for each $\mathbf{h}_{i} $ in the domain of $s_{i}$,

\item $s_{i}$ is defined for each $\mathbf{h}_{i}^{o} \in \mathbf{H}_{i}^{o}$,

\item for each $\mathbf{h}_{i} \in \mathbf{H}_{i} \setminus \mathbf{H}_{i}^{o}$, $s_{i}$ is defined for $\mathbf{h}_{i}$ if and only if $s_{i}$ is defined for $\mathbf{h}_{i}$'s immediate predecessor $\mathbf{g}_{i}$ in $\mathbf{H}_{i}$ and $s_{i}(\mathbf{g}_{i}) = A_{i}(\mathbf{g}_{i},\mathbf{h}_{i})$. 
\end{enumerate}

The set of player $i$'s strategies is denoted by $\mathcal{S}_{i}$. We define $\mathcal{S} = \prod_{i \in I}$ as the set of \emph{strategy profiles}.  Each strategy profile uniquely determines a terminal history and we use $\zeta : \mathcal{S} \rightarrow Z$ to denote the mapping. The \emph{Z-reduced normal form} of $G$ is a tuple rn$_{Z}(G):=\langle I, (\mathcal{S}_{i})_{i \in I}, Z, \zeta \rangle$. Two structures $G,G^{\prime}$ are \emph{behaviorally equivalent} iff they have the same Z-reduced normal form up to isomorphisms. For example, the two structures in Figure \ref{fig:RED} are behaviorally equivalent.

\smallskip
In the classical definition, a strategy specifies an action for \emph{each} information set of the player even though some information sets would never be reached due to an action assigned before. Then, two strategies are called \emph{behaviorally equivalent} if they generate the same consequence against any strategy profiles of other players. Our definition packages the two steps. It corresponds to Rubinstein \cite{ru91}'s ``plan of action'' and is used in Perea \cite{pe12}. The term $Z$-reduced normal form is adopted from Battigalli  et al.\cite{blm20}; $Z$ here emphasizes the reduction is based on terminals instead of payoffs.

\begin{figure}
\centering
  \includegraphics[width=0.8\linewidth]{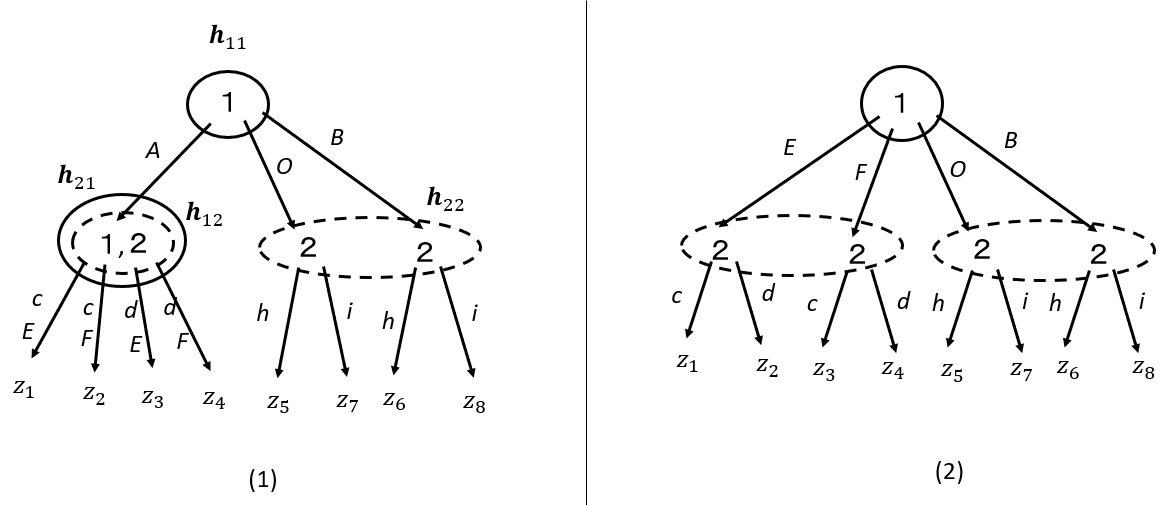}
  \caption{Two behaviorally equivalent game structures}
  \label{fig:RED}
\end{figure}

\section{Characterizing Unambiguous Orderings by Transformations}\label{sec:tinf}

In this section, we first introduce two classical transformations on extensive game structures, Coalescing and Interchanging/Simultanizing (IS). The latter is Battigalli et al \cite{blm20}'s adaption of Thompson \cite{to52}'s interchange of moves suiting simultaneous moves. We then define a modification of IS which preserves UO. Since those transformations make a structure more compact, we also call them \emph{compactifications}. Finally, we characterize behavioral equivalence with UO in terms of transformations.

Fix an extensive game structure $G =\langle I, \bar{H},(A_{i},\mathbf{H}_{i})_{i \in I} \rangle$. Consider a player $i \in I$ and her information set $\mathbf{h}_{i}$.  For $\mathbf{h}_{i}^{\prime} \in \mathbf{H}_{i}$, we say that $\mathbf{h}_{i}^{\prime}$ \emph{controls} $\mathbf{h}_{i}$, denoted by $\mathbf{h}_{i} \ll_{i} \mathbf{h}_{i}^{\prime}$, iff for some $a_{i}^{\ast} \in F_{i}(\mathbf{h}_{i})$, $Z(\mathbf{h}_{i}a_{i}^{\ast}) = Z(\mathbf{h}_{i}^{\prime})$. It can be seen that given $a_{i}^{\ast} \in F_{i}(\mathbf{h}_{i})$, there is at most one $\mathbf{h}_{i}^{\prime}$ controlling $\mathbf{h}_{i}$ and, in this case, $\mathbf{h}_{i}^{\prime}$ is the immediate successor of $\mathbf{h}_{i}$ in $\mathbf{H}_{i}$ with $A_{i}(\mathbf{h}_{i}$, $\mathbf{h}_{i}^{\prime}) = a_{i}^{\ast}$. For a non-terminal history $h \in H$ with $i \notin I(h)$ and some $\mathbf{d}_{i} \subseteq\mathbf{h}_{i}$, we say that $\mathbf{d}_{i}$ \emph{dictates} $h$, denoted by $h \lessdot_{i} \mathbf{d}_{i}$, iff $Z(h)=Z(\mathbf{d}_{i})$. Note that given $i \notin I(h)$, there is at most one $\mathbf{d}_{i}$ such that $h \lessdot_{i} \mathbf{d}_{i}$.\footnote{Dictation is called \emph{domination} in Battigalli et al \cite{blm20}. Here we use dictation because we want to reserve the term ``domination'' exclusively for the domination between strategies.}

\subsection{Coalescing}
Consider  $\mathbf{h}_{i}, \mathbf{h}_{i}^{\prime} \in \mathbf{H}_{i}$ such that $\mathbf{h}_{i} \ll_{i} \mathbf{h}_{i}^{\prime}$ and $A_{i}(\mathbf{h}_{i},\mathbf{h}_{i}^{\prime})=a_{i}^{\star}$. We say that $G$ has a \emph{Coalescing opportunity} $(\mathbf{h}_{i},\mathbf{h}_{i}^{\prime})$ and define $\gamma (G; \mathbf{h}_{i},\mathbf{h}_{i}^{\prime}) =\langle \tilde{I}, \tilde{\bar{H}},(\tilde{A}_{i},\tilde{\mathbf{H}}_{i})_{i \in \tilde{I}} \rangle$ as follows:

\begin{itemize}
\item $\tilde{I} = I$ and $\tilde{A}_{j}=A_{j}$ for each $j \in \tilde{I}$. 

\item $\tilde{\bar{H}}$ coincides with $\bar{H}$ at histories before $\mathbf{h}_{i}$ or unrelated to  $\mathbf{h}_{i}$ or $\mathbf{h}_{i}^{\prime}$. Formally, for each $g \in \bar{H}$, $g \in \tilde{\bar{H}}$ if and only if $g$ satisfies one condition in the following: \textbf{(i)} $g < \mathbf{h}_{i}$ or $g \in \mathbf{h}_{i}$, 
 \textbf{(ii)} $g$ and $\mathbf{h}_{i}$ are at different branches, i.e., for each $h \in  \mathbf{h}_{i}$ and its maximal common prefix $f$ with $g$, $f \prec h$ and $f \prec g$; \textbf{(iii)} $g$ follows $\mathbf{h}_{i}$ but $g$ and $\mathbf{h}_{i}^{\prime}$ are at different branches, i.e., for each  $h^{\prime} \in \mathbf{h}_{i}^{\prime}$, its maximal common prefix with $g$ is strictly before both $h^{\prime}$ and $g$.

For each $g$ with $g \succeq h(a_{i}^{\star}, a_{-i})$ for some $h \in  \mathbf{h}_{i}$ and $a_{-i} \in F_{-i}(h)$, $g$ need to be replaced in $\tilde{\bar{H}}$ due to the shifting-up of $\mathbf{h}_{i}^{\prime}$ to $\mathbf{h}_{i}$. Let $F_{i}(\mathbf{h}_{i}^{\prime})=\{c_{i1},...,c_{ik}\}$.  There are two cases:

\begin{itemize}
\item $g$ is between $\mathbf{h}_{i}$ and $\mathbf{h}_{i}^{\prime}$, i.e.,  $g = ha_{i}^{\star}a_{-i}b^{1}...b^{p}$  before or contained in $\mathbf{h}_{i}^{\prime}$. Then $g$ is replaced by $k$ replicas $\tilde{g}_{1}, ..., \tilde{g}_{k}$ in $\tilde{H}$: for each $t=1,...,k$, $\tilde{g}_{t} = h(c_{it}, a_{-i})b^{1}...b^{p}$;

\item $g>\mathbf{h}_{i}^{\prime}$, i.e., $g = h(a_{i}^{\star}, a_{-i})b^{1}...b^{p}(c_{it}, c_{-i})d^{1}...d^{q}$ for some $t \in \{1,...,k\}, p,q \in \mathbb{N}_{0}:=\{0,1,2,...\}$. Then $g$ is replaced in $\tilde{\bar{H}}$ by $\tilde{g} = h(c_{it}, a_{-i})b^{1}...b^{p}c_{-i}d^{1}...d^{q}$. Note that if at the prefix of $g$ in $\mathbf{h}_{i}^{\prime}$ only player $i$ is active, we directly connect the components before and after it, i.e., $\tilde{g} = h(c_{it}, a_{-i})b^{1}...b^{p}d^{1}...d^{q}$; especially, if $g$ is an immediate successor of some histories in $\mathbf{h}_{i}^{\prime}$, then $\tilde{g}$ coincides with some replica of its immediate predecessor.

\end{itemize}

\item Each $\tilde{\mathbf{H}}_{j}$ is modified accordingly. $\mathbf{h}_{i}^{\prime}$ disappears in $\tilde{\mathbf{H}}_{i}$. If an information set $\mathbf{g}_{j}$ contains some $g$ between $\mathbf{h}_{i}$ and $\mathbf{h}_{i}^{\prime}$, in $\tilde{\mathbf{H}}_{j}$ it should replace $g$ by $\tilde{g}_{1}, ..., \tilde{g}_{k}$; if $\mathbf{g}_{j}$ contains some $g$ following $\mathbf{h}_{i}^{\prime}$, then in $\tilde{\mathbf{H}}_{j}$ it should replace $g$ by $\tilde{g}$. Here, since all information sets except $\mathbf{h}_{i}^{\prime}$ still exist in $\gamma (G; \mathbf{h}_{i},\mathbf{h}_{i}^{\prime})$, we use the same symbol for each information set before and after the transformation when no confusion is caused.

\end{itemize}
For example, $(\mathbf{h}_{21},\mathbf{h}_{22})$ is a Coalescing opportunity in the structure in Figure \ref{fig:Coa1} (1). After coalescing $\mathbf{h}_{22}$ with $\mathbf{h}_{21}$,  we obtain $\gamma(G;\mathbf{h}_{21},\mathbf{h}_{22})$ in (2). Note that each history between $\mathbf{h}_{21}$ and $\mathbf{h}_{22}$, for example, $Rb$, is replaced by two replicas, $Rp$ and $Rq$. Histories following $\mathbf{h}_{22}$ replace $b$ by the corresponding actions in $F_{2}(\mathbf{h}_{22})$. For example, $Rb(q,H)v$ in $G$ is replaced by $RqHv$.

\begin{figure}
\centering
  \includegraphics[width=0.8\columnwidth]{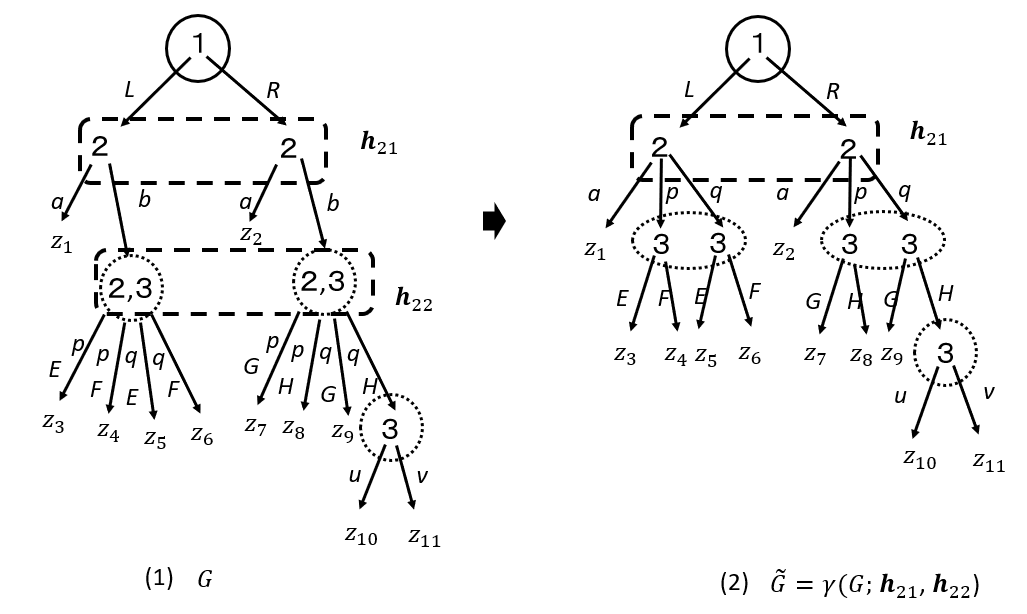}
  \caption{An example of Coalescing}
  \label{fig:Coa1}
\end{figure}

\smallskip
There is a natural correspondence, denoted by $\gamma: \bar{H} \twoheadrightarrow \tilde{\bar{H}}$, which indicates the change of each history through the transformation: for each $h \in \bar{H}$, if $h$ is before or unrelated to $\mathbf{h}_{i}$ or $\mathbf{h}_{i}^{\prime}$, $\gamma (h) = \{h\}$; if $h$  is between $\mathbf{h}_{i}$ and $\mathbf{h}_{i}^{\prime}$, then $\gamma (h) = \{\tilde{h}_{1},...,\tilde{h}_{k}\}$; if $h > \mathbf{h}_{i}^{\prime}$,  $\gamma (h) = \{\tilde{h}\}$. It can be seen that $|\gamma (h) | = |F_{i}(\mathbf{h}_{i}^{\prime})|$ if $h$ is between $\mathbf{h}_{i}$ and $\mathbf{h}_{i}^{\prime}$, and
$|\gamma (h) | = 1$ otherwise. In the following, when $\gamma (h)$ is a singleton, we sometimes do not differentiate $\gamma (h)$ from the unique history in it. For $h \in \mathbf{h}_{i}^{\prime}$, we define $\gamma (h;\mathbf{h}_{i}^{\prime})= \{g: g \in \mathbf{h}_{i} \text{ and } g \prec h\}$ to emphasize that \emph{for player} $i$, the history $h$ is now shifted up to its prefix in $\mathbf{h}_{i}$.

There is also a natural surjection $\gamma: \mathbf{H}\rightarrow \tilde{\mathbf{H}}$ such that $\gamma (\mathbf{h}_{i}^{\prime}) = \mathbf{h}_{i}$ and $\gamma (\mathbf{g}_{j}) = \mathbf{g}_{j}$ for every other information set $\mathbf{g}_{j}$.

Battigalli et al. \cite{blm20} show that a Coalescing preserves behavioral equivalence. The following statement shows that it also preserves UO.

\begin{lemma}
\textbf{(A Coalescing preserves UO)}. Let $(\mathbf{h}_{i},\mathbf{h}^{\prime}_{i})$ be a Coalescing opportunity of $G$. If $G$ satisfies UO, so does $\gamma (G; \mathbf{h}_{i},\mathbf{h}^{\prime}_{i})$.
\end{lemma}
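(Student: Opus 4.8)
The plan is to argue by contraposition at the level of information sets, showing that a Coalescing creates no new \emph{before}-relation among the information sets that survive it. Recall that $\gamma$ acts on $\mathbf{H}$ as a surjection that collapses $\mathbf{h}_{i}^{\prime}$ onto $\mathbf{h}_{i}$ and fixes every other information set, so its restriction to $\mathbf{H} \setminus \{\mathbf{h}_{i}^{\prime}\}$ is a bijection onto $\tilde{\mathbf{H}}$, where $\tilde{G} := \gamma(G;\mathbf{h}_{i},\mathbf{h}_{i}^{\prime})$. Moreover, by clause (i) in the construction of $\tilde{\bar{H}}$, every history before or in $\mathbf{h}_{i}$ is left untouched; hence the merged information set $\gamma(\mathbf{h}_{i})$ in $\tilde{G}$ has exactly the same member histories as $\mathbf{h}_{i}$ in $G$. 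Suppose, for contradiction, that $\tilde{G}$ violates UO, so that there are $\tilde{\mathbf{a}}, \tilde{\mathbf{b}} \in \tilde{\mathbf{H}}$ with $\tilde{\mathbf{a}} < \tilde{\mathbf{b}}$ and $\tilde{\mathbf{b}} < \tilde{\mathbf{a}}$. Writing $\tilde{\mathbf{a}} = \gamma(\mathbf{a})$ and $\tilde{\mathbf{b}} = \gamma(\mathbf{b})$ for the unique $\mathbf{a}, \mathbf{b} \in \mathbf{H} \setminus \{\mathbf{h}_{i}^{\prime}\}$, it suffices to prove the following reflection property: for all $\mathbf{a},\mathbf{b} \in \mathbf{H} \setminus \{\mathbf{h}_{i}^{\prime}\}$, if $\gamma(\mathbf{a}) < \gamma(\mathbf{b})$ in $\tilde{G}$ then $\mathbf{a} < \mathbf{b}$ in $G$. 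Applying it to both relations yields $\mathbf{a} < \mathbf{b}$ and $\mathbf{b} < \mathbf{a}$ in $G$ (with $\mathbf{a} \neq \mathbf{b}$, since perfect recall forbids a history from meeting an information set twice), contradicting UO of $G$.

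To establish the reflection property I would build an \emph{un-coalescing} map $\phi : \tilde{\bar{H}} \to \bar{H}$ that reverses the construction history by history: on histories before, in, or unrelated to $\mathbf{h}_{i}$ it is the identity; a middle replica $\tilde{g}_{t} = h(c_{it},a_{-i})b^{1}...b^{p}$ is sent to the original $h(a_{i}^{\star},a_{-i})b^{1}...b^{p}$; and a relabeled history $\tilde{g} = h(c_{it},a_{-i})b^{1}...b^{p}c_{-i}d^{1}...d^{q}$ below the old position of $\mathbf{h}_{i}^{\prime}$ is sent to $h(a_{i}^{\star},a_{-i})b^{1}...b^{p}(c_{it},c_{-i})d^{1}...d^{q}$, reinserting the suppressed move at $\mathbf{h}_{i}^{\prime}$. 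By inspection $\phi$ carries each member of $\gamma(\mathbf{a})$ into $\mathbf{a}$ for every $\mathbf{a} \neq \mathbf{h}_{i}^{\prime}$ (using the previous paragraph for $\mathbf{a} = \mathbf{h}_{i}$). The decisive point is that $\phi$ preserves the strict prefix order: if $\tilde{g} \prec \tilde{g}^{\prime}$ in $\tilde{\bar{H}}$ then $\phi(\tilde{g}) \prec \phi(\tilde{g}^{\prime})$ in $\bar{H}$. Granting this, from $\gamma(\mathbf{a}) < \gamma(\mathbf{b})$ I pick witnesses $\tilde{g}_{a} \in \gamma(\mathbf{a})$ and $\tilde{g}_{b} \in \gamma(\mathbf{b})$ with $\tilde{g}_{a} \prec \tilde{g}_{b}$; then $\phi(\tilde{g}_{a}) \prec \phi(\tilde{g}_{b})$ with $\phi(\tilde{g}_{a}) \in \mathbf{a}$ and $\phi(\tilde{g}_{b}) \in \mathbf{b}$, which is exactly $\mathbf{a} < \mathbf{b}$.

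I expect the monotonicity of $\phi$ to be the only delicate step, and the decisive obstacle, because two comparable histories may straddle the boundary between the ``middle'' region (between $\mathbf{h}_{i}$ and $\mathbf{h}_{i}^{\prime}$) and the region below $\mathbf{h}_{i}^{\prime}$, where $\phi$ reinserts the move at $\mathbf{h}_{i}^{\prime}$ and thereby increases length; one must check that this reinsertion turns a prefix into a prefix rather than breaking comparability. Two further routine points to dispatch are that $\phi$ identifies only pairwise incomparable histories (distinct replicas lying on different $c_{it}$-branches), so it stays injective along any chain and keeps the order \emph{strict}, and that the whole relabeling respects the assignment of histories to information sets recorded in the construction of $\tilde{\mathbf{H}}_{j}$. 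Conceptually, the lemma holds because a Coalescing only shifts $\mathbf{h}_{i}^{\prime}$ \emph{upward} onto $\mathbf{h}_{i}$ while leaving the temporal position of every other information set intact; moving a set earlier can destroy \emph{before}-relations but never manufacture the new one that would be needed to close a forbidden cycle.
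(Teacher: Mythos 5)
Your proposal is correct and rests on exactly the fact the paper's proof invokes: a Coalescing leaves the relative positions of all surviving information sets unchanged, so no forbidden cycle can be created. The paper simply asserts this in one sentence, whereas you verify it rigorously via the un-coalescing map $\phi$ and the reflection property; this is a fleshed-out version of the same argument, not a different route.
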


\begin{proof}

Assume that $G$ satisfies UO and let $\tilde{G} = \gamma (G; \mathbf{h}_{i},\mathbf{h}^{\prime}_{i})$.  Since the relative positions between two information sets (except with $\mathbf{h}_{i}^{\prime}$) does not change, $\tilde{G}$ satisfies UO. 
\end{proof}

The inverse of $\gamma$ may break UO. For example, the structure in Figure \ref{fig:Coaib} (1) does not satisfy UO since both $\mathbf{h}_{22}>\mathbf{h}_{3}$ and $\mathbf{h}_{22}<\mathbf{h}_{3}$ hold. However, after coalescing $\mathbf{h}_{22}$ with $\mathbf{h}_{21}$, in Figure \ref{fig:Coaib} (2), UO is restored. We restrict the range of $\gamma^{-1}$ to the structures satisfying UO. Then the converse of Lemma 1 holds by definition.
\begin{figure}
  \includegraphics[width=\linewidth]{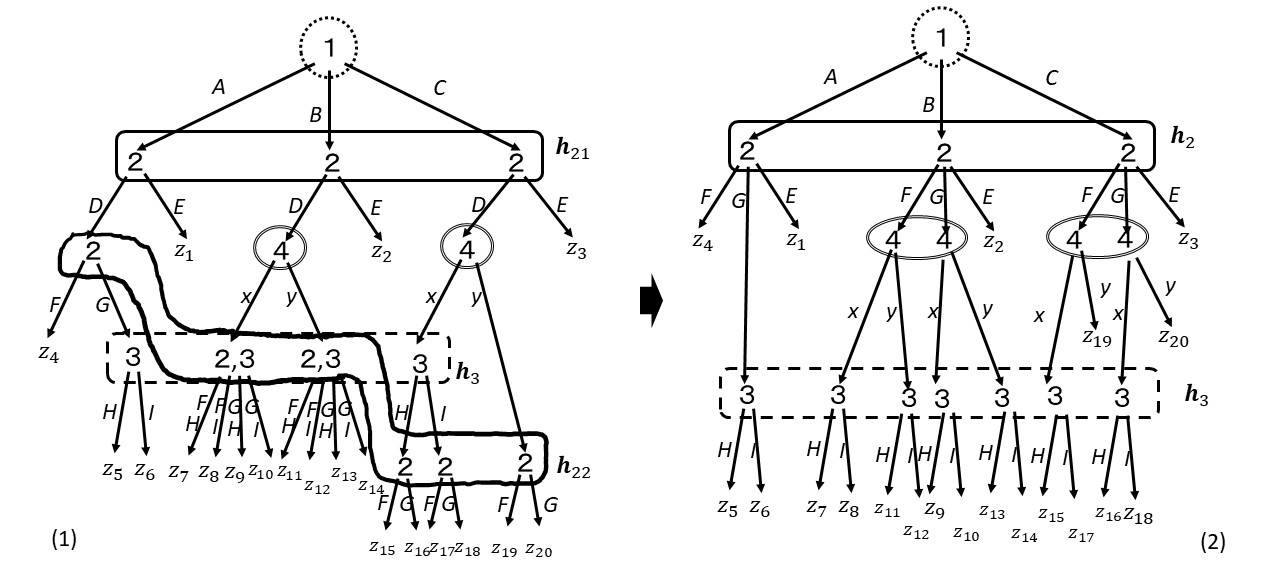}
  \caption{The inverse of Coalescing may break UO}
  \label{fig:Coaib}
\end{figure}

\subsection{Interchange/Simultanizing}

Consider $i \in I$, $h \in H$ with $i \notin I(h)$ and $\mathbf{d}_{i} \subseteq \mathbf{h}_{i}\in \mathbf{H}_{i}$ such that $h \lessdot_{i} \mathbf{d}_{i}$. We say that $G$ has an \emph{Interchange/Simultanizing (IS) opportunity} $(h,\mathbf{d}_{i})$ and define $\sigma (G; h,\mathbf{d}_{i}) =\langle \tilde{I}, \tilde{\bar{H}},(\tilde{A}_{i},\tilde{\mathbf{H}}_{i})_{i \in \tilde{I}} \rangle$ as follows:

\begin{itemize}
\item $\tilde{I} = I$ and $\tilde{A}_{j}=A_{j}$ for each $j \in I$.

\item $\tilde{\bar{H}}$ coincides with $\bar{H}$ at histories unrelated to or before $h$, the same as cases (i) and (ii) in the definition of $\bar{H}$ in Coalescing. Every history following $h$ need to be replaced in $\tilde{\bar{H}}$ due to the shifting-up of $\mathbf{d}_{i}$ to $h$. Suppose that $F_{i}(\mathbf{h}_{i}) (=F_{i}(\mathbf{d}_{i}))=\{c_{i1},...,c_{ik}\}$. For each $g \succ h$, we discuss two cases:

\begin{itemize}
\item $h \prec g \preceq h^{\prime}$ for some $h^{\prime} \in \mathbf{d}_{i}$, i.e.,  $g = ha^{1}...a^{p}$ where $a^{1} \in \prod_{j \in I(h)}F_{j}(h)$. Then $g$ is replaced by  $k$ replicas $\tilde{g}_{1}, ..., \tilde{g}_{k}$ in $\tilde{\bar{H}}$: for each $t=1,...,k$, $\tilde{g}_{t} = h(c_{it}, a^{1})a^{2}...a^{p}$;

\item $g > \mathbf{d}_{i}$, i.e., $g = ha^{1}...a^{p} (c_{it}, c_{-i})d^{1}...d^{q}$, $t \in \{1,...,k\}, p,q \in \mathbb{N}_{0}$. Then $\tilde{g} = h(c_{it}, a^{1})a^{2}...$ $a^{p}c_{-i}d^{1}...d^{q}$. Note that if at the prefix of $g$ in $\mathbf{d}_{i}$ only player $i$ is active, we directly connect the components before and after it, i.e., $\tilde{g} = h^{\prime}(c_{it}, a_{-i})b^{1}...b^{p}d^{1}...d^{q}$; especially, in this case if $g$ is an immediate successor of a history in $\mathbf{d}_{i}$, then $\tilde{g}$ coincides with some replica of its immediate predecessor.

\end{itemize}

\item Each $\mathbf{g}_{j}$ with $\mathbf{g}_{j} \neq \mathbf{h}_{i}$ has to replace each $g$ between $h$ and $\mathbf{d}_{i}$ by its replicas $\tilde{g}_{1},...,\tilde{g}_{k}$ and each $g$ following $\mathbf{h}_{i}$ by $\tilde{g}$. $\mathbf{h}_{i}$ has to replace histories in $\mathbf{d}_{i}$ by $h$.

\end{itemize}

\begin{figure}
  \includegraphics[width=\linewidth]{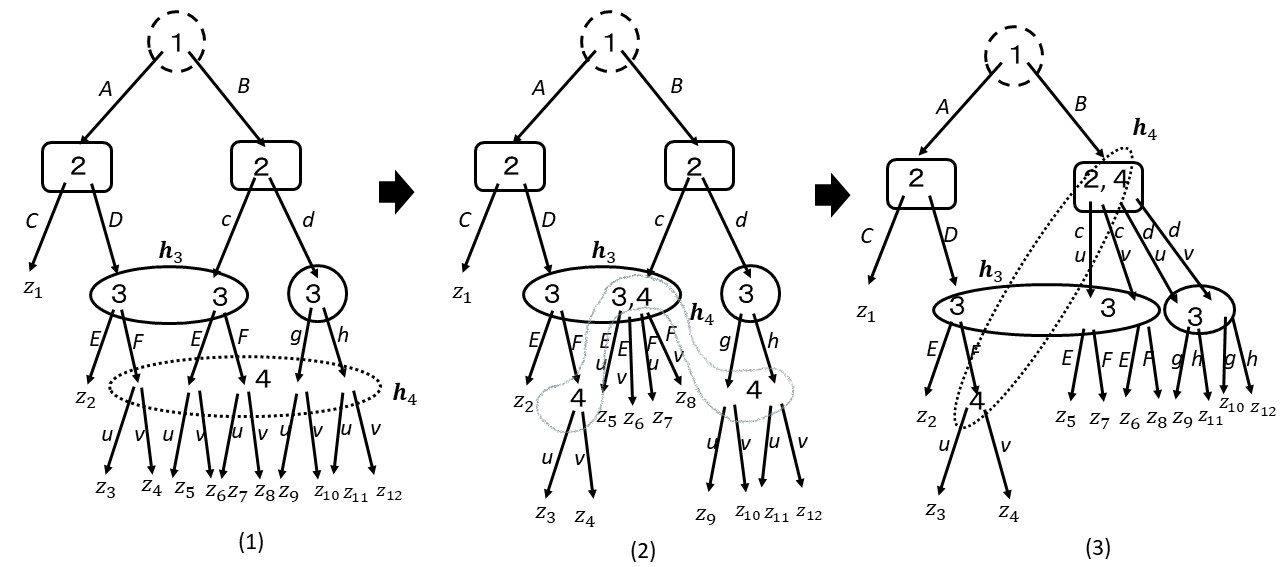}
  \caption{IS opportunities that preserves or breaks UO}
  \label{fig:ISe}
\end{figure}

Note that in $\sigma(G;h,\mathbf{d}_{i})$,  $\tilde{I}(h)=I(h)\cup \{i\}$. Also, for each $h^{\prime} \in \mathbf{d}_{i}$, if $|I(h^{\prime})|>1$, $\tilde{I}(\tilde{h^{\prime}})=I(h^{\prime}) \setminus \{i\}$; if $I(h) = \{i\}$, then $\tilde{h^{\prime}}$ coincides a with a replica of its immediate successor in $G$, and  $\tilde{I}$ and $\tilde{F}_{i}$ alter accordingly.

\smallskip
Similar to Coalescing, there is a natural correspondence, denoted by $\sigma: \bar{H} \twoheadrightarrow \tilde{\bar{H}}$, which indicates the change of each history through the transformation: for each $g \in \bar{H}$, if $g$ is before or unrelated to $h$, $\sigma (g) = \{g\}$; if $g$  is between $h$ and $\mathbf{h}_{i}$, then $\sigma (g) = \{\tilde{g}_{1},...,\tilde{g}_{k}\}$; if $g > \mathbf{h}_{i}$,  $\sigma (g) = \{\tilde{g}\}$. Also, $|\sigma (g) | = |F_{i}(\mathbf{h}_{i})|$ if $g$ is between $h$ and $\mathbf{h}_{i}$, and
$|\gamma (g) | = 1$ otherwise. In the following, when $\sigma (h)$ is a singleton, we sometimes do not differentiate $\sigma (h)$ from its element. For $g \in \mathbf{d}_{i}$, we define $\sigma (g;\mathbf{h}_{i})= \{h\}$ to emphasize that \emph{for player} $i$, histories in $\mathbf{d}_{i}$ is now synchronized with $h$.

Since no information set perishes in an IS operation, there is a bijection between $\mathbf{H}$ and  $\tilde{\mathbf{H}}$ which is denoted by $\sigma$, i.e., $\sigma (\mathbf{g}_{j}) = \mathbf{g}_{j}$ for each $\mathbf{g}_{j} \in \mathbf{H}_{j}$, $j \in I$.
\smallskip

For example, $(Bc, \{BcE, BcF\})$ is an IS opportunity in the structure in Figure \ref{fig:ISe} (1). By applying the transformation, we obtain the structure in (2) where $(B, \{Bc, Bdg,$ $Bdh\})$ is an IS opportunity. Yet the transformation on it  breaks UO and we obtain the structure in (3). The problem here is caused by $\mathbf{h}_{4}$'s partial crossing of $\mathbf{h}_{3}$. To preserve UO, we need to exclude those IS opportunities.

Formally, an IS opportunity $(h,\mathbf{d}_{i})$ with $\mathbf{d}_{i} \subseteq \mathbf{h}_{i}$ is called \emph{non-crossing} iff there is no $\mathbf{h}$ satisfying the following two conditions: (1) $\mathbf{h} < g$ or $g \in \mathbf{h}$ for some $g \in \mathbf{d}_{i}$, and (2) $\mathbf{h} > g$ for some $g \in \mathbf{h}_{i} \setminus \mathbf{d}_{i}$. In words, a non-crossing IS opportunity does not allow $\mathbf{d}_{i}$ to go over $\mathbf{h}$ while another part of $\mathbf{h}_{i}$ stays following $\mathbf{h}$. For example, the transformation from Figure \ref{fig:ISe} (2) to (3) is not non-crossing because $\mathbf{h}_{4}$ ``partially crossed'' $\mathbf{h}_{3}$. We use $\sigma_{NC}$ to denote the operator  on a non-crossing IS opportunity.

Battigalli et al. \cite{blm20} show that an IS preserves behavioral equivalence. The following lemma shows that a non-crossing IS preserves UO.

\begin{lemma}
\textbf{(A non-crossing IS preserves UO)}. Let $(h,\mathbf{d}_{i})$ be a non-crossing IS opportunity in $G$. If $G$ satisfies UO, so does $\sigma_{NC} (G; h,\mathbf{d}_{i})$.
\end{lemma}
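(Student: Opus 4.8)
The plan is to argue by contradiction. Assume $G$ satisfies UO, that $(h,\mathbf{d}_{i})$ is a non-crossing IS opportunity, but that $\tilde{G}:=\sigma_{NC}(G;h,\mathbf{d}_{i})$ violates UO, so that some $\tilde{\mathbf{p}},\tilde{\mathbf{q}}\in\tilde{\mathbf{H}}$ satisfy both $\tilde{\mathbf{p}}<\tilde{\mathbf{q}}$ and $\tilde{\mathbf{p}}>\tilde{\mathbf{q}}$. Using the bijection $\sigma:\mathbf{H}\to\tilde{\mathbf{H}}$ induced by the IS operator, I would pull this pair back to $\mathbf{p},\mathbf{q}\in\mathbf{H}$ and show it forces either a violation of UO already in $G$ or a crossing, each contradicting a hypothesis. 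First I would record the structural facts from perfect recall: since $h\lessdot_{i}\mathbf{d}_{i}$ means $Z(h)=Z(\mathbf{d}_{i})$, incomparability of histories inside an information set forces $\mathbf{d}_{i}=\{g\in\mathbf{h}_{i}:g\succ h\}$, while every $g'\in\mathbf{h}_{i}\setminus\mathbf{d}_{i}$ is incomparable to $h$. Consequently only $\mathbf{h}_{i}$ is repositioned: its $\mathbf{d}_{i}$-histories are synchronized with $h$ (so $\sigma(g;\mathbf{h}_{i})=\{h\}$), the off-branch histories $\mathbf{h}_{i}\setminus\mathbf{d}_{i}$ are left untouched, and $\tilde{\mathbf{h}}_{i}=\{h\}\cup(\mathbf{h}_{i}\setminus\mathbf{d}_{i})$ therefore sits at two depths.

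The first key step is a comparison lemma describing how the order of information sets changes under $\sigma$. I would verify that for any two information sets distinct from $\mathbf{h}_{i}$ their mutual relation is unchanged: the replication of the histories strictly between $h$ and $\mathbf{d}_{i}$ merely reparametrizes them by the action of $i$ now taken at $h$, and the hoisting of that action in the histories below $\mathbf{d}_{i}$ preserves every prefix relation not routed through the relocated $i$-move, so no previously incomparable pair of histories outside $\mathbf{h}_{i}$'s orbit becomes comparable. I would then show that the only genuinely new relation created is $\tilde{\mathbf{h}}_{i}<\tilde{\mathbf{m}}$ for those $\mathbf{m}$ possessing a history strictly between $h$ and $\mathbf{d}_{i}$, namely the ``flip'' of $\mathbf{m}<\mathbf{h}_{i}$ caused by $\mathbf{d}_{i}$ being lifted above $\mathbf{m}$: every relation $\tilde{\mathbf{h}}_{i}>\tilde{\mathbf{q}}$, and every relation $\tilde{\mathbf{h}}_{i}<\tilde{\mathbf{q}}$ witnessed through the $\mathbf{h}_{i}\setminus\mathbf{d}_{i}$ part, traces back to a relation already present in $G$.

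Given the comparison lemma, the violating pair cannot consist of two sets distinct from $\mathbf{h}_{i}$ (those relations are inherited, so $G$ would already fail UO), hence it involves $\tilde{\mathbf{h}}_{i}$; say $\tilde{\mathbf{p}}=\tilde{\mathbf{h}}_{i}$. Writing $\tilde{\mathbf{h}}_{i}<\tilde{\mathbf{q}}$ and $\tilde{\mathbf{h}}_{i}>\tilde{\mathbf{q}}$ according to whether each is witnessed through the $h$-part or the $(\mathbf{h}_{i}\setminus\mathbf{d}_{i})$-part of $\tilde{\mathbf{h}}_{i}$, I would run through the four combinations. Three of them pull back, using $h<\mathbf{d}_{i}$ and the incomparability above, to $\mathbf{q}<\mathbf{h}_{i}$ together with $\mathbf{q}>\mathbf{h}_{i}$ in $G$ (or to $\mathbf{q}$ being both before and after the history $h$, hence before and after the information set through $h$); either way UO fails in $G$, a contradiction. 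The surviving combination is the one in which $\tilde{\mathbf{h}}_{i}<\tilde{\mathbf{q}}$ is the new flip, so $\mathbf{q}$ has a history strictly between $h$ and $\mathbf{d}_{i}$, while $\tilde{\mathbf{h}}_{i}>\tilde{\mathbf{q}}$ is inherited through the off-branch part, i.e.\ $\mathbf{q}<g'$ for some $g'\in\mathbf{h}_{i}\setminus\mathbf{d}_{i}$. Then $\mathbf{q}$ lies (weakly) before $\mathbf{d}_{i}$ while the remaining histories of $\mathbf{h}_{i}$ follow it, which is exactly the configuration a non-crossing opportunity excludes, the final contradiction.

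The main obstacle I anticipate is the comparison lemma of the second paragraph: making fully rigorous the claim that $\sigma$ leaves all relations among information sets other than $\mathbf{h}_{i}$ intact and creates no new relation except the single flip. This demands careful bookkeeping of the replicated histories between $h$ and $\mathbf{d}_{i}$ and of the histories below $\mathbf{d}_{i}$ whose $i$-action is hoisted to $h$, confirming that comparability is neither created nor destroyed outside $\mathbf{h}_{i}$. The one remaining delicate point is aligning the direction of the surviving relation with the non-crossing condition, i.e.\ checking that ``$\mathbf{q}$ before $\mathbf{d}_{i}$ with the rest of $\mathbf{h}_{i}$ following $\mathbf{q}$'' is precisely what conditions (1) and (2) rule out; a small concrete structure in which $\mathbf{h}_{i}$ straddles two branches of an earlier move is the instance to keep in mind while carrying this out.
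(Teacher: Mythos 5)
Your plan is correct in substance, and it is worth saying up front that it is far more detailed than the paper's own proof, which consists of a single sentence asserting that an IS can break UO only through a partial crossing, which non-crossing forbids. What you propose is essentially the argument that sentence suppresses: the identification $\mathbf{d}_{i}=\{g\in\mathbf{h}_{i}:g\succ h\}$ (with the rest of $\mathbf{h}_{i}$ incomparable to $h$, by perfect recall), the preservation of relations among information sets other than the mover (your ``comparison lemma'' is a directed refinement of the paper's Lemma 3, which the paper proves only later, for Theorem 1), the elimination of three of the four combinations via UO in $G$ or perfect recall, and the isolation of the single surviving configuration. One repair is needed in the comparison lemma: the new relation $\tilde{\mathbf{h}}_{i}<\tilde{\mathbf{m}}$ is created not only when $\mathbf{m}$ has a history \emph{strictly between} $h$ and $\mathbf{d}_{i}$, but also when $\mathbf{m}$ \emph{shares} a history with $\mathbf{d}_{i}$: simultaneity at $\mathbf{d}_{i}$ turns into strict precedence after the shift, since the replicas of the shared history strictly follow $h$. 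This is exactly why condition (1) of non-crossing contains the disjunct ``$g\in\mathbf{h}$'', and your surviving case must cover it.

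The ``delicate point'' you flagged at the end is a genuine obstruction, but it sits in the paper's definition, not in your argument. The configuration you derive --- condition (1) together with $\mathbf{q}<g^{\prime}$ for some $g^{\prime}\in\mathbf{h}_{i}\setminus\mathbf{d}_{i}$ --- matches the paper's verbal gloss (``another part of $\mathbf{h}_{i}$ stays following $\mathbf{h}$'') but is the \emph{reverse} of condition (2) as printed, which reads $\mathbf{h}>g$, i.e., $g$ precedes some history of $\mathbf{h}$. Under the printed condition the lemma is false. Take players $1,2,3$: player 1 chooses $a$ or $b$ at the root and $c$ or $d$ at $a$; players 2 and 3 move simultaneously at $ac$; player 3 alone moves at $ad$; player 2 moves at $b$ (actions $e,f$); player 3 moves at $be$; all remaining histories are terminal; $\mathbf{q}=\{ac,b\}$ is player 2's only information set and $\mathbf{h}_{3}=\{ac,ad,be\}$ is player 3's. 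Then $(a,\mathbf{d}_{3})$ with $\mathbf{d}_{3}=\{ac,ad\}$ is an IS opportunity, $G$ satisfies UO, and printed condition (2) fails for every information set (the only histories strictly following $be$ are terminal), so the opportunity is ``non-crossing'' as literally defined; yet after the IS, $\tilde{\mathbf{h}}_{3}=\{a,be\}$ and $\tilde{\mathbf{q}}=\{a(c,x),a(c,y),b\}$ satisfy both $\tilde{\mathbf{h}}_{3}<\tilde{\mathbf{q}}$ (via $a$) and $\tilde{\mathbf{q}}<\tilde{\mathbf{h}}_{3}$ (via $b\prec be$), breaking UO. With the corrected condition (2), namely $\mathbf{h}<g$ for some $g\in\mathbf{h}_{i}\setminus\mathbf{d}_{i}$, this opportunity is excluded ($ac\in\mathbf{q}\cap\mathbf{d}_{3}$ and $b\prec be$), and your final contradiction goes through; note also that under the printed reading, condition (1)'s first disjunct together with (2) already contradicts UO in $G$, so the printed definition would be nearly vacuous --- further evidence of a sign error. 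So: complete your proof against the corrected reading, and state explicitly that you are doing so.
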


\begin{proof}
The IS transformation breaks UO only if $\mathbf{h}_{i}$ ``partially crossed'' some information set, which is forbidden by the definition of non-crossing IS. 
\end{proof}

Similar to Coalescing, the inverse of $\sigma_{NC}$ may break UO. We restrict the range of $\sigma_{NC}^{-1}$ to the structures in $\mathcal{G}_{UO}$. Then the converse of Lemma 2 holds by definition.

\smallskip
In the following, we will use the second element in a transformation opportunity frequently. So we give it a name. In a Coalescing opportunity $(\mathbf{h}_{i}, \mathbf{h}_{i}^{\prime})$, we call $\mathbf{h}_{i}^{\prime}$ the \emph{mover}. In a (non-crossing) IS opportunity $(h, \mathbf{d}_{i})$ with $\mathbf{d}_{i} \subseteq \mathbf{h}_{i}$, we call $\mathbf{h}_{i}$ the \emph{mover} and $\mathbf{d}_{i}$ the \emph{sub-mover}.

\subsection{Characterization}

Consider an extensive game structure $G$ satisfying UO. It is called \emph{minimal} iff it does not have any Coalescing or IS opportunity; it is called \emph{minimal with respect to UO} iff there is no any Coalescing or IS opportunity in it which does not destroy UO. The latter does not imply the former. For example, the structure in Figure \ref{fig:MUD} (1) is minimal with respect to UO but has an IS opportunity $(B, \{Bc, Bd\})$ which destroys UO, as shown in Figure \ref{fig:MUD} (2).

\begin{figure}
\centering
  \includegraphics[width=0.7\linewidth]{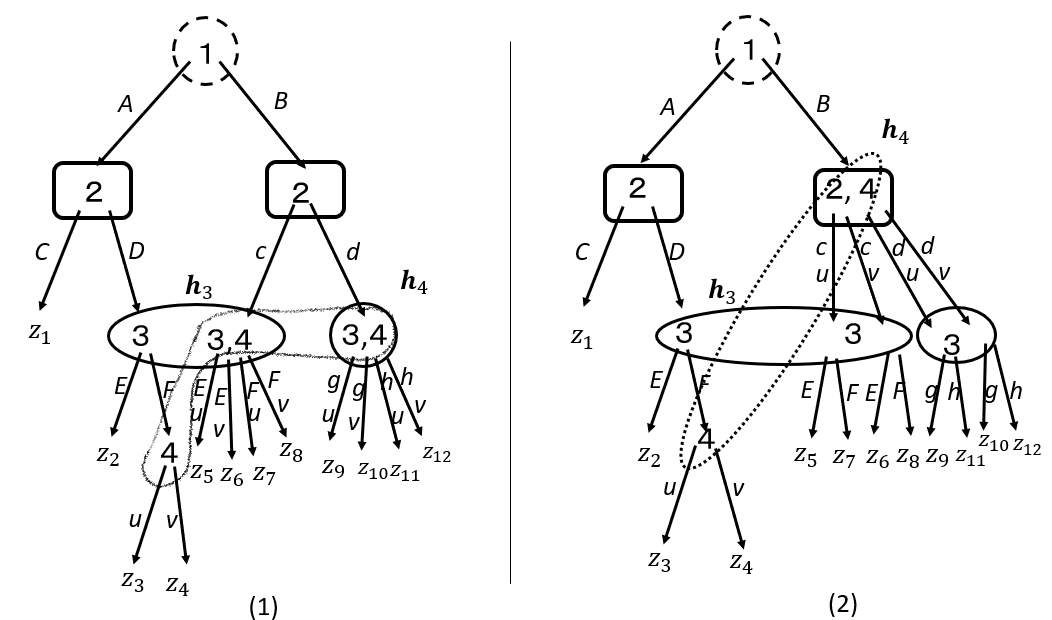}
  \caption{Minimal compactifications with and without UO}
  \label{fig:MUD}
\end{figure}

\begin{theorem}
\textbf{(Coalescing and non-crossing IS characterizes behavioral equivalence with UO)}. For extensive game structures $G$ and $G^{\prime} \in \mathcal{G}_{UO}$, the following are equivalent.

(a) $G$ and $G^{\prime}$ are behaviorally equivalent.

(b)  Both $G$ and $G^{\prime}$ can be transformed into a minimal structure with respect to UO through iteratively applying Coalescing and non-crossing IS, up to isomorphisms.

(c)  Both $G$ and $G^{\prime}$ can be transformed into a structure in $\mathcal{G}_{UO}$ through iteratively applying Coalescing and non-crossing IS, up to isomorphisms.

(d) $G$ can be transformed into $G^{\prime}$, up to isomorphisms, through a (possibly empty) finite sequence of $\gamma$ and $\sigma_{NC}$ and their inverses.

\end{theorem}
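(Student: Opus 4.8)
The plan is to establish the cycle $(a)\Rightarrow(b)\Rightarrow(c)\Rightarrow(d)\Rightarrow(a)$ and to isolate $(a)\Rightarrow(b)$ as the only step requiring real work. The three short implications go as follows. For $(b)\Rightarrow(c)$, a minimal structure with respect to UO satisfies UO by definition, and by Lemmas 1 and 2 all intermediate reductions stay in $\mathcal{G}_{UO}$; hence a common minimal-with-respect-to-UO structure for $G$ and $G'$ is, in particular, a common structure in $\mathcal{G}_{UO}$. For $(c)\Rightarrow(d)$, if both $G$ and $G'$ reduce by forward transformations to the same $K\in\mathcal{G}_{UO}$, then concatenating the reduction $G\to K$ with the reversed reduction $K\to G'$ produces a finite sequence of $\gamma$, $\sigma_{NC}$, and their inverses, which is exactly what $(d)$ asserts. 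For $(d)\Rightarrow(a)$, Battigalli et al. \cite{blm20} prove that Coalescing and IS preserve behavioral equivalence; since $\sigma_{NC}$ is a special IS and behavioral equivalence is symmetric, each step in the sequence of $(d)$ preserves the $Z$-reduced normal form, so $G$ and $G'$ are behaviorally equivalent.

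For $(a)\Rightarrow(b)$ I would argue termination first and uniqueness second. Termination is inherited: Coalescing and IS are compactifications whose iterated application to a fixed structure is finite (Thompson \cite{to52}, Battigalli et al. \cite{blm20}), so the sub-family consisting of Coalescing and non-crossing IS terminates \emph{a fortiori}, and it halts exactly at a structure that is minimal with respect to UO. Applying this to $G$ and to $G'$ yields minimal-with-respect-to-UO structures $M$ and $M'$ which, by Lemmas 1 and 2 together with the preservation of behavioral equivalence, lie in $\mathcal{G}_{UO}$ and remain behaviorally equivalent to $G$ and $G'$ respectively, hence to each other. Statement $(b)$ is then the assertion that $M$ and $M'$ are isomorphic.

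The uniqueness of the minimal-with-respect-to-UO form is the heart of the argument and the step I expect to be hardest. I would obtain it from two ingredients. The first is a local-confluence (diamond) lemma within $\mathcal{G}_{UO}$: if a structure admits two distinct Coalescing or non-crossing-IS opportunities, the two resulting structures can be reconciled by further Coalescings and non-crossing ISs. Given termination, Newman's lemma then makes the minimal-with-respect-to-UO form of each individual structure unique up to isomorphism. The second ingredient must promote this per-structure uniqueness to an invariant of the whole behavioral-equivalence class, i.e., it must show that behaviorally equivalent structures in $\mathcal{G}_{UO}$ are connected by admissible transformations and their inverses \emph{without leaving} $\mathcal{G}_{UO}$; the Church--Rosser property then forces $M$ and $M'$ to share a normal form.

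This last connectivity requirement is precisely where the difficulty concentrates, because the unrestricted characterization of Battigalli et al. \cite{blm20} connects $M$ and $M'$ only through a path that may pass outside $\mathcal{G}_{UO}$ via crossing ISs. The task is therefore to show that a crossing IS is never essential: any move in the unrestricted system that would destroy UO can be rerouted through a UO-preserving detour. I would attempt this by passing to Battigalli et al. \cite{blm20}'s absolute minimal form $M_{0}$, reached from $M$ by applying exactly the forbidden crossing ISs; since $M_{0}$ is a complete invariant of the behavioral-equivalence class, the absolute minimal forms of $M$ and $M'$ coincide, and it remains to prove that $M$ is recovered from $M_{0}$ by a \emph{canonical} minimal un-crossing --- the unique inverse-IS expansion that separates the ambiguously ordered information sets of $M_{0}$ just enough to restore UO while keeping the result minimal with respect to UO. Establishing that this un-crossing is uniquely determined by $M_{0}$, and in particular controlling the interaction between crossing and Coalescing, is the delicate technical core on which the whole implication rests.
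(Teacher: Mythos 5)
Your handling of the three easy implications is sound, and your direct route (c)$\Rightarrow$(d) --- concatenating the forward path from $G$ with the reversed path from $G'$, all intermediate structures staying in $\mathcal{G}_{UO}$ by Lemmas on UO-preservation --- is a legitimate minor variant of the paper, which instead goes (c)$\Rightarrow$(b) by weak confluence and then (b)$\Rightarrow$(d). You have also correctly located where the real work lies: showing that the minimal-with-respect-to-UO forms $M$ and $M'$ of two behaviorally equivalent structures are isomorphic. But at exactly that point your proposal has a genuine gap. You reduce cross-structure uniqueness to a connectivity claim --- that behaviorally equivalent structures in $\mathcal{G}_{UO}$ can be joined by Coalescings, non-crossing ISs and their inverses \emph{without ever leaving} $\mathcal{G}_{UO}$ --- and then propose to obtain this from a ``canonical minimal un-crossing'' recovering $M$ from the absolute minimal form $M_{0}$ of Battigalli et al.\ \cite{blm20}. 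You never establish that such an un-crossing exists or is unique, nor even that $M$ is reachable from $M_{0}$ by inverse ISs alone (a crossing IS can enable further Coalescings, so the descent from $M$ to $M_{0}$ may interleave both kinds of moves, and inverting it is not a pure expansion). Since you yourself flag this step as ``the delicate technical core on which the whole implication rests,'' what you have is a program rather than a proof: Newman's lemma plus termination only gives uniqueness of the normal form of a \emph{single} structure, and without the connectivity ingredient nothing links the normal form of $G$ to that of $G'$.

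The paper shows that this detour through UO-internal connectivity is avoidable, which is the key idea your attempt is missing. Its uniqueness lemma (Lemma 6) compares two behaviorally equivalent minimal-with-respect-to-UO structures directly: it connects them by the \emph{unrestricted} sequence of Coalescings, ISs and inverses from Battigalli et al.\ \cite{blm20}'s Theorem 1 (a path that may well exit $\mathcal{G}_{UO}$), and then exploits the ``law of conservation of relations'' (Lemma 3: transformations never create or destroy relatedness between two information sets, only possibly the content of the relation) to conclude that $M$ and $M'$ have the same information sets with the same pattern of relatedness. Any residual disagreement is then killed by a blocking/infinite-descent argument: if some part of an information set could be shifted up in $M'$ but not in $M$, then some information set $\mathbf{f}$ must block it in $M$; the same shift being possible in $M'$ forces some further information set to block $\mathbf{f}$ there, and so on, yielding an infinite chain of distinct information sets inside a finite structure --- a contradiction. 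This argument delivers exactly the isomorphism $M \cong M'$ while entirely sidestepping the un-crossing construction that your proposal leaves open.
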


\begin{figure}
  \includegraphics[width=\linewidth]{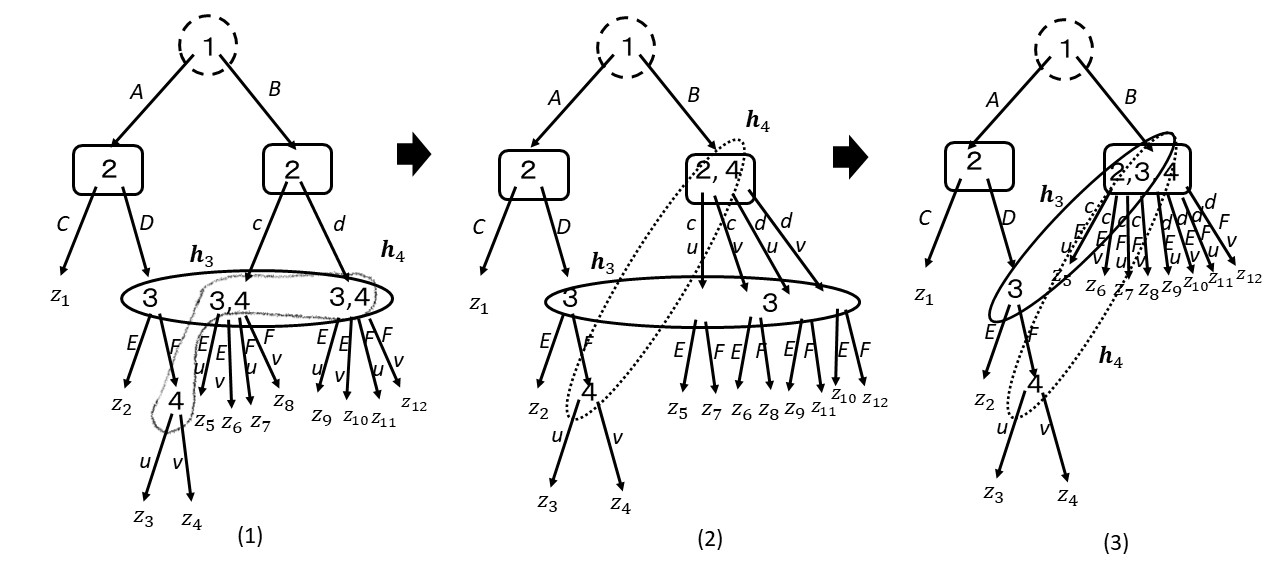}
  \caption{A compactification sequence where UO is violated \emph{en route}}
  \label{fig:UOm}
\end{figure}

Battigalli et al. \cite{blm20}'s Theorem 1 characterizes behavioral equivalence by Coalescing and IS and shows that there is a unique minimal structure for each behavioral equivalence class. We refine their result by taking UO into account. Our result shows that behavioral equivalence \emph{in} $\mathcal{G}_{UO}$ is characterized by Coalescing and non-crossing IS; also, each equivalence class in $\mathcal{G}_{UO}$ has a minimal structure.

Note that Theorem 1 does not mean that if $G, G^{\prime} \in \mathcal{G}_{UO}$ are behaviorally equivalent, then \emph{each} transformation sequence from one to the other have every step satisfying UO. For example, in the transformation sequence in Figure \ref{fig:UOm}, though the structures in (1) and (3) satisfy UO, the one in (2) does not. However, Theorem 1 means that there \emph{exists} a transformation sequence where each step satisfies UO. For the structure in Figure \ref{fig:UOm} (1), it is done by applying non-crossing IS first on $(B, \{Bc,Bd\})$ with $\{Bc,Bd\} \subseteq \mathbf{h}_{3}$ and then on $(B, \{Bc,Bd\})$ with $\{Bc,Bd\} \subseteq \mathbf{h}_{4}$.

\smallskip

The proof of Theorem 1 needs some lemmas. The first one states that the relations between information sets can neither be created or destroyed through transformations. It is called the \emph{law of conservation of relations} to emphasize its similarity with the law of conservation of mass in physics and chemistry. Note that only the \emph{existence/nonexistence of a relation} between two information sets is unchanged; the \emph{content} of the relation may change. For example, $\mathbf{h}_{4}$ follows $\mathbf{h}_{3}$ in Figure \ref{fig:ISe} (1) while the simultaneity emerges in (2) after the IS transformation

Two information sets $\mathbf{h}$ and $\mathbf{g}$ are \emph{related} iff $\mathbf{h}<\mathbf{g}$, $\mathbf{h} \sim\mathbf{g}$, or $\mathbf{h}>\mathbf{g}$ holds.

\begin{lemma}
\textbf{(The law of conservation of relations between information sets)} Let $G$ be an extensive game structure and $\iota \in \{\gamma, \sigma\}$. Two information sets $\mathbf{f}$ and $\mathbf{g}$ are related in $G$ if and only if $\iota(\mathbf{f})$ and $\iota (\mathbf{g})$ are related in $\iota (G)$.
\end{lemma}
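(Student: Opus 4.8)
The plan is to replace the three--way relation $<,\sim,>$ by a single symmetric criterion that transforms well, namely the existence of a common play. First I would record the elementary equivalence that, for any two information sets $\mathbf{f},\mathbf{g}$, they are related if and only if $Z(\mathbf{f})\cap Z(\mathbf{g})\neq\emptyset$, i.e.\ some terminal is reachable from a history of each. The forward direction is immediate: from $f\preceq g$ with $f\in\mathbf{f}$, $g\in\mathbf{g}$, any terminal $z\succeq g$ lies in $Z(f)\cap Z(g)\subseteq Z(\mathbf{f})\cap Z(\mathbf{g})$. Conversely, if $z\in Z(\mathbf{f})\cap Z(\mathbf{g})$ then $z$ has a prefix in $\mathbf{f}$ and a prefix in $\mathbf{g}$, and two prefixes of the same history are $\preceq$--comparable, which reproduces one of $<,\sim,>$. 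This reduces the lemma to the claim that the transformation preserves the intersection pattern of the reachable--terminal sets.

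Next I would invoke the fact (Battigalli et al.\ \cite{blm20}) that each of $\gamma$ and $\sigma$ induces an isomorphism of $Z$--reduced normal forms, hence a bijection $\theta\colon Z\to\tilde{Z}$ on terminals, and try to upgrade this to $Z(\iota(\mathbf{f}))=\theta(Z(\mathbf{f}))$ for every information set $\mathbf{f}$. Granting this, the lemma is immediate, since a bijection preserves (non)emptiness of intersections: $Z(\mathbf{f})\cap Z(\mathbf{g})\neq\emptyset$ iff $\theta(Z(\mathbf{f}))\cap\theta(Z(\mathbf{g}))=Z(\iota(\mathbf{f}))\cap Z(\iota(\mathbf{g}))\neq\emptyset$, with the two inclusions giving the ``no relation destroyed'' (forward) and ``no relation created'' (backward) halves of the \emph{iff}. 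To establish $Z(\iota(\mathbf{f}))=\theta(Z(\mathbf{f}))$ I would track terminals through the explicit correspondences $\gamma,\sigma\colon\bar{H}\twoheadrightarrow\tilde{\bar{H}}$ with the three--region split used in their definitions (before/unrelated, between, after the moved set): on each region $\theta$ is the relabelling read off from the construction, and reachability from a fixed information set is carried along with it. For $\sigma$ the decisive input is the dictation condition $Z(h)=Z(\mathbf{d}_i)$: pulling the sub--mover $\mathbf{d}_i$ up to $h$ leaves it sitting on exactly the same plays, so $Z(\sigma(\mathbf{h}_i))=\theta(Z(\mathbf{h}_i))$ and no play through $\mathbf{h}_i$ is gained or lost, even though a ``follows'' may turn into a ``simultaneous'' (as with $\mathbf{h}_3,\mathbf{h}_4$ from Figure \ref{fig:ISe}(1) to (2)).

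The step I expect to be the main obstacle is this identity at the moved set of a Coalescing, i.e.\ $\mathbf{f}=\mathbf{h}_i'$, where $\gamma$ is only a surjection with $\gamma(\mathbf{h}_i')=\mathbf{h}_i$. Here the control condition gives $Z(\mathbf{h}_i a_i^{\star})=Z(\mathbf{h}_i')$, but the predecessor $\mathbf{h}_i$ reaches strictly more terminals along its other actions, so the plain terminal bookkeeping $Z(\iota(\mathbf{f}))=\theta(Z(\mathbf{f}))$ is exactly the thing that needs care for this pair --- the same delicacy already signalled in the proof of Lemma 1 by the parenthetical ``except with $\mathbf{h}_i'$''. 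I would therefore isolate the coalesced set and argue about it through the special map $\gamma(\,\cdot\,;\mathbf{h}_i')$, which records that for player $i$ each history of $\mathbf{h}_i'$ is shifted up to its prefix in $\mathbf{h}_i$; the content of any relation involving $\mathbf{h}_i'$ may change, and the crux is to verify precisely that its mere \emph{existence} is governed by the surviving information set $\mathbf{h}_i$, whose reachable--terminal set genuinely is preserved.

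Once the moved (sub--)mover is dispatched, the remaining pairs are the surviving information sets, for which the region--by--region relabelling of the previous paragraph closes both directions routinely. I would present the argument uniformly for $\iota\in\{\gamma,\sigma\}$, carrying out the case analysis only once on the three regions and specialising the moved--set computation to the two transformations, since that is where the two constructions genuinely differ.
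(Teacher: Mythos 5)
Your reduction of relatedness to the overlap of reachable--terminal sets is correct and is a genuinely different route from the paper's proof, which tracks chains of histories directly and argues that a transformation shifts histories in the mover along the chain of their predecessors. Your criterion cleanly delivers the full equivalence for $\sigma$: by dictation, $Z(h)=Z(\mathbf{d}_{i})$, one gets $Z(\sigma(\mathbf{f}))=\theta(Z(\mathbf{f}))$ for \emph{every} information set, mover included. It also delivers the ``no relation destroyed'' half for $\gamma$, since for that half the inclusion $\theta(Z(\mathbf{f}))\subseteq Z(\gamma(\mathbf{f}))$ suffices, and this inclusion does hold for every $\mathbf{f}$ (with equality for every unmoved set).

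The gap is exactly the step you defer, and it cannot be repaired in the form you propose. At the mover of a Coalescing, $Z(\gamma(\mathbf{h}_{i}^{\prime}))$, computed in $\gamma(G)$, equals $\theta(Z_{G}(\mathbf{h}_{i}))$, which strictly contains $\theta(Z_{G}(\mathbf{h}_{i}^{\prime}))$ because $|F_{i}(\mathbf{h}_{i})|\geq 2$. Hence your criterion proves that $\gamma(\mathbf{h}_{i}^{\prime})$ and $\gamma(\mathbf{g})$ are related in $\gamma(G)$ \emph{if and only if $\mathbf{h}_{i}$ and $\mathbf{g}$ are related in $G$} --- not if and only if $\mathbf{h}_{i}^{\prime}$ and $\mathbf{g}$ are, which is what the lemma asserts. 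The two conditions genuinely differ whenever $\mathbf{g}$ meets $Z(\mathbf{h}_{i})$ only through actions other than $a_{i}^{\star}$. Concretely: let player $1$ choose $a$ or $b$ at the root, let $\mathbf{h}_{1}^{\prime}=\{a\}$ be a second decision node of player $1$ with actions $c,d$, and let $\mathbf{g}=\{b\}$ be a node of player $2$. Then $(\{\emptyset\},\{a\})$ is a Coalescing opportunity, $\mathbf{h}_{1}^{\prime}$ and $\mathbf{g}$ are unrelated in $G$, yet $\gamma(\mathbf{h}_{1}^{\prime})=\{\emptyset\}$ is before $\mathbf{g}$ in $\gamma(G)$. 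So the ``only if'' direction fails at the mover --- Coalescing can \emph{create} relations there --- and no bookkeeping of reachable terminals, nor the auxiliary map $\gamma(\,\cdot\,;\mathbf{h}_{i}^{\prime})$, will change that; the crux you isolate is not a verification but a false statement. For comparison, the paper's own proof is silent on precisely this point: it establishes that existing chains survive (your inclusion argument does the same, more transparently) and dismisses the converse as ``similar''. The honest outcome of your approach is therefore: the forward direction and the full equivalence for $\sigma$ and for pairs of unmoved sets under $\gamma$, together with the counterexample at the Coalescing mover.
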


\begin{proof}
Suppose that $\mathbf{f}$ and $\mathbf{g}$ are related in $G$. Then some $f \in \mathbf{f}$ and $g \in \mathbf{g}$ are in a chain, i.e., $f \prec g$, $f = g$, or $f \succ g$. If neither $\mathbf{f}$ or $\mathbf{g}$ is the mover of the opportunity that $\iota$ is applied on or one is the mover in an IS opportunity but the history in question is not in the sub-mover, then $\iota (f)$ and $\iota(g)$, be they singletons or sets of replicas, are still in a chain and their relative positions are unchanged (in the cases of replicas, the relative positions are preserved by some elements in the set).

Without loss of generality, suppose that $\mathbf{f}$ is the mover and $f$ is in the sub-mover if it is an IS opportunity. Since a transformation shifts histories in the mover along the chain of their respective predecessors, $\iota(f, \mathbf{f})$ and $\iota(g)$ are still in a chain. Hence the two information sets are still related after $\iota$ is applied.

The cases of the inverse transformations can be shown in similar manners. 
\end{proof}

The next lemma is first proved
in Newman \cite{ne42} (also see Apt \cite{ap11}).
It gives a sufficient condition for order-independence in an abstract reduction system. 

An \emph{abstract reduction system}
is a pair $(X  , \rightarrow )$, where $X$ is a non-empty set and $ \rightarrow $ is a binary relation on $X$. An element $x  \in X$ is called an \emph{endpoint} in $(X, \rightarrow )$ iff there is no $x ^{ \prime } \in X $ such that $x  \rightarrow x ^{ \prime }$. We say that $\{x _{n} :n =0 ,1 ,\ldots \}$ (finite or infinite) in $X$ is a $ \rightarrow $\emph{-sequence} iff $x _{n} \rightarrow x _{n +1}$ (as far as $x _{n +1}$ is defined). We use $ \rightarrow ^{ \ast }$ to denote the reflexive and transitive closure of $ \rightarrow$. We say that $(X , \rightarrow )$ is \emph{weakly confluent} iff for each $x  ,y  ,z  \in X $, if $x  \rightarrow y$ and $x  \rightarrow z$, then $y \rightarrow ^{ \ast } x ^{ \prime }$ and $z   \rightarrow ^{ \ast }x ^{ \prime}$ for some $x ^{ \prime } \in X$.

\begin{lemma}
\textbf{(Order-independence in an abstract system)}. If an abstract reduction system $(X , \rightarrow )$ satisfies the following two conditions: \textbf{(N1)} each $ \rightarrow $-sequence is finite, and \textbf{(N2)} $(X , \rightarrow )$ is weakly confluent, then for each $x  \in X $ there is a unique endpoint $x ^{ \prime } \in X $ such that $x  \rightarrow ^{ \ast }x ^{ \prime }$. 
\end{lemma}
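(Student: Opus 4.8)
The plan is to split the statement into an easy existence part and a harder uniqueness part, the latter being the genuine content of the result (this is Newman's Lemma). For existence I would note that (N1) forbids infinite $\rightarrow$-sequences: starting from any $x$ and repeatedly passing to a successor whenever one exists produces a $\rightarrow$-sequence, which by (N1) must be finite, and a $\rightarrow$-sequence can terminate only at an endpoint. Hence every $x$ reaches at least one endpoint, and the real work lies in showing that this endpoint is unique.

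For uniqueness I would first establish the stronger property of \emph{confluence}: for every $x$, whenever $x \rightarrow^{\ast} y$ and $x \rightarrow^{\ast} z$, there is some $w$ with $y \rightarrow^{\ast} w$ and $z \rightarrow^{\ast} w$. Given confluence, uniqueness is immediate: if $y_{1}$ and $y_{2}$ are endpoints with $x \rightarrow^{\ast} y_{1}$ and $x \rightarrow^{\ast} y_{2}$, confluence yields a common $w$ reachable from both; but an endpoint admits no outgoing step, so $y_{1} \rightarrow^{\ast} w$ forces $y_{1} = w$ and likewise $y_{2} = w$, whence $y_{1} = y_{2}$.

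To prove confluence I would use Noetherian induction, which is available precisely because (N1) makes $\rightarrow$ well-founded: I assume confluence holds at every $x^{\prime}$ with $x \rightarrow x^{\prime}$ and deduce it at $x$. If either reduction $x \rightarrow^{\ast} y$ or $x \rightarrow^{\ast} z$ is empty the conclusion is trivial, so write $x \rightarrow y_{1} \rightarrow^{\ast} y$ and $x \rightarrow z_{1} \rightarrow^{\ast} z$. Now (N2), weak confluence, closes the one-step fork $x \rightarrow y_{1}$, $x \rightarrow z_{1}$ at a common $u$ with $y_{1} \rightarrow^{\ast} u$ and $z_{1} \rightarrow^{\ast} u$. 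Applying the induction hypothesis at $y_{1}$ to the pair $y_{1} \rightarrow^{\ast} y$, $y_{1} \rightarrow^{\ast} u$ gives a $v$ with $y \rightarrow^{\ast} v$ and $u \rightarrow^{\ast} v$; then $z_{1} \rightarrow^{\ast} u \rightarrow^{\ast} v$, so applying the induction hypothesis at $z_{1}$ to $z_{1} \rightarrow^{\ast} z$, $z_{1} \rightarrow^{\ast} v$ produces the desired $w$ with $z \rightarrow^{\ast} w$ and $v \rightarrow^{\ast} w$, and finally $y \rightarrow^{\ast} v \rightarrow^{\ast} w$ completes the diamond.

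The main obstacle is the bootstrap from the purely \emph{local} hypothesis (N2), which closes only forks consisting of two single steps, to the \emph{global} confluence of arbitrarily long reduction sequences; this is exactly where termination is indispensable, and dropping (N1) makes the statement false. The delicate point is to invoke the induction hypothesis only at the strict successors $y_{1}$ and $z_{1}$, never at $x$ itself, and to chain the three closure steps (first $u$, then $v$, then $w$) in the right order so that each appeal to the hypothesis is legitimate. I would also verify that $\rightarrow^{\ast}$ is used consistently, so that the multi-step tails and the single steps compose transitively, and that the degenerate cases of length-zero sequences are disposed of separately.
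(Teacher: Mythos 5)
Your proof is correct and complete, but note that the paper itself does not prove this lemma at all: this is Newman's Lemma, and the paper simply states it with a citation to Newman (1942) and Apt (2011). Your argument is exactly the canonical modern proof found in that literature (due to Huet): existence of an endpoint follows from termination (N1); full confluence is established by Noetherian induction, which is legitimate because (N1) makes the relation well-founded; and uniqueness of the endpoint is an immediate corollary of confluence. The inductive step is assembled in the right order --- the one-step fork $x \rightarrow y_{1}$, $x \rightarrow z_{1}$ is closed at $u$ by weak confluence (N2), then the induction hypothesis is invoked only at the strict successors $y_{1}$ and $z_{1}$ to produce $v$ and then $w$ --- and the degenerate cases where one of the reductions is empty are handled separately, so there is no gap. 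Since the paper defers entirely to its references here, there is nothing to compare beyond observing that your proof is the one those references give; if anything, your write-up makes explicit the point the paper leaves implicit, namely that the lemma's content is the bootstrap from local (one-step) confluence to global confluence, which is valid only in the presence of termination.
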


Let $\rightarrow \in \{\gamma, \sigma_{NC}\}$.  By Lemmas 1 and 2, $(\mathcal{G}_{UO}, \rightarrow)$ is an abstract reduction system. Since each $G \in \mathcal{G}_{UO}$ is finite and both $\gamma$ and $\sigma_{NC}$ shift up information sets, each $\rightarrow$-sequence is finite. We have to show

\begin{lemma}
\textbf{(Compactification is order-independenc)} $(\mathcal{G}_{UO}, \rightarrow)$ is weakly confluent.
\end{lemma}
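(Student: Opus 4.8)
The plan is to verify weak confluence directly. I fix $G \in \mathcal{G}_{UO}$ together with two one-step reductions $G \to G_{1}$ and $G \to G_{2}$, where each $\to$ is a Coalescing $\gamma$ or a non-crossing IS $\sigma_{NC}$, and I try to exhibit a common reduct $G_{3}$ with $G_{1} \to^{\ast} G_{3}$ and $G_{2} \to^{\ast} G_{3}$, every intermediate step again being a $\gamma$ or $\sigma_{NC}$ so that, by Lemmas 1 and 2, we never leave $\mathcal{G}_{UO}$. The preparatory observation is that each transformation is \emph{localized}: it only reshapes the cone $\lfloor \cdot \rfloor$ above its mover (for Coalescing) or above the history $h$ at which the sub-mover is synchronized (for IS), and it touches only a single player's information partition. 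I would read this off precisely from the explicit history-replacement formulas in the definitions of $\gamma$ and $\sigma$, so that the ``region of influence'' of each opportunity is a well-defined set of histories.

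The argument then splits on how the two regions of influence meet. First I would dispose of the independent case: when the two movers (and, for IS, the relevant $h$ and sub-movers) lie in unrelated branches, or belong to different players with non-nested cones, the two rewritings commute. Here the image of the second opportunity under the first transformation is still an opportunity of the same kind; its survival is guaranteed by the law of conservation of relations (Lemma 3), since the defining relations $\ll_{i}$ and $\lessdot_{i}$ are read off the $Z(\cdot)$-sets and the ordering of information sets, none of which is created or destroyed away from the mover. Applying the two in either order then yields isomorphic structures, so $G_{3}$ is reached in one further step from each side.

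The substance of the lemma is the interacting case, where the two cones overlap. I would enumerate the finitely many overlap patterns: \textbf{(i)} two Coalescings of one player along a chain $\mathbf{h}_{i} \ll_{i} \mathbf{h}_{i}^{\prime} \ll_{i} \mathbf{h}_{i}^{\prime\prime}$ sharing a mover; \textbf{(ii)} two ISs whose sub-movers sit inside the same information set $\mathbf{h}_{i}$ but are synchronized at distinct histories, or inside nested cones; \textbf{(iii)} a Coalescing whose mover is the information set carrying an IS sub-mover, and the symmetric configuration. In each pattern I would track, via the replacement formulas, what the surviving opportunities look like after one transformation — noting that a single opportunity may reappear as several opportunities acting on the replicas produced — and check that pushing all of them through yields the same reduct from both sides.

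The hard part will be keeping every reconciling move non-crossing while staying in $\mathcal{G}_{UO}$: an interaction can naively call for a \emph{crossing} IS to close the diagram, which is forbidden here, and this is exactly what separates the present lemma from Battigalli et al.\ \cite{blm20}'s unconstrained confluence. The remedy, already illustrated by the remark following Theorem 1, is that a would-be crossing synchronization can be realized as a finite sequence of non-crossing ISs applied to the constituent pieces of the offending information set one at a time. I would therefore isolate this \emph{decomposition of a crossing IS into non-crossing ISs} as the main technical step and invoke it whenever a single step from one side would otherwise have to cross an information set that the other side has left in place. Since $G$ is finite and each $\gamma$, $\sigma_{NC}$ shifts information sets strictly upward, each such decomposition terminates, so the closing sequences $\to^{\ast}$ are finite and $(\mathcal{G}_{UO}, \to)$ is weakly confluent.
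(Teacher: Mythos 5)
Your overall skeleton---splitting on whether the two opportunities' regions of influence are disjoint or interact, commuting the transformations in the disjoint case, tracking residual opportunities (possibly split into several opportunities acting on replicas) in the interacting case, and getting termination from finiteness---is exactly the paper's strategy, and your patterns (i)--(iii) correspond to the paper's Cases 1--3. The genuine gap is in what you yourself isolate as the main technical step. The claim that a would-be crossing synchronization can always be realized as a finite sequence of non-crossing ISs applied to constituent pieces one at a time is false in general: the structure in Figure \ref{fig:MUD} (1) is minimal with respect to UO and yet has a crossing IS opportunity $(B,\{Bc,Bd\})$; minimality means \emph{no} UO-preserving Coalescing or IS opportunity exists there at all, so that crossing IS admits no non-crossing realization. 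Note also that a sub-mover is pinned down by the dictation condition $Z(h)=Z(\mathbf{d}_{i})$, so one cannot synchronize ``pieces'' of a single $\mathbf{d}_{i}$ one at a time. The remark after Theorem 1 that you invoke is about something different: two \emph{distinct} IS opportunities (with sub-movers inside the different information sets $\mathbf{h}_{3}$ and $\mathbf{h}_{4}$) applied in a UO-preserving order, not a decomposition of one crossing IS.

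What actually saves the lemma---and what the paper does instead---is that the reconciling moves never need to cross in the first place: non-crossingness is \emph{inherited} by residual opportunities. In the paper's Case 2.2, when the Coalescing replicates $h$ into $h_{1},\dots,h_{k}$ and thereby splits the non-crossing opportunity $(h,\mathbf{d}_{j})$ into $(h_{t},\mathbf{d}_{j}^{t})$, $t=1,\dots,k$, the justification is simply that since $(h,\mathbf{d}_{j})$ does not destroy UO in $G$, neither do the residuals in $G_{1}$: a partial crossing after the Coalescing would pull back to a partial crossing before it, because a Coalescing leaves the relative positions of all information sets other than its mover unchanged (the proof of Lemma 1) and relations are conserved (Lemma 3), contradicting the hypothesis that the original move was non-crossing. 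So you should replace your (false) decomposition lemma by the (true and easier) statement that residuals of non-crossing opportunities under $\gamma$ and $\sigma_{NC}$ are themselves non-crossing; with that substitution, your case analysis closes the diagrams within $\mathcal{G}_{UO}$ and the rest of your argument goes through.
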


\begin{proof}
 Let $G, G_{1}, G_{2} \in \mathcal{G}_{UO}$ with $G \rightarrow G_{1}$ and $G \rightarrow G_{2}$. We show that $G_{1} \rightarrow G^{\ast}$ and $G_{2} \rightarrow G^{\ast}$ for some $G^{\ast} \in \mathcal{G}_{UO}$. We have to consider the following three cases.

Case 1. $G_{1} = \gamma (G; \mathbf{h}_{i}, \mathbf{h}^{\prime}_{i})$ and $G_{2}=\gamma (G; \mathbf{g}_{j}, \mathbf{g}^{\prime}_{j})$ for some $\mathbf{h}_{i}, \mathbf{h}^{\prime}_{i}, \mathbf{g}_{j}, \mathbf{g}^{\prime}_{j} \in \mathbf{H}$. There are three possibilities.

1.1. $i = j$ and $\mathbf{h}_{i} = \mathbf{g}_{i}$. Then $\mathbf{g}_{i}^{\prime} = \mathbf{h}_{i}^{\prime}$, i.e., $G_{1}=G_{2}$.

1.2. $i = j$ and $\mathbf{g}_{i} = \mathbf{h}_{i}^{\prime}$ or $\mathbf{h}_{i} = \mathbf{g}_{i}^{\prime}$. Without loss of generality, we assume that $\mathbf{g}_{i} = \mathbf{h}_{i}^{\prime}$. It implies that $\mathbf{h}_{i},\mathbf{h}_{i}^{\prime} (=\mathbf{g}_{i})$ and $\mathbf{g}_{i}^{\prime}$ are consecutive in $\mathbf{H}_{i}$. Therefore, after $\mathbf{h}_{i}^{\prime}$ being coalesced with $\mathbf{h}_{i}$, in $G_{1}$, $\mathbf{g}_{i}^{\prime}$ controls $\mathbf{h}_{i}$. Similarly, after $\mathbf{g}_{i}^{\prime}$ being coalesced into $\mathbf{h}_{i}^{\prime}$, in $G_{2}$, $(\mathbf{h}_{i}, \mathbf{h}_{i}^{\prime})$ is a Coalescing opportunity. Therefore, $\gamma(G_{1}; \mathbf{h}_{i}, \mathbf{g}^{\prime}_{i}) = \gamma(G_{2}; \mathbf{h}_{i}, \mathbf{h}^{\prime}_{i})$.

1.3. In any other case, the two opportunities do not interfere with each other; $(\mathbf{g}_{j}, \mathbf{g}^{\prime}_{j})$ is still a Coalescing opportunity in $G_{1}$ and so is $(\mathbf{h}_{i}, \mathbf{h}^{\prime}_{i})$ in $G_{2}$. Hence, $\gamma(G_{1}; \mathbf{g}_{j},$ $ \mathbf{g}^{\prime}_{j}) = \gamma(G_{2}; \mathbf{h}_{i}, \mathbf{h}^{\prime}_{i})$.

Case 2. One is obtained from $G$ through Coalescing and the other through non-crossing IS. Without loss of generality, we assume that $G_{1} = \gamma (G; \mathbf{h}_{i}, \mathbf{h}^{\prime}_{i})$ and $G_{2}=\sigma_{NC} (G; h, \mathbf{d}_{j})$ for some $\mathbf{h}_{i}, \mathbf{h}^{\prime}_{i}\in \mathbf{H}_{i}$, $\mathbf{d}_{j}\subseteq \mathbf{g}_{j} \in \mathbf{H}_{j}$, and $h \in H$. There are four possibilities.

2.1. $h \in \mathbf{h}_{i}^{\prime}$. Then $j \neq i$. If $|I(h)|>1$, $\sigma_{NC}(G_{1},h,\mathbf{d}_{j}) = \gamma(G_{2}; \mathbf{h}_{i}, \mathbf{h}^{\prime}_{i})$. If $|I(h)|=1$ and all histories in $\mathbf{d}_{j}$ are immediate successors of $h$, we have $\gamma(G_{2}; \mathbf{h}_{i}, \mathbf{h}^{\prime}_{i}) = G_{1}$. Suppose that some history in $\mathbf{d}_{j}$ is not an immediate successor of $h$ in $G$. In $G_{1}$, we can synchronize $\mathbf{d}_{j}$ with the immediate successors of $h$ in $G$ by applying non-crossing IS several times and obtain $G^{\ast}$. In $G_{2}$, $(\mathbf{h}_{i}, \mathbf{h}^{\prime}_{i})$ is still a Coalescing opportunity, and in $G_{3} :=\gamma(G_{2}; \mathbf{h}_{i}, \mathbf{h}^{\prime}_{i})$, the immediate successor of $h$ can be synchronized with the replicas of $h$ in $G_{3}$. Finally, we also obtain $G^{\ast}$.

2.2. $h$ is between $\mathbf{h}_{i}$ and $\mathbf{h}^{\prime}_{i}$ and $\mathbf{g}_{j} \neq \mathbf{h}^{\prime}_{i}$. We assume $|F_{i}(\mathbf{h}_{i})|=k$. Then in $G_{1}$, $h$ is replaced by $k$ replicas $h_{1},...,h_{k}$, and $\mathbf{d}_{j}$ in $G_{1}$ can be partitioned into $\mathbf{d}_{j}^{1},...,\mathbf{d}_{j}^{k}$ such that $h_{t} \lessdot_{j} \mathbf{d}_{j}^{t}$ for $t = 1,...,k$. Also, since $(h,\mathbf{d}_{j}$) does not destroy UO in $G$, neither does it in $G_{1}$. After iteratively applying $\sigma_{NC}$ for $k$ times, we obtain $G^{\ast}$ where $\mathbf{d}_{j}$ is synchronized with $h_{1},...,h_{k}$. On the other hand, $(\mathbf{h}_{i}, \mathbf{h}^{\prime}_{i})$ is still a Coalescing opportunity in $G_{2}$ and $\gamma(G_{2};\mathbf{h}_{i}, \mathbf{h}^{\prime}_{i})=G^{\ast}$.

2.3. $i = j$ and $\mathbf{d}_{i} \subseteq\mathbf{h}_{i}^{\prime}$. Since $G$ has perfect recall and each player has more than one actions at each history, it follows that $h$ is between $\mathbf{h}_{i}$ and $\mathbf{h}^{\prime}_{i}$. After $\mathbf{d}_{i}$ being simultanized with $h$ in $G_{2}$, still $(\mathbf{h}_{i},\mathbf{h}^{\prime}_{i})$ is a Coalescing opportunity in $G_{2}$. After shifting up $\mathbf{h}_{i}^{\prime}$ into $\mathbf{h}_{i}$ we actually obtain $G_{1}$. Therefore, $\gamma (G_{2}; \mathbf{h}_{i},\mathbf{h}^{\prime}_{i})= G_{1}$.

2.4. In any other case, the two opportunities do not interfere with each other, and we have $\sigma_{NC}(G_{1};h,\mathbf{d}_{j}) = \gamma(G_{2}; \mathbf{h}_{i}, \mathbf{h}^{\prime}_{i})$.

Case 3. $G_{1}=\sigma_{NC} (G, h, \mathbf{d}_{i})$ and $G_{2}=\sigma_{NC} (G, g, \mathbf{b}_{j})$ for some $h,g \in H$, $\mathbf{d}_{i} \subseteq \mathbf{h}_{i}$ and $\mathbf{b}_{j} \subseteq \mathbf{g}_{j}$ form some $\mathbf{h}_{i},\mathbf{g}_{j} \in \mathbf{H}$. There are two possibilities.

3.1. $g \in \mathbf{d}_{i}$ and $I(g) = \{i\}$ in $G$. Then, similar to case 2.1, if all histories in $\mathbf{b}_{j}$ are immediate successors of $g$, then $G_{1} = \sigma_{NC}(G_{2}; h, \mathbf{d}_{i})$. Otherwise both $G_{1}$ and $G_{2}$ can be transformed into some $G^{\ast}$ where the histories in $\mathbf{b}_{j}$ are synchronized with $g$'s immediate successors in $G$. The symmetric case where $h \in \mathbf{b}_{j}$ and $I(h) = \{j\}$ can be shown in the manner

3.2. In every other case the two opportunities do not interfere with each other, except that one transformation may decompose the other into several sub-opportunities, like in 2.2.

Here we have shown that $(\mathcal{G}_{UO}, \rightarrow)$ is weakly confluent.

\end{proof}

Note that Lemma 5 still holds if we replace $\sigma_{NC}$ by $\sigma$ and $\mathcal{G}_{UO}$ by $\mathcal{G}$.

\begin{lemma}
\textbf{(Uniqueness of the minimal structure with respect to UO)}. Let $G, G^{\prime} \in \mathcal{G}_{UO}$ be two minimal structures with respect to UO. Then $G$ and $ G^{\prime}$ are equal up to isomorphisms.
\end{lemma}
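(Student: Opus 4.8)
The statement is the uniqueness (up to isomorphism) of the minimal-with-respect-to-UO structure inside a behavioral equivalence class, so throughout I read the hypothesis as: \emph{$G$ and $G^{\prime}$ are behaviorally equivalent} minimal structures with respect to UO (otherwise no isomorphism could hold). The plan is to read these minimal structures as the endpoints of the abstract reduction system $(\mathcal{G}_{UO},\rightarrow)$ with $\rightarrow\in\{\gamma,\sigma_{NC}\}$, and to bootstrap from the order-independence already secured for this system. First I would note that, by Lemmas 4 and 5, $(\mathcal{G}_{UO},\rightarrow)$ is terminating and weakly confluent, hence every $G\in\mathcal{G}_{UO}$ has a \emph{unique} endpoint reachable by reductions; these endpoints are exactly the structures carrying no Coalescing and no non-crossing IS opportunity, i.e. the minimal structures with respect to UO. Since $\gamma$ and $\sigma_{NC}$ preserve behavioral equivalence, $G$ and its endpoint lie in one behavioral equivalence class, so the lemma is equivalent to the statement that the endpoint map is constant, up to isomorphism, on each behavioral equivalence class of $\mathcal{G}_{UO}$; as $G$ and $G^{\prime}$ are themselves endpoints, it suffices to compare two behaviorally equivalent endpoints.

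To compare endpoints arising from different starting structures I would route through the fully minimal structure. By the remark after Lemma 5, $(\mathcal{G},\{\gamma,\sigma\})$ is also terminating and weakly confluent, so by Lemma 4 it has unique normal forms, and by Battigalli et al. \cite{blm20} that normal form is an invariant of the behavioral equivalence class. Hence $G$ and $G^{\prime}$, now allowing crossing IS, both reduce to one common fully minimal structure $M$, say $G\rightarrow^{\ast}M$ and $G^{\prime}\rightarrow^{\ast}M$ in $\mathcal{G}$. Each reduction step carries a correspondence on information sets ($\gamma$ merges one information set, $\sigma$ is a bijection), and composing along the two sequences yields surjective correspondences $\Phi\colon\mathbf{H}_{G}\to\mathbf{H}_{M}$ and $\Phi^{\prime}\colon\mathbf{H}_{G^{\prime}}\to\mathbf{H}_{M}$ whose route-independence follows from the unique-normal-form property. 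The candidate isomorphism $G\cong G^{\prime}$ is then the fibered matching $(\Phi^{\prime})^{-1}\circ\Phi$.

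\textbf{Main obstacle.} The crux is to show that the fibers of $\Phi$ over $M$ are determined by $M$ alone, so that $\Phi$ and $\Phi^{\prime}$ inflate $M$ identically. Because $G$ is minimal with respect to UO, no reduction out of $G$ can begin with a Coalescing or a non-crossing IS; the only first steps available are \emph{crossing} ISs, so the passage from $M$ back to $G$ is a forced ``un-crossing'' that splits precisely those information sets of $M$ at which UO fails, i.e. where one information set both precedes and follows another. Lemma 3 is what makes this canonical: which pairs of information sets are related — and in particular which follow each other — is preserved along every transformation and its inverse, so the crossing pattern that dictates the splits is a behavioral invariant, identical on the $G$- and the $G^{\prime}$-side. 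The delicate verification is that these forced splits are mutually independent, so that neither the order nor the grouping of the un-crossings yields a genuinely different UO structure, and that minimality leaves no residual freedom to split further; here I expect to re-use the local, case-by-case confluence analysis of Lemma 5 to recombine competing un-crossings, now carefully tracking the temporary excursions outside $\mathcal{G}_{UO}$ introduced by the crossing steps.

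Assembling the pieces, $G$ and $G^{\prime}$ are both minimal-with-respect-to-UO un-crossings of the same $M$, the un-crossing is canonical by the previous step, and hence $(\Phi^{\prime})^{-1}\circ\Phi$ respects feasible actions, information partitions, and the terminal map; this is the desired isomorphism $G\cong G^{\prime}$.
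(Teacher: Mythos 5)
Your reading of the hypothesis (behavioral equivalence is implicit) and your opening reductions are fine, but the proof has a genuine gap at exactly the point you flag as the ``main obstacle'': you never establish that the un-crossing from the fully minimal structure $M$ back to a minimal-with-respect-to-UO structure is canonical. That claim is not a delicate verification to be patched by re-running Lemma 5's case analysis --- it \emph{is} the entire content of the lemma, merely transported from the pair $(G,G^{\prime})$ to the fibers of $\Phi$ and $\Phi^{\prime}$. Worse, the tools you invoke cannot reach it: Lemma 4 (Newman) and Lemma 5 (weak confluence) govern \emph{forward} reduction sequences issuing from a common source; they say nothing about uniqueness of \emph{expansions} (co-reductions) terminating at a common target. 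What you would need is a confluence-type property for the inverse system generated by $\gamma^{-1}$ and $\sigma^{-1}$, and no such property is established in the paper or in your proposal; a priori, two non-isomorphic UO-minimal structures could both reduce to the same $M$ without contradicting termination and weak confluence of the forward system. Two subsidiary inaccuracies compound this. First, the passage $G\rightarrow^{\ast}M$ need not consist only of crossing ISs: a crossing IS can create Coalescing opportunities (this is precisely case 2.1 of Lemma 5's proof), so $\gamma$-steps occur downstream, hence $\gamma^{-1}$-splits on the way back, and these merge information sets so that $\Phi$ is many-to-one. Consequently $(\Phi^{\prime})^{-1}\circ\Phi$ is only a correspondence; upgrading it to a structure isomorphism is again the unproven crux.

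For comparison, the paper closes this gap by a direct argument that never leaves the pair $(G,G^{\prime})$: by \cite{blm20} and Lemma 3 the two structures have the same information sets and the same pattern of related pairs; any residual difference is then either a history simultanized in one structure but not in the other --- impossible, because the corresponding IS opportunity would be non-crossing in the other structure, contradicting its minimality --- or a relation $\mathbf{h}<\mathbf{g}$ holding in $G$ but not in $G^{\prime}$, which is ruled out by a finiteness argument: the shift-up realizing it in $G^{\prime}$ must be ``blocked'' in $G$ by some information set $\mathbf{f}$ (on pain of $G$ not being minimal), that blocking must itself be absent in $G^{\prime}$, hence blocked in $G$ by some $\mathbf{e}$, and so on, producing an infinite chain $\mathbf{g}>\mathbf{f}>\mathbf{e}>\cdots$ of distinct information sets in a finite structure. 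Something playing the role of this blocking/finiteness argument --- or a genuine confluence statement for the inverse transformations --- is what your proposal is missing.
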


\begin{proof}
Since $\gamma$ does not interfere with $\sigma$ (and consequently $\sigma_{NC}$) and $\sigma_{NC}$ does not erase any information set, $G$ and $G^{\prime}$ have the same information sets. Suppose that $G \neq G^{\prime}$ (up to isomorphisms). The only problem is $\sigma_{NC}$ and there are two possibilities. 

One is that the orderings of information sets are the same but some history which is simultanized in one structure, say $G$, is not simultanized in the other, $G^{\prime}$. But this is impossible. Because (1) the IS opportunity is still in $G^{\prime}$, and (2) since the orderings of information sets are the same, this opportunity does not cause any partial-crossing in $G^{\prime}$ (otherwise it should have already caused one in $G$), there is a non-crossing IS opportunity in $G^{\prime}$, contradictory to the fact that $G^{\prime}$ is minimal.

Another possibility is that there are two information sets $\mathbf{h}$ and $\mathbf{g}$ having different relations in $G$ and $G^{\prime}$. There are two cases:

Case 1. $\mathbf{h}$ and  $\mathbf{g}$ have some relation in $G$ but are unrelated in $G^{\prime}$. Battigalli et al. \cite{blm20}'s Theorem 1 shows that $G$ and $G^{\prime}$ can be transformed to each other through a sequence of Coalescing, IS, and their inverses. And Lemma 3 implies that the (non)existence of a relation between two information sets is preserved in the process. Hence this case  is impossible.

Case 2. Some relation that $\mathbf{h}$ and  $\mathbf{g}$ have in $G$ perishes in $G^{\prime}$. Without lose of generality, suppose that $\mathbf{h} < \mathbf{g}$ in $G$ but this does not hold in $G^{\prime}$. It means that $h \prec g$ for some $h \in \mathbf{h}$ and $g \in \mathbf{g}$ in $G$ but this connection disappears in $G^{\prime}$. The only possibility is that the part containing $g$ in $\mathbf{g}$ has been shifted up in $G^{\prime}$ to some $g^{\prime}$ with $g^{\prime} \preceq h$. Yet this shifting-up is impossible in $G$ (otherwise $G$ is not minimal) even though the IS opportunity still exists. The only reason should be that some $\mathbf{f} <\mathbf{g}$ ``blocks'' $\mathbf{g}$ in $G$, i.e., if $g$ shifts up, it would partially cross $\mathbf{f}$ and mess up UO. This problem did not happen in $G^{\prime}$, which means that some information set $\mathbf{e}$ in $G$ blocks $\mathbf{f}$. Note that $\mathbf{e} \neq \mathbf{g}$ (otherwise both $\mathbf{f} >\mathbf{g}$ and $\mathbf{f} <\mathbf{g}$, which implies that $G \notin \mathcal{G}_{UO}$, a contradiction). This process can continue, and we obtain an infinite sequence of distinct information sets $\mathbf{g} > \mathbf{f} >\mathbf{e}>...$ in $G$,  which is impossible. Hence $\mathbf{h} < \mathbf{g}$ also holds in $G^{\prime}$. The case of $\mathbf{h} \sim \mathbf{g}$ can be proved in a similar manner. 

\end{proof}

\noindent \emph{Proof of Theorem 1.} (a) $\rightarrow$ (b): Battigalli et al \cite{blm20}'s Theorem 1 shows that Coalescings and ISs (hence non-crossing ISs) preserve behavioral equivalence. If $G$ and $G^{\prime}$ are behaviorally equivalent and satisfy UO, we can iteratively apply Coalescing and non-crossing IS on them and reach two minimal structures with respect to UO. Lemma 6 implies that those two are equal (up to isomorphisms). Hence (b) holds.

(b) $\rightarrow$ (c): Logical implication.

(c) $\rightarrow$ (b): Lemma 5.

(b) $\rightarrow$ (d): Let $G^{\ast}$ be the minimal structure with respect to UO obtained from $G$ and $G^{\prime}$. Then there are $\iota_{1},...,\iota_{m}, \lambda_{1},...,\lambda_{n} \in \{\gamma, \sigma_{NC}, \gamma^{-1}, \sigma_{NC}^{-1}\}$, $m,n \in \mathbb{N}_{0}$, such that  $G^{\ast} \stackrel{iso}{\in} \iota_{1}\circ ...\circ\iota_{m}(G)$ and $G^{\ast} \stackrel{iso}{\in} \lambda_{1}\circ ...\circ\lambda_{n}(G^{\prime})$. Therefore, $G^{\prime} \stackrel{iso}{\in} \lambda_{n}^{-1} \circ ...\circ \lambda_{1}^{-1}\circ \iota_{1}\circ ...\circ\iota_{m}(G)$.

(d) $\rightarrow$ (a): Battigalli et al \cite{blm20}'s Theorem 1 shows that Coalescings and ISs (hence non-crossing ISs) preserve behavioral equivalence, and Lemmas 1 and 3 and definitions of $\gamma^{-1}$ and $\sigma_{NC}^{-1}$ show that the transformation process preserves UO. $\blacksquare$

\section{Backward Dominance Procedure and Monotonicity}\label{sec:cpa}

\subsection{Backward dominance Procedure}

The definitions here follow Perea \cite{pe14}. See Chapter 8 in Perea \cite{pe12} for a detailed discussion.

Fix $G= \langle I, \bar{H}, (A_{i},\mathbf{H}_{i})_{i \in I} \rangle$ satisfying UO. We add a von Neumann-Morgenstein utility function $v_{i}: Z \rightarrow \mathbb{R}$ for each player $i \in I$ to $G$ and obtain an \emph{extensive game} $\Gamma = \langle G, (v_{i})_{i \in I} \rangle$ based on $G$. Typical extensive games are shown in Figure \ref{fig:NUL}. Recall that $\zeta:\mathcal{S} \rightarrow Z$ is the mapping that assigns to each strategy profile the corresponding terminal. The \emph{Z-reduced static game} of an extensive game $\Gamma = \langle G, (v_{i})_{i \in I} \rangle$ is a tuple rn$_{Z}(\Gamma) : = \langle I, (\mathcal{S}_{i},u_{i})_{i \in I} \rangle$ where for each player $i \in I$, the utility function $u_{i}: \mathcal{S} \rightarrow \mathbb{R}$ is defined by $u_{i} (s) = v_{i}(\zeta (s))$ for each strategy profile $s \in \mathcal{S}$.

Let $\Gamma = \langle G, (v_{i})_{i \in I} \rangle$ be an extensive game based on $G$ and rn$_{Z}(\Gamma) = \langle I, (\mathcal{S}_{i},u_{i})_{i \in I} \rangle$. Consider an information set $\mathbf{h}_{i} \in \mathbf{H}_{i}$ of some player $i \in I$. A strategy profile $s$ is said to \emph{reach} $\mathbf{h}_{i}$ iff $\zeta (s) \in Z(\mathbf{h}_{i})$; in words, the history induced by $s$ crosses $\mathbf{h}_{i}$. We use $\mathcal{S}(\mathbf{h}_{i})$ to denote the set of all strategy profiles that reach $\mathbf{h}_{i}$, and we define $\mathcal{S}_{i} (\mathbf{h}_{i}) =$ Proj$_{\mathcal{S}_{i}} \mathcal{S}(\mathbf{h}_{i})$ and $\mathcal{S}_{-i} (\mathbf{h}_{i}) =$ Proj$_{\prod_{j \neq i}\mathcal{S}_{j}} \mathcal{S}(\mathbf{h}_{i})$ as the projections of $\mathcal{S}(\mathbf{h}_{i})$ on $\mathcal{S}_{i}$ and $\mathcal{S}_{-i} (:= \prod_{j \neq i} \mathcal{S}_{j})$, respectively. 

\begin{figure}
\centering
  \includegraphics[width=0.9\linewidth]{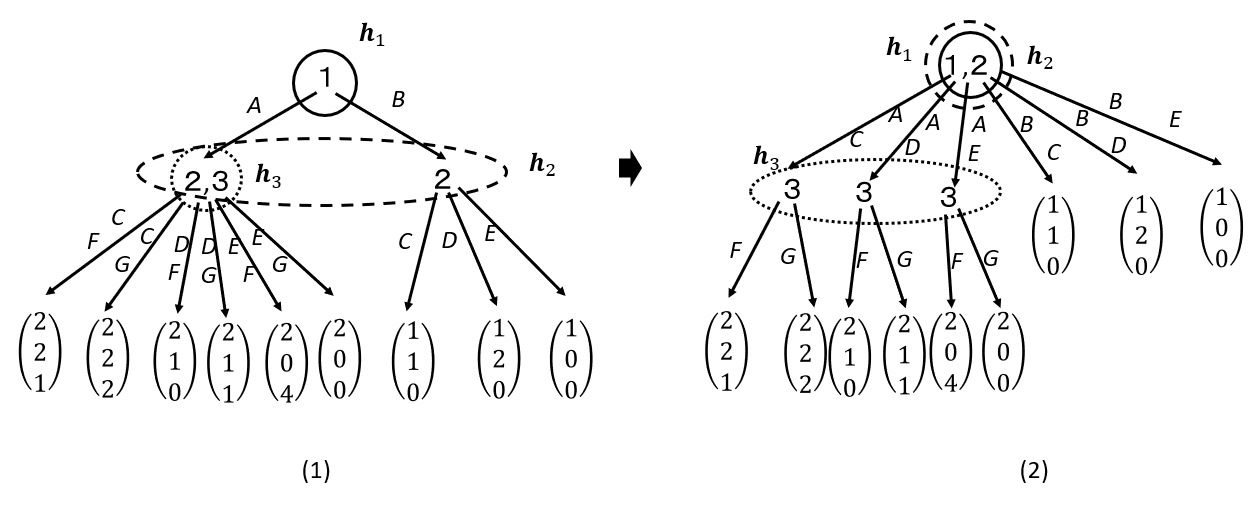}
  \caption{An invariant transformation may alter strict domination in a BD}
  \label{fig:NUL}
\end{figure}

A subset $\Psi (\mathbf{h}_{i}) \subseteq \mathcal{S}(\mathbf{h}_{i})$ is called a \emph{decision problem} iff $\Psi (\mathbf{h}_{i}) = D_{i}  \times D_{-i}$ for some $D_{i} \subseteq \mathcal{S}_{i} (\mathbf{h}_{i})$ and $D_{-i} \subseteq \mathcal{S}_{-i} (\mathbf{h}_{i})$. The intuition is that player $i$ at her information set $\mathbf{h}_{i}$ faces a problem in the decision theoretical sense: $D_{i}$ is the set of actions available and $D_{i}$ contains the states of the world which represent uncertainties to her.

Given an information set $\mathbf{h}_{i}$ of player $i$ and a decision problem $\Psi (\mathbf{h}_{i}) = D_{i}  \times D_{-i}$ at $\mathbf{h}_{i}$, a strategy $s_{i} \in D_{i}$ is \emph{strictly dominated within} $\Psi (\mathbf{h}_{i})$ iff there is a mixed strategy $\mu_{i} \in \Delta (D_{i})$ such that $u_{i}(s_{i},s_{-i}) < u_{i} (\mu_{i},s_{-i})$ for all $s_{-i} \in D_{-i}$. The set of all strictly dominated strategies within $\Psi (\mathbf{h}_{i})$ of player $i$ is denoted by $sd_{i}(\Psi (\mathbf{h}_{i}))$. We define $sd(\Psi (\mathbf{h}_{i})) =sd_{i}(\Psi (\mathbf{h}_{i})) \times D_{-i}$, i.e., the set of strategy profiles in $\Psi (\mathbf{h}_{i})$ whose $i$-th component is a strictly dominated strategy of player $i$ within $\Psi (\mathbf{h}_{i})$. 

The \emph{backward dominance procedure} (BD) is defined by induction as follows: 
\begin{enumerate}\addtocounter{enumi}{-1}
  \item  \textbf{Inductive base}. For every information set $\mathbf{h}_{i} \in \mathbf{H}$, $\Psi^{0}(\mathbf{h}_{i}) := \mathcal{S}(\mathbf{h}_{i})$.
  \item  \textbf{Inductive step}. Let $n \geq 1$. Suppose that $\Psi^{n-1}(\mathbf{h}_{i})$ has been defined for all $\mathbf{h}_{i} \in \mathbf{H}$. For each $\mathbf{h}_{i} \in \mathbf{H}$, 
  \[\Psi^{n}(\mathbf{h}_{i}) :=\Psi^{n-1}(\mathbf{h}_{i})  \setminus \bigcup_{\mathbf{g}_{j} \gtrsim \mathbf{h}_{i}}sd(\Psi^{n-1}(\mathbf{g}_{j})) \]
\end{enumerate}
For each $i \in I$, a strategy $s_{i} \in \mathcal{S}_{i}$ \emph{survives} the backward dominance procedure iff there is some $s_{-i} \in \mathcal{S}_{-i}$ such that $(s_{i}, s_{-i})\in \Psi^{n}(\emptyset)$ for all $n \in \mathbb{N}_{0}$.

\medskip

For example, the BD of the game in Figure \ref{fig:NUL} (1) is given in Figure \ref{fig:NUL1}. Here, $\mathbf{h}_{2}$ and $\mathbf{h}_{3}$ are simultaneous and both follow $\mathbf{h}_{1}$. At the first round, $B$ is strictly dominated by $A$ for player $1$ at $\mathbf{h}_{1}$ and $E$ by $C$ for player 2 at $\mathbf{h}_{2}$. Also, since $\mathbf{h}_{2}$ follows $\mathbf{h}_{1}$ and is simultaneous with $\mathbf{h}_{3}$, $E$ is eliminated at $\mathbf{h}_{1}$ and $\mathbf{h}_{3}$. Note that since $\mathbf{h}_{1}$ is before $\mathbf{h}_{2}$ and $\mathbf{h}_{3}$, $B$ cannot be eliminated at the latter. We obtain $\Psi^{1}$ in Figure \ref{fig:NUL1} (2). Now $F$ becomes strictly dominated by $G$ for player 3 at $\mathbf{h}_{3}$ and is eliminated there; also, since $\mathbf{h}_{3}$ follows $\mathbf{h}_{1}$ and is simultaneous with $\mathbf{h}_{2}$, we eliminate $F$ at $\mathbf{h}_{1}$ and $\mathbf{h}_{2}$. Here we obtain $\Psi^{2}$ in Figure \ref{fig:NUL1} (3). The BD halts here and player 1's strategy $A$, player 2's $C$ and $D$, and player 3's $G$ survive.

\begin{figure}
\centering
  \includegraphics[width=\linewidth]{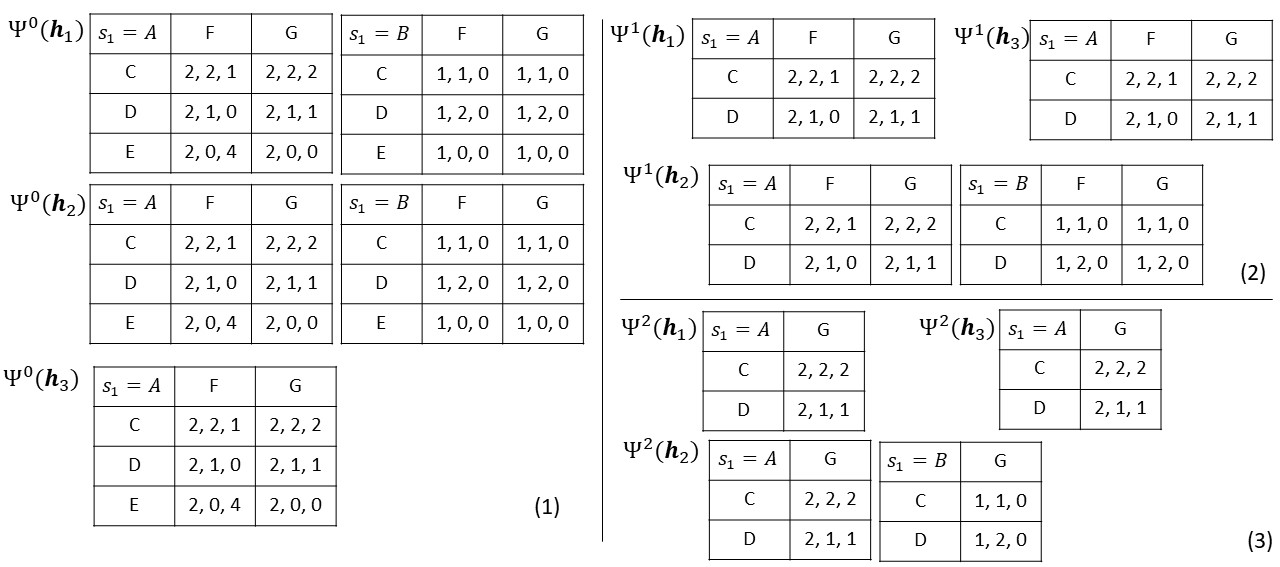}
  \caption{The backward dominance procedure of the game in Figure \ref{fig:NUL} (1)}
  \label{fig:NUL1}
\end{figure}

As noted in Section 6.2 of Perea \cite{pe14}, a transformation may alter the strategies that survive the BD. For example, the game in Figure \ref{fig:NUL} (2) is obtained from the one in Figure \ref{fig:NUL} (1) through a non-crossing IS on $(\emptyset, \mathbf{h}_{2})$. Yet Player 3's strategy $F$ survives the BD in (2), even though it does not in (1). The reason is that, due to the shifting-up of $\mathbf{h}_{2}$, player 2's strictly dominated strategy $E$ cannot be eliminated at $\mathbf{h}_{3}$ and $F$ loses its base to be strictly dominated by $G$.

\medskip
A compactification $\tau$ is called \emph{monotonic} iff every strictly dominated strategy in the original game is still strictly dominated after $\tau$ is applied. The example above suggests that a monotonic compactification has to preserve simultaneity and to weakly preserve following: for each information sets $\mathbf{h}, \mathbf{g} \in \mathbf{H}$, if  $\mathbf{h} \sim \mathbf{g}$, then  $\tau(\mathbf{h}) \sim \tau (\mathbf{g})$; if $\mathbf{h} < \mathbf{g}$, then $\tau(\mathbf{h}) \lesssim\tau (\mathbf{g})$. The next subsection will construct an monotonic compactification based on Coalescing and IS.

\subsection{Immediate compactifications}

We start with some definitions.

\begin{definition}
\textbf{(Immediate control)} . Let $i \in I$ and $\mathbf{h}_{i}, \mathbf{h}_{i}^{\prime} \in \mathbf{H}_{i}$ with $\mathbf{h}_{i} \ll_{i} \mathbf{h}_{i}^{\prime}$. We say that $\mathbf{h}_{i}^{\prime}$ \emph{immediately controls} $\mathbf{h}_{i}^{\prime}$, denoted by $\mathbf{h}_{i} \ll_{i}^{o} \mathbf{h}_{i}^{\prime}$, iff each history in  $\mathbf{h}_{i}^{\prime}$ is an immediate successor of a history in $\mathbf{h}_{i}$

\end{definition}

\begin{definition}
\textbf{(Immediate dictation)}  Let $h \in H$ and $\mathbf{d}_{i} \subseteq \mathbf{h}_{i}$ ($\mathbf{h}_{i} \in \mathbf{H}_{i}$ for some $i \in I$)such that $h \lessdot_{i} \mathbf{d}_{i}$. We say that $\mathbf{d}_{i}$ \emph{immediately dictates} $h$, denoted by $h \lessdot^{o}_{i} \mathbf{d}_{i}$,  iff each history in $\mathbf{d}_{i}$ is an immediate successor of $h$.
\end{definition}

We also call the respective opportunities in $G$ \emph{immediate}. Both definitions formulate immediacy by prohibiting any interpolation between the two components in an opportunity. Note that in Definition 1, we still allow some information set $\mathbf{h}$ to satisfy $\mathbf{h}_{i} < \mathbf{h} < \mathbf{h}_{i}^{\prime}$ provided that $\mathbf{h}$ can be partitioned by them. For example, $\mathbf{h}_{21} \ll_{2}^{o} \mathbf{h}_{22}$ holds in Figure \ref{fig:P5t} (1) even though $\mathbf{h}_{21} < \mathbf{h}_{3} < \mathbf{h}_{22}$. But $\mathbf{h}_{21} \ll_{2}^{o} \mathbf{h}_{22}$ does not hold in Figure \ref{fig:P5t} (2). The difference is that in the latter, $\mathbf{h}_{3}$ has a ``free'' history $RA$ between $\mathbf{h}_{21}$ or $\mathbf{h}_{22}$. In Figure \ref{fig:P5t} (2), $(RA, \{RAC,RAD\})$ with $\{RAC, RAD\} \subseteq \mathbf{h}_{22}$ is an immediate IS opportunity.

\begin{figure}
\centering
  \includegraphics[width=\linewidth]{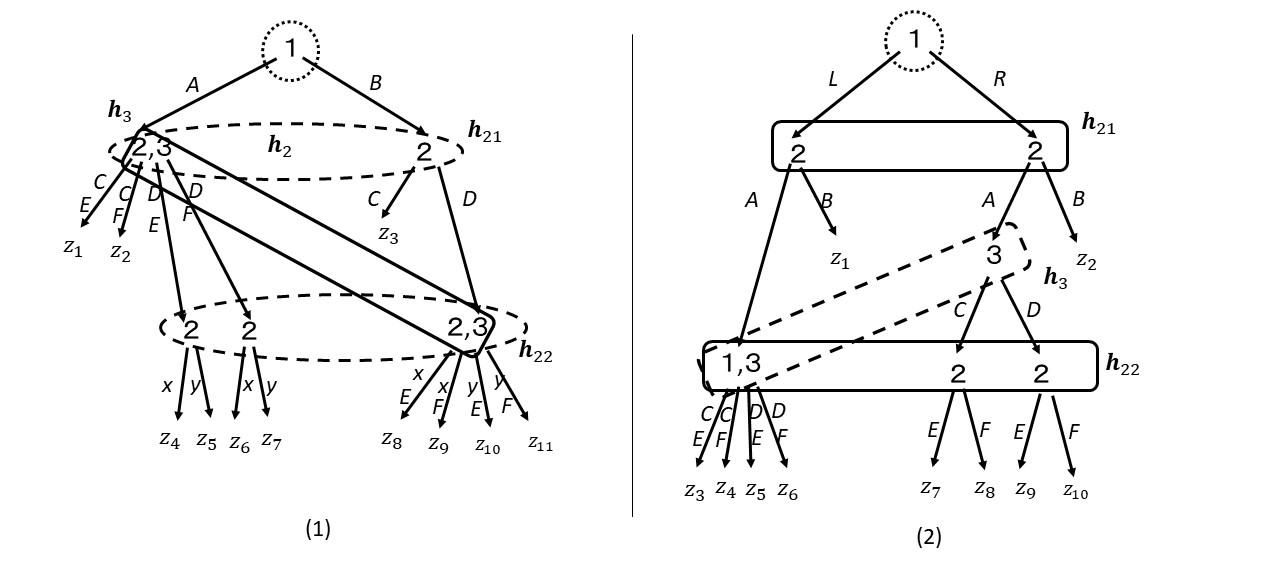}
  \caption{Illustrations of the condition in Definition 2}
  \label{fig:P5t}
\end{figure}
Note that Definition 2 does not require non-crossing. Yet this does not destroy UO in an complete immediate compactification below, as we will show in Lemma 8.

Conditions in Definitions 1 and 2 may seem too strict. Section 5.1 will discuss their necessity for preserving weak following.

\medskip
We still need some auxiliary concepts. Let $h, h^{\prime} \in H$. We say that $h$ is \emph{transitively simultaneous with} $h^{\prime}$, denoted by $h \sim^{\ast} h^{\prime}$, iff there are $h_{0},...,h_{k} \in H$, $k \geq 1$, such that (1) $h_{0} = h$ and $h_{k} = h^{\prime}$, and (2) for each $t = 0,...,k-1$, $\{h_{t},h_{t+1}\} \subseteq \mathbf{h}^{k}$ for some $\mathbf{h}^{k} \in \mathbf{H}$. Two information sets $\mathbf{h}$ and $\mathbf{h}^{\prime}$ are \emph{transitively simultaneous}, denoted by $\mathbf{h} \sim^{\ast} \mathbf{h}^{\prime}$, iff $h \sim^{\ast} h^{\prime}$ for some $h \in \mathbf{h}$ and $h^{\prime} \in \mathbf{h}^{\prime}$. Transitive simultaneity is an equivalence relation on $\mathbf{H}$.

\begin{definition}
\textbf{(Immediate compactification opportunity)} Let $\mathbf{h}_{i_{1}},...,\mathbf{h}_{i_{m}}$, $\mathbf{h}^{\prime}_{i_{1}},...,\mathbf{h}^{\prime}_{i_{m}}$, $ \mathbf{g}_{j_{1}},...,\mathbf{g}_{j_{n}}\in \mathbf{H}$, $\mathbf{d}_{j_{1}}\subseteq \mathbf{g}_{j_{1}},...,\mathbf{d}_{j_{n}} \subseteq \mathbf{g}_{j_{n}}$, and $h_{1},...,h_{n} \in H$, $m,n \geq 0$ and at least one is positive. A tuple $\Theta =\langle (\mathbf{h}_{i_{1}},\mathbf{h}_{i_{1}}^{\prime}),..., (\mathbf{h}_{i_{m}},\mathbf{h}_{i_{m}}^{\prime})$; $(h_{1},\mathbf{d}_{j_{1}}),...,(h_{n},\mathbf{d}_{j_{n}})\rangle$is called an \emph{immediate compactification opportunity} (ICO) iff the following three conditions are satisfied:

\begin{enumerate}[label=(\roman*)]
\item $\{\mathbf{h}_{i_{1}}^{\prime},... ,\mathbf{h}_{i_{m}}^{\prime},\mathbf{g}_{j_{1}},...,\mathbf{g}_{j_{n}}\}$ are transitively simultaneous,

\item $\{\mathbf{h}_{i_{1}}^{\prime},... ,\mathbf{h}_{i_{m}}^{\prime},\mathbf{d}_{j_{1}},...,\mathbf{d}_{j_{n}}\}$ are distinct,

\item $\mathbf{h}_{i_{t}} \ll_{i}^{o} \mathbf{h}_{i_{t}}^{\prime}$ for $t = 1,...,m$ and $h_{t} \lessdot^{o}_{i_{t}} \mathbf{d}_{i_{t}}$ for $t = 1,...,n$.

\end{enumerate}
\end{definition}

Recall that the word ``distinct'' is used in the sense of information sets. That is, $\mathbf{b}_{i}, \mathbf{d}_{j}$ with $\mathbf{b}_{i} \subseteq \mathbf{h}_{i} \in \mathbf{H}_{i}, \mathbf{b}_{j} \subseteq \mathbf{g}_{j} \in \mathbf{H}_{j}$  are distinct iff they contain different histories or $i \neq j$. Also, note that some $\mathbf{d}_{j_{t}}$ and $\mathbf{d}_{j_{s}}$ may belong to the the same information set $\mathbf{g}_{j}$ (hence $j_{t} = j_{s} = j$).

Given an ICO $\Theta =\langle (\mathbf{h}_{i_{1}},\mathbf{h}_{i_{1}}^{\prime}),..., (\mathbf{h}_{i_{m}},\mathbf{h}_{i_{m}}^{\prime})$; $(h_{1},\mathbf{d}_{j_{1}}),...,(h_{n},\mathbf{d}_{j_{n}})\rangle$ with $\mathbf{d}_{j_{t}} \subseteq \mathbf{g}_{j_{t}}$ for $t = 1,...,n$, we define $\mathbf{H}(\Theta) = \{\mathbf{h}_{i_{1}}^{\prime},... ,\mathbf{h}_{i_{m}}^{\prime},$ $\mathbf{d}_{j_{1}},...,\mathbf{d}_{j_{n}}\}$. An information set $\mathbf{f}_{i}$ is said to \emph{participate in} $\Theta$ iff a subset (proper or improper) of $\mathbf{f}_{i}$ is in $\mathbf{H}(\Theta)$.

\begin{definition}
\textbf{(Complete ICO)} Consider an ICO $\Theta =\langle (\mathbf{h}_{i_{1}},\mathbf{h}_{i_{1}}^{\prime}),..., (\mathbf{h}_{i_{m}},\mathbf{h}_{i_{m}}^{\prime})$; $(h_{1},\mathbf{d}_{j_{1}}),$ $...,(h_{n},$ $\mathbf{d}_{j_{n}})\rangle$ with $\mathbf{d}_{j_{t}} \subseteq \mathbf{g}_{j_{t}}$ for $t = 1,...,n$. $\Theta$ is said to be \emph{complete} iff the following two conditions are satisfied:

\begin{enumerate}[label=\alph*)]
\item Every information set that is transitively simultaneous with some information set participating in $\Theta$ also participates in $\Theta$, 

\item For each $\mathbf{f}_{i}, \mathbf{e}_{j}$ participating in $\Theta$ with $\mathbf{f}_{i} \cap \mathbf{e}_{j} \neq \emptyset$, there are some $\mathbf{b}_{i}, \mathbf{c}_{j} \in \mathbf{H}(\Theta)$ with $\mathbf{b}_{i} \subseteq \mathbf{f}_{i}$ and $\mathbf{c}_{j} \subseteq \mathbf{e}_{j}$ such that $\mathbf{f}_{i} \cap \mathbf{e}_{j} = \mathbf{b}_{i} \cap \mathbf{c}_{j}$.

\end{enumerate}
\end{definition}

Condition (b) in Definition 4 requires that the overlapping of two participants of $\Theta$ should move as a whole. We will discuss its necessity in Section 5.1.

\smallskip

Let $\Theta =\langle (\mathbf{h}_{i_{1}},\mathbf{h}_{i_{1}}^{\prime}),..., (\mathbf{h}_{i_{m}},\mathbf{h}_{i_{m}}^{\prime})$; $(h_{1},\mathbf{d}_{j_{1}}),...,(h_{n},\mathbf{d}_{j_{n}})\rangle$ be a complete ICO in $G$. We define the \emph{immediate compactification} over $G$ on $\Theta$ by
\[\tau (G, \Theta) : =\gamma (...\gamma (...\sigma (\sigma (G,h_{1},\mathbf{d}_{j_{1}}), h_{2}, \mathbf{d}_{j_{2}})...)\mathbf{h}_{i_{1}}, \mathbf{h}_{i_{1}}^{\prime})...\mathbf{h}_{i_{m}}, \mathbf{h}_{i_{m}}^{\prime})\]

\medskip 
The well-definedness of $\tau (G, \Theta)$ follows from Lemma 5..

For example, in the structure $G$ in Figure \ref{fig:ICOT} (1), $\Theta =\langle (BC, \{BCx,BCy\})$,  $(BC, $ $\{BCx,BCy\}) \rangle$, with the first $\{BCx,BCy\}$ belonging to $\mathbf{h}_{5}$ and the second to $\mathbf{h}_{4}$, is a complete ICO. The compactification is decomposed and shown in Figure \ref{fig:ICOT} (2) and (3). The structure in Figure \ref{fig:ICOT} (3) is $\tau (G, \Theta)$. 

Here are two points. First, since we do not require non-crossing, UO may be broken \emph{en route} as in (2). Nevertheless, as will be shown in Lemma 8, finally the UO will be restored. Hence, Theorem 1 implies that there is always a UO-preserving process. For the game in Figure \ref{fig:ICOT} (1), we can first synchronize $\{BCx,BCy\}$ belonging to $\mathbf{h}_{4}$ with $BC$ and then the one to $\mathbf{h}_{5}$. Second, in Figure \ref{fig:ICOT} (1), we can add $(AD, \{ADF,ADG\})$ to $\Theta$ and obtain another complete ICO. In general, given an equivalence class with respect to $\sim^{\ast}$, the complete ICO on it, if exists, may not be unique. We can define a maximal complete ICO which is a complete ICO and exhausts every compactification opportunity. But it is not necessary here.

\begin{figure}
\centering
  \includegraphics[width=\linewidth]{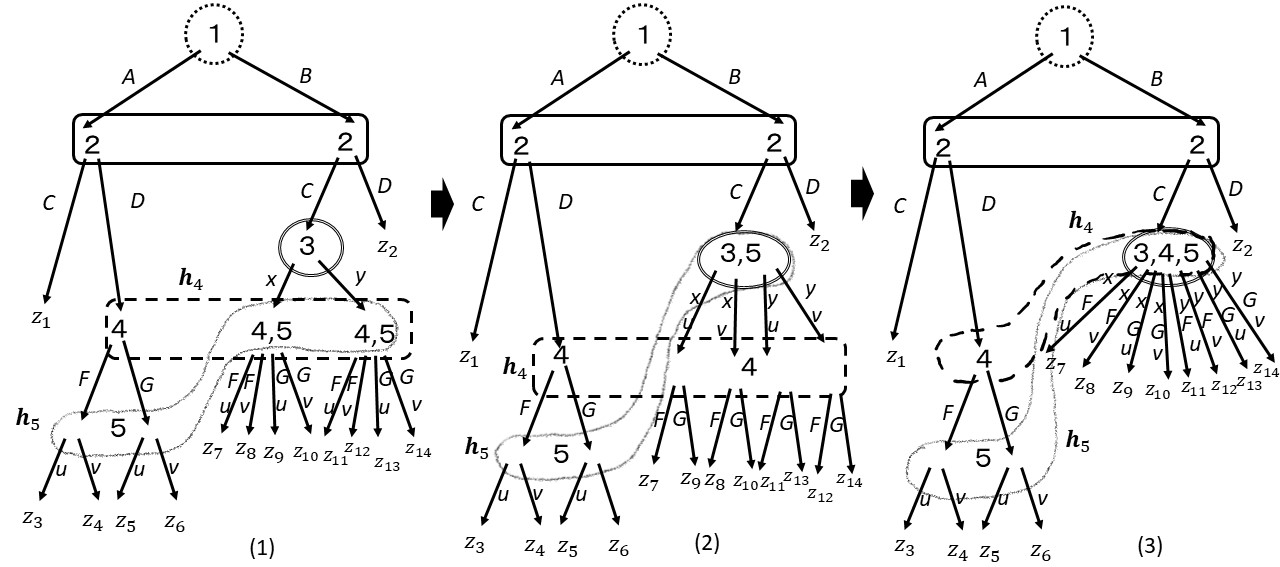}
  \caption{An ICO and the immediate compactification based on it}
  \label{fig:ICOT}
\end{figure}

\begin{lemma}
\textbf{(Preservation of simultaneity)} Let $\Theta$ a complete ICO in $G$ and $\mathbf{h}_{i}, \mathbf{g}_{j}$ be two information sets in $G$. If $\mathbf{h}_{i} \sim \mathbf{g}_{j}$ in $G$, then $\tau(\mathbf{h}_{i}) \sim \tau(\mathbf{g}_{j})$  in $\tau(G, \Theta)$.
\end{lemma}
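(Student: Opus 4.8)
The plan is to fix one history $h \in \mathbf{h}_i \cap \mathbf{g}_j$ witnessing $\mathbf{h}_i \sim \mathbf{g}_j$ and to track its image through the compactification, showing that the image of $h$ computed ``inside $\mathbf{h}_i$'' and the image computed ``inside $\mathbf{g}_j$'' coincide in $\tau(G,\Theta)$; any such common image history then witnesses $\tau(\mathbf{h}_i) \sim \tau(\mathbf{g}_j)$. We may assume $\mathbf{h}_i \neq \mathbf{g}_j$, for otherwise the claim is trivial, and since the information sets of a single player are disjoint, simultaneity between distinct information sets forces $i \neq j$. Now $\mathbf{h}_i \sim \mathbf{g}_j$ gives $\mathbf{h}_i \sim^{\ast} \mathbf{g}_j$, so completeness condition (a) yields the dichotomy that \emph{either} both $\mathbf{h}_i$ and $\mathbf{g}_j$ participate in $\Theta$ \emph{or} neither does: if one participated, the other, being transitively simultaneous with a participant, would participate too. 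I would treat these two cases separately.

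If neither participates, then neither $\mathbf{h}_i$ nor $\mathbf{g}_j$ is a mover or a sub-mover, so neither information set is shifted up for its owner. Each constituent $\gamma$ or $\sigma$ then relabels $h$ in a way that depends only on the position of $h$ relative to the opportunity (before or unrelated, strictly between the mover and its target, or following) and not on which information set is being considered. Consequently the history correspondences send $h$ to the same set of histories in $\tau(G,\Theta)$ — a set of replicas when $h$ lies strictly between a mover and its target, a singleton otherwise — and, since an information set containing a replicated history inherits \emph{all} of its replicas, this set is contained in both $\tau(\mathbf{h}_i)$ and $\tau(\mathbf{g}_j)$. Hence $\tau(\mathbf{h}_i) \cap \tau(\mathbf{g}_j) \neq \emptyset$.

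If both participate, I would invoke completeness condition (b) with $\mathbf{f}_i = \mathbf{h}_i$ and $\mathbf{e}_j = \mathbf{g}_j$ (their overlap being nonempty), obtaining $\mathbf{b}_i, \mathbf{c}_j \in \mathbf{H}(\Theta)$ with $\mathbf{b}_i \subseteq \mathbf{h}_i$, $\mathbf{c}_j \subseteq \mathbf{g}_j$, and $\mathbf{h}_i \cap \mathbf{g}_j = \mathbf{b}_i \cap \mathbf{c}_j$; thus the witness satisfies $h \in \mathbf{b}_i \cap \mathbf{c}_j$. Each of $\mathbf{b}_i$ and $\mathbf{c}_j$ is either a mover $\mathbf{h}^{\prime}_{i_t}$ (with $\mathbf{h}_{i_t} \ll^o_{i_t} \mathbf{h}^{\prime}_{i_t}$) or a sub-mover $\mathbf{d}_{j_t}$ (with $h_t \lessdot^o_{j_t} \mathbf{d}_{j_t}$). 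In either case immediacy (Definitions 1 and 2) guarantees that $h$ is shifted up by exactly one level: the Coalescing image $\gamma(h;\mathbf{h}^{\prime}_{i_t})$ is the immediate predecessor of $h$ lying in $\mathbf{h}_{i_t}$, while the IS image $\sigma(h;\mathbf{g}_{j_t}) = \{h_t\}$ synchronizes $h$ with $h_t$, which by $h_t \lessdot^o_{j_t}\mathbf{d}_{j_t}$ is likewise the immediate predecessor of $h$. Because a history has a \emph{unique} immediate predecessor $f$ in the tree, the shift effected by $\mathbf{b}_i$ and the shift effected by $\mathbf{c}_j$ both carry $h$ onto the very same $f$. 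Thus the images of $h$ in $\tau(\mathbf{h}_i)$ and in $\tau(\mathbf{g}_j)$ both descend from $f$.

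The main obstacle is the final step: I must check that the two views of $h$ really produce the \emph{same} history in the fully transformed structure, not merely the same $f$ before the remaining constituent operations of $\tau$ are applied, since $f$ may itself be relabeled or replicated by those operations. Here the argument is that once $h$ has been synchronized with $f$, every subsequent relabeling acts on the single history $f$ and is therefore player-independent, so it cannot separate the two views; and whenever $f$ is itself shared by two participants, condition (b) applies again and forces that shared portion to move as a whole. Order-independence (Lemma 5) lets me fix a convenient evaluation order — all ISs first, then all Coalescings — so that this bookkeeping is carried out one opportunity at a time. Combining the two cases gives $\tau(\mathbf{h}_i) \sim \tau(\mathbf{g}_j)$ in $\tau(G,\Theta)$, as required.
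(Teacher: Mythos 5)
Your proposal is correct and takes essentially the same approach as the paper's proof: reduce via completeness condition (a) to the case where both information sets participate in $\Theta$, then use immediacy to show that the two shifted images of a witness history $h \in \mathbf{h}_i \cap \mathbf{g}_j$ coincide. The only difference is presentational: the paper splits into three cases by role (Coalescing--Coalescing, handled via Lemma 3 and a contradiction with immediacy; Coalescing--IS and IS--IS, handled via uniqueness of the immediate predecessor), whereas you unify all cases with the immediate-predecessor argument and make explicit the non-participating case and the use of completeness condition (b), both of which the paper leaves implicit.
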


\begin{proof}
We only need to consider the case when both $\mathbf{h}_{i}$ and  $\mathbf{g}_{j}$ participate in $\Theta$. According to their roles, we discuss the following three cases.

Case 1. Both participate in Coalescings, i.e., $(\mathbf{h}_{i}^{\prime},\mathbf{h}_{i}), (\mathbf{g}_{j}^{\prime},\mathbf{g}_{j}) \in \Theta$ for some $\mathbf{h}_{i}^{\prime} \in \mathbf{H}_{i}$ and $\mathbf{g}_{j}^{\prime} \in \mathbf{H}_{j}$. We show that $\mathbf{h}_{i}^{\prime} \sim \mathbf{g}_{j}^{\prime}$. The simultaneity implies that there is some history $h \in \mathbf{h}_{i} \cap \mathbf{g}_{j}$. By Lemma 3, $\tau (h, \mathbf{h}_{i})$ and $\tau (h, \mathbf{g}_{j})$ are also in a chain. Suppose that $\tau (h, \mathbf{h}_{i}) < \tau (h, \mathbf{g}_{j})$. It means that some history in $g_{j}^{\prime}$ is between $\mathbf{h}_{i}^{\prime}$ and $\mathbf{h}_{i}$, contradictory to Definition 1. So $\tau (h, \mathbf{h}_{i}) < \tau (h, \mathbf{g}_{j})$ is impossible. Similarly, we can show that $\tau (h, \mathbf{h}_{i}) > \tau (h, \mathbf{g}_{j})$ does not hold. Hence we have $\tau (h, \mathbf{h}_{i}) = \tau (h, \mathbf{g}_{j})$, i.e., $\mathbf{h}_{i}^{\prime} \sim \mathbf{g}_{j}^{\prime}$.

Case 2. One participates in a Coalescing and the other in an IS. Without loss of generality, assume that there are $h \in H$, $\mathbf{h}_{i}^{\prime} \in \mathbf{H}_{i}$ and $\mathbf{d}_{j} \subseteq \mathbf{g}_{j}$ such that $(\mathbf{h}_{i}^{\prime},\mathbf{h}_{i}), (h,\mathbf{d}_{j}) \in \Theta$ and there is some $g \in \mathbf{h}_{i} \cap \mathbf{d}_{j}$. Therefore, both $h$ and some history in $h^{\prime} \in \mathbf{h}_{i}^{\prime}$ are predecessors of $g$. It can be seen that $h = h^{\prime}$, since otherwise the immediacy of $(\mathbf{h}_{i}^{\prime},\mathbf{h}_{i})$ or $(h,\mathbf{d}_{j})$ would be violated. We have shown that $\tau (\mathbf{h}_{i}) \sim \tau (\mathbf{g}_{j})$.

Case 3. Both participate in ISs, i.e., $(h, \mathbf{b}_{i}), (g, \mathbf{d}_{j}) \in \Theta$ with some $h, g \in H$, $\mathbf{b}_{i} \subseteq \mathbf{h}_{i}, \mathbf{d}_{j} \subseteq \mathbf{g}_{j}$, and there is some $e \in \mathbf{b}_{i} \cap \mathbf{d}_{j}$. Since both $h$ and $g$ are predecessors of $e$, as in Case 2, we have $h = g$. 
\end{proof}

\begin{lemma}
\textbf{(Preservation of UO)}  Let $\Theta$ be a complete ICO of $G$. If $G$ satisfies UO, so does $\tau (G, \Theta)$.
\end{lemma}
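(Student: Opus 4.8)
The plan is to reduce the claim to the existence of a single order in which the constituent transformations of $\tau(G,\Theta)$ can be carried out while never leaving $\mathcal{G}_{UO}$. Recall that $\tau(G,\Theta)$ is the composition of the $n$ Interchange/Simultanizings $\sigma(\cdot; h_{t},\mathbf{d}_{j_{t}})$ and the $m$ Coalescings $\gamma(\cdot; \mathbf{h}_{i_{t}},\mathbf{h}_{i_{t}}^{\prime})$ recorded in $\Theta$. By Lemma 5 and the remark following it (order-independence, valid also in $(\mathcal{G},\{\gamma,\sigma\})$), the value of this composition is independent of the order in which these moves are applied. A Coalescing keeps us in $\mathcal{G}_{UO}$ by Lemma 1, and an IS does so whenever it is non-crossing at the moment it is applied, by Lemma 2. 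Hence it suffices to exhibit one order of the $m+n$ moves in which every IS is non-crossing when it is performed: all intermediate structures then lie in $\mathcal{G}_{UO}$, and in particular so does the endpoint, which Lemma 5 identifies with $\tau(G,\Theta)$.

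The engine of the construction is that, since $G\in\mathcal{G}_{UO}$, the only information sets a constituent IS $(h_{t},\mathbf{d}_{j_{t}})$, $\mathbf{d}_{j_{t}}\subseteq\mathbf{g}_{j_{t}}$, can cross are ones \emph{simultaneous} with $\mathbf{d}_{j_{t}}$. Indeed, a crossing witness $\mathbf{f}$ must satisfy $\mathbf{f}<g_{1}$ or $g_{1}\in\mathbf{f}$ for some $g_{1}\in\mathbf{d}_{j_{t}}$, together with $\mathbf{f}>g_{2}$ for some $g_{2}\in\mathbf{g}_{j_{t}}\setminus\mathbf{d}_{j_{t}}$. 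The alternative $\mathbf{f}<g_{1}$ gives $\mathbf{f}<\mathbf{g}_{j_{t}}$, while $\mathbf{f}>g_{2}$ gives $\mathbf{f}>\mathbf{g}_{j_{t}}$; together these violate UO already in $G$, which is excluded. Thus the crossing forces $g_{1}\in\mathbf{f}$, i.e.\ $\mathbf{f}\sim\mathbf{g}_{j_{t}}$, so $\mathbf{f}$ is transitively simultaneous with the participant $\mathbf{g}_{j_{t}}$ and, by completeness condition (a), participates in $\Theta$ itself; by completeness condition (b) the histories $\mathbf{f}$ and $\mathbf{g}_{j_{t}}$ share are an overlap of two members of $\mathbf{H}(\Theta)$ and are therefore scheduled to move together. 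In particular the offending part $\mathbf{g}_{j_{t}}\setminus\mathbf{d}_{j_{t}}$ is itself covered by sub-movers of $\Theta$ that will also be shifted up.

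I would then schedule the moves so that, before synchronizing any $\mathbf{d}_{j_{t}}$, every other piece of its host $\mathbf{g}_{j_{t}}$ and every simultaneous partner supplied by the previous paragraph has already been shifted up to the same level; since (by immediacy in Definition 2) each participant rises by exactly one level and the whole transitive-simultaneity class participates (condition (a)) with overlaps moving as blocks (condition (b)), no part of any information set is ever stranded below a piece that has already moved, so each step is genuinely a Coalescing or a non-crossing IS. Lemmas 1 and 2 keep every intermediate structure in $\mathcal{G}_{UO}$, and Lemma 5 gives $\tau(G,\Theta)\in\mathcal{G}_{UO}$. The main obstacle is the scheduling itself: one must verify that the ``move $\mathbf{f}$ before $\mathbf{d}$'' dependencies are well founded---no cyclic chain of such dependencies can arise---which is exactly where UO of $G$ and the single-level nature of immediate shifts are used, and one must treat information sets split among several sub-movers $\mathbf{d}_{j_{s}}\subseteq\mathbf{g}_{j}$ so that $\mathbf{g}_{j}$ is reassembled consistently. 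As a cross-check I would also run the contrapositive in the style of Lemma 6: a surviving crossing in $\tau(G,\Theta)$ would, through Lemma 3 and the simultaneity-preservation of Lemma 7, exhibit a simultaneous partner that failed to move, contradicting completeness (a).
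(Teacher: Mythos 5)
There is a genuine gap, and it sits exactly where you flag ``the main obstacle'': the existence of a schedule of the $m+n$ constituent moves in which every IS is non-crossing at the moment it is applied is asserted, not proved. Under your reduction, that existence claim \emph{is} the lemma; everything else (Lemmas 1, 2 and order-independence) is bookkeeping. The difficulty is not routine: your witness analysis is carried out in $G$, but non-crossingness must be checked in the \emph{intermediate} structures, where earlier moves of the schedule have already created new precedence relations between participants. This is precisely what the paper's example in Figure \ref{fig:ICOT} illustrates: synchronizing the sub-mover belonging to $\mathbf{h}_{5}$ first breaks UO \emph{en route}, while the opposite order does not; so the ``move $\mathbf{f}$ before $\mathbf{d}$'' dependencies are dynamic, and their acyclicity for an arbitrary complete ICO is exactly what must be established. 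You also cannot shortcut this via Theorem 1, since Theorem 1 applies only to pairs of structures already known to lie in $\mathcal{G}_{UO}$, i.e., it presupposes the very conclusion $\tau(G,\Theta) \in \mathcal{G}_{UO}$. (A smaller point: the order-independence you invoke is not the statement of Lemma 5 plus Newman's lemma, which yields unique \emph{endpoints}, but the diamond-type completion inside the proof of Lemma 5, since $\tau(G,\Theta)$ need not be minimal; the paper glosses this the same way when asserting well-definedness of $\tau$, so it is tolerable, but it should be said.)

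The paper's proof needs no scheduling at all: it argues directly at the endpoint. Suppose $\tau(\mathbf{h}_{i}) < \tau(\mathbf{g}_{j})$ and $\tau(\mathbf{h}_{i}) > \tau(\mathbf{g}_{j})$ hold in $\tau(G,\Theta)$; by Lemma 3 the pair is related in $G$, and without loss of generality $\mathbf{h}_{i} \lesssim \mathbf{g}_{j}$ there. The offending relation $\tau(\mathbf{h}_{i}) > \tau(\mathbf{g}_{j})$ can then arise in only two ways: from a history in $\mathbf{h}_{i} \cap \mathbf{g}_{j}$ being shifted above $\tau(\mathbf{h}_{i})$ --- excluded because completeness condition (b) together with Lemma 7 forces the overlap to remain an overlap --- or from a history of $\mathbf{g}_{j}$ strictly following $\mathbf{h}_{i}$ being shifted above $\mathbf{h}_{i}$ --- excluded because a history of $\mathbf{h}_{i}$ would then interpolate in the opportunity involving $\mathbf{g}_{j}$, contradicting immediacy (Definitions 1 and 2). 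Your closing ``cross-check'' paragraph gestures at this argument but covers only the first (simultaneity) case and omits the second, which is where immediacy does its work. Completing that two-case contrapositive argument, rather than the scheduling construction, is the shortest way to repair your proof; your observation that crossing witnesses must be simultaneous participants is correct and is essentially the ingredient feeding the first case.
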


\begin{proof}
Suppose that there are $\mathbf{h}_{i}, \mathbf{g}_{j}$ in $G$ such that $\tau (\mathbf{h}_{i}) < \tau (\mathbf{g}_{j})$ and $\tau (\mathbf{h}_{i}) > \tau (\mathbf{g}_{j})$ in $\tau (G, \Theta)$. Due to Lemma 3, $\mathbf{h}_{i}$ and $\mathbf{g}_{j}$ should have some relation(s) in $G$ which does (do) not violate UO. Without loss of generality, we assume that $\mathbf{h}_{i} \lesssim \mathbf{g}_{j}$, i.e., $\mathbf{h}_{i} > \mathbf{g}_{j}$ does not hold in $G$ due to UO. According to the cause of $\tau (\mathbf{h}_{i}) > \tau (\mathbf{g}_{j})$, we discuss the following cases:

Case 1. For some $h \in \mathbf{h}_{i} \cap \mathbf{g}_{j}$, $\tau (h, \mathbf{g}_{j}) < \tau(\mathbf{h}_{i})$. It follows that $\mathbf{g}_{j}$ participates in $\Theta$, and, due to the completeness of $\Theta$, $\mathbf{h}_{i} \in \Theta$. Since  the overlapping part of $\mathbf{h}_{i}$ and $\mathbf{g}_{j}$ move together, it follows from Lemma 7 that every history in the overlapping is in $\tau(\mathbf{h}_{i}) \cap \tau (\mathbf{g}_{i})$, a contradiction.

Case 2. For some history $h \in \mathbf{g}_{j}$, $\mathbf{h}_{i} < h$ and $\tau (h, \mathbf{g}_{j}) < \tau(\mathbf{h}_{i})$. It follows that $\mathbf{g}_{j}$ participates in $\Theta$. However, it implies that some history in $\mathbf{h}_{i}$ interpolates in the opportunity involving $\mathbf{g}_{j}$, violating the immediacy. 
\end{proof}

\begin{lemma}
\textbf{(Weak preservation of following)} Let $\Theta$ be a complete ICO of $G$ and $\mathbf{h}_{i}, \mathbf{g}_{j}$ be two information sets in $G$ satisfying $\mathbf{h}_{i} < \mathbf{g}_{j}$ and $\mathbf{h}_{i} \nsim \mathbf{g}_{j}$. Then $\tau(\mathbf{h}) \lesssim\tau (\mathbf{g})$, or, equivalently, $\tau(\mathbf{h}_{i})  > \tau(\mathbf{g}_{j})$ does not hold in $\tau(G, \Theta)$.
\end{lemma}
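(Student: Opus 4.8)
The plan is to argue by contradiction, paralleling almost verbatim the case analysis already carried out in the proof of Lemma 8. I would assume $\tau(\mathbf{h}_i) > \tau(\mathbf{g}_j)$ in $\tau(G,\Theta)$. By the meaning of $>$ this produces a witness: a history $h \in \mathbf{g}_j$ whose image is strictly before some image of a history of $\mathbf{h}_i$, i.e. $\tau(h,\mathbf{g}_j) < \tau(\mathbf{h}_i)$. The whole argument then reduces to locating where $h$ sat relative to $\mathbf{h}_i$ in the original $G$ and showing that each possibility is untenable, so that in fact $\tau(\mathbf{h}_i) \not> \tau(\mathbf{g}_j)$; together with Lemma 3 (which keeps $\tau(\mathbf{h}_i)$ and $\tau(\mathbf{g}_j)$ related) this delivers $\tau(\mathbf{h}_i) \lesssim \tau(\mathbf{g}_j)$.

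First I would pin down the position of $h$ relative to $\mathbf{h}_i$ in $G$. Since $G \in \mathcal{G}_{UO}$ and $\mathbf{h}_i < \mathbf{g}_j$, UO forbids $\mathbf{g}_j < \mathbf{h}_i$, so no history of $\mathbf{g}_j$, in particular $h$, lies before any history of $\mathbf{h}_i$; hence $h \not< \mathbf{h}_i$. The hypothesis $\mathbf{h}_i \nsim \mathbf{g}_j$ gives $\mathbf{h}_i \cap \mathbf{g}_j = \emptyset$, so $h \notin \mathbf{h}_i$ — this is exactly what makes the overlap case (Case 1 of the Lemma 8 argument) vacuous here. Finally, by the conservation principle underlying Lemma 3 a compactification only slides histories up their own predecessor chains, and so cannot make an $h$ that was off $\mathbf{h}_i$'s branch comparable to $\mathbf{h}_i$; thus $h$ must already have been related to $\mathbf{h}_i$ in $G$, and the only surviving option is $\mathbf{h}_i < h$.

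The remaining case is precisely Case 2 of Lemma 8. Since $\tau(h,\mathbf{g}_j) < \tau(\mathbf{h}_i)$ while $\mathbf{h}_i < h$ in $G$, the history $h$ must have been shifted upward, so the part of $\mathbf{g}_j$ containing $h$ participates in $\Theta$; by immediacy (Definitions 1 and 2) the shift carries $h$ only to its immediate predecessor $p$. For the image of $h$, now sitting at $p$, to precede an image of some $h' \in \mathbf{h}_i$, a history of $\mathbf{h}_i$ would have to interpolate strictly between $p$ and $h$ in $G$, contradicting that $h$ is an immediate successor of $p$. Equivalently, writing $h^* \in \mathbf{h}_i$ for the witness of $\mathbf{h}_i < h$ (so $h^* \prec h$ and hence $h^* \preceq p$), the required $h'$ would force $h^* \prec h'$ with $h^*, h'$ in the same information set $\mathbf{h}_i$, which perfect recall forbids. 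Either phrasing closes the case and completes the contradiction.

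The step I expect to be delicate is the last one in the situation where $\mathbf{h}_i$ \emph{itself} also participates in $\Theta$, so that its histories are relocated too; then $\tau(h')$ is not simply $h'$ and the clean ``interpolation'' picture must be checked against the simultaneous shifting of $\mathbf{h}_i$. I would handle this by invoking Lemma 7 together with completeness (Definition 4), to argue that overlapping and transitively simultaneous pieces move as blocks to pairwise-simultaneous destinations, so that the ancestor relation between $p$ and the image of any $\mathbf{h}_i$-history is inherited from $G$; the perfect-recall obstruction $h^* \prec h'$ then reappears unchanged. Making this inheritance precise — that immediate, complete shifts cannot reverse the $p$-to-$\mathbf{h}_i$ ancestry — is the only genuine content beyond transcribing the Case~2 argument of Lemma 8.
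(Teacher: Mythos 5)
Your overall route is the paper's: assume $\tau(\mathbf{h}_i) > \tau(\mathbf{g}_j)$, reduce to the Case~2 situation of Lemma 8, and kill that case with immediacy (your perfect-recall finish $h^{\ast} \preceq p \prec h'$ with $h^{\ast},h' \in \mathbf{h}_i$ is a fine way to close it). The genuine gap is the reduction itself, i.e.\ the step asserting that the witness $h \in \mathbf{g}_j$ must satisfy $\mathbf{h}_i < h$. After UO and disjointness are used, one case remains: $h$ \emph{unrelated} to $\mathbf{h}_i$ in $G$. You dismiss it by claiming that a compactification, since it only slides histories up their own predecessor chains, ``cannot make an $h$ that was off $\mathbf{h}_i$'s branch comparable to $\mathbf{h}_i$.'' That principle is false: if $p$ is the immediate predecessor of $h$ and some $h' \in \mathbf{h}_i$ lies below $p$ on a \emph{sibling} branch (so $h$ and $h'$ are incomparable in $G$), then once $h$ is shifted up to $p$ its image strictly precedes $h'$. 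Lemma 3 conserves relations between \emph{information sets}, not between individual histories, so it cannot do the work you ask of it. (The paper's own two-line proof has the same blind spot --- its appeal to ``interpolation as in Lemma 8'' tacitly assumes the moving witness is preceded by a history of $\mathbf{h}_i$ --- but it does not assert the false sliding principle.)

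This case is not vacuous, and in it no contradiction is available, because the conclusion in the form ``$\tau(\mathbf{h}_i) > \tau(\mathbf{g}_j)$ does not hold'' is itself false. Take: a root where player $k$ chooses $L$ or $R$; $\mathbf{e}_j = \{L,R\}$ with $F_j(\mathbf{e}_j)=\{a^{\ast},b\}$, player $i$ also active at $R$ with actions $x,y$; $\mathbf{g}_j = \{La^{\ast},\, R(x,a^{\ast}),\, R(y,a^{\ast})\}$ with $F_j(\mathbf{g}_j)=\{c,d\}$; $\mathbf{h}_i = \{R, Lb\}$; all remaining histories terminal. This $G$ has perfect recall, satisfies UO, $\mathbf{h}_i < \mathbf{g}_j$ (via $R \prec R(x,a^{\ast})$), $\mathbf{h}_i \nsim \mathbf{g}_j$, and $\Theta = \langle (\mathbf{e}_j,\mathbf{g}_j) \rangle$ is a complete ICO (the $\sim^{\ast}$-class of $\mathbf{g}_j$ is a singleton). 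After coalescing, $\tau(\mathbf{g}_j) = \{L,R\}$ and $L \prec Lb \in \tau(\mathbf{h}_i)$, so $\tau(\mathbf{h}_i) > \tau(\mathbf{g}_j)$ \emph{holds}. Only the $\lesssim$ form of the lemma survives --- here $\tau(\mathbf{h}_i) \sim \tau(\mathbf{g}_j)$ through the shared history $R$ --- so the statement's ``or, equivalently'' is wrong, and any argument by contradiction from ``$>$'' (yours, and the paper's) cannot be completed. The provable claim should be argued directly: take the witness $h^{\ast} \prec g$ of $\mathbf{h}_i < \mathbf{g}_j$; either the piece of $\mathbf{g}_j$ containing $g$ does not move, and $<$ persists, or it moves to the immediate predecessor $p$ of $g$, which satisfies $h^{\ast} \preceq p$; then $h^{\ast} \prec p$ yields $<$, while $h^{\ast} = p$ yields $\sim$ (or again $<$ if the piece of $\mathbf{h}_i$ containing $h^{\ast}$ itself moves up). Incidentally, your unrelated case \emph{can} be excluded when the moving piece is an IS sub-mover, since dictation $Z(p) = Z(\mathbf{d}_j)$ forces $h'$ to be comparable with some history of $\mathbf{d}_j$, and each alternative contradicts immediacy, UO, or $\mathbf{h}_i \nsim \mathbf{g}_j$; it is the Coalescing components of $\Theta$ that produce the phenomenon above.
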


\begin{proof}
 Suppose that $\tau(\mathbf{h}_{i})  > \tau(\mathbf{g}_{j})$ holds. Since $G$ satisfies UO, $\mathbf{h}_{i} > \mathbf{g}_{j}$ does not hold in $G$. It follows that  $\mathbf{g}_{j}$ participates in $\Theta$. Yet, as in Lemma 8, it follows that some history in $\mathbf{h}_{i}$ interpolates in the opportunity involving $\mathbf{g}_{j}$, violating the immediacy. 
\end{proof}

Lemmas 7 - 9 can be summarized as follows. Consider an extensive game structure $G$ satisfying UO and a complete ICO $\Theta$  of $G$. Then $\tilde{G}:= \tau(G, \Theta)$ also satisfies UO. For each information sets $\mathbf{h}, \mathbf{g}$ in $G$, if $\mathbf{h} \sim \mathbf{g}$, then  $\tau(\mathbf{h}) \sim \tau (\mathbf{g})$; if $\mathbf{h} < \mathbf{g}$, then $\tau(\mathbf{h}) \lesssim\tau (\mathbf{g})$.

\subsection{Complete immediate compactifications and monotonicity}

Fix an extensive game structure $G= \langle I, \bar{H}, (A_{i},\mathbf{H}_{i})_{i \in I} \rangle$ satisfying UO and an extensive game $\Gamma = \langle G, (v_{i})_{i \in I} \rangle$ based on it. Theorem 1 shows that the transformations and their compositions do not alter each player's strategy sets and the mapping from strategy profiles to terminal histories. Hence in the following we use $\mathcal{S}_{i}$ ($i \in I$), $\mathcal{S}$, and $\zeta: \mathcal{S} \rightarrow Z$ on all structures behaviorally equivalent to $G$.

Here are some observations. First, given an IS (not necessarily non-crossing) opportunity $(h, \mathbf{d}_{i})$ with $\mathbf{d}_{i} \subseteq \mathbf{h}_{i}$ in $G$, for each information set $\mathbf{g}$, the strategies reaching it do not change after the Coalescing, i.e., $\mathcal{S}(\mathbf{g}) = \mathcal{S}(\sigma (\mathbf{g}))$. To see this, we only need to check $\mathbf{h}_{i}$ since its subset $\mathbf{d}_{i}$ shifts up and may be reached by more strategies. However, since $Z(h) = Z(\mathbf{d}_{i})$, every strategy that reaches $h$ also reaches $\mathbf{d}_{i}$ and \emph{vice versa}.

Second, in contrast, Coalescing enlarges the set of strategies reaching the mover. Formally, given a Coalescing opportunity $(\mathbf{h}_{i}, \mathbf{h}^{\prime})$ in $G$, we have $\mathcal{S}(\mathbf{h}_{i}^{\prime})\subsetneq \mathcal{S} (\gamma (\mathbf{h}_{i}^{\prime}))$. Indeed, since $Z(\mathbf{h}_{i}^{\prime}) = Z(\mathbf{h}_{i}a_{i}^{\ast})$ for some $a_{i}^{\ast} \in F_{i}(\mathbf{h}_{i})$ and $|F_{i}(\mathbf{h}_{i})| >1$, after the coalescing, each strategy that reaches $\mathbf{h}_{i}$ in $G$ and chooses an action there other than $a_{i}^{\ast}$ also reaches $\gamma (\mathbf{h}_{i}^{\prime}) (= \mathbf{h}_{i})$. On the other hand, note that $\mathcal{S}(\mathbf{g}) = \mathcal{S}(\gamma (\mathbf{g}))$ for every other information set $\mathbf{g}$.

The second observation concerns us since the structure of some decision problem may change due to the increase of strategies that reach some information set and the strictly dominated strategies may be influenced. Nevertheless, the following theorem states that domination are not obstructed in a complete immediate compactification.

\begin{theorem}
\textbf{(A complete ICO transformation is monotonic)} Let $\Theta$ be a complete ICO in $G$ and $\tilde{G} = \tau (G, \Theta)$. If a strategy does not survive BD in $\Gamma = \langle G, (v_{i})_{i \in I} \rangle$, neither does it in $\tilde{\Gamma} : = \langle\tilde{G}, (v_{i})_{i \in I} \rangle$; or, equivalently, if a strategy survives BD in $\tilde{\Gamma}$, so does it in $\Gamma$.
\end{theorem}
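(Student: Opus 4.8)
The plan is to prove the contrapositive by a single induction on the round $n$ of BD, comparing the decision problems of $\Gamma$ and $\tilde{\Gamma}$ through the identification of the common strategy set $\mathcal{S}$ furnished by Theorem 1 (behavioral equivalence identifies $\mathcal{S}_i$, $\mathcal{S}$, and $\zeta$ across $G$ and $\tilde{G}$); Lemma 8 guarantees $\tilde{G} \in \mathcal{G}_{UO}$, so BD is well-defined on $\tilde{\Gamma}$. Writing $\Psi^n_G$ and $\Psi^n_{\tilde{G}}$ for the two procedures and setting $\tau(\mathbf{h}'_{i_t}) := \mathbf{h}_{i_t}$ on the Coalescing movers, I would establish the inductive claim
\[
\Psi^n_{\tilde{G}}(\tau(\mathbf{f})) \cap \mathcal{S}(\mathbf{f}) \subseteq \Psi^n_G(\mathbf{f})
\quad \text{for every information set } \mathbf{f} \text{ of } G \text{ and every } n \ge 0 .
\]
Evaluated at the root $\mathbf{f} = \emptyset$ (where $\mathcal{S}(\emptyset) = \mathcal{S}$ and $\tau(\emptyset) = \emptyset$), this gives $\Psi^n_{\tilde{G}}(\emptyset) \subseteq \Psi^n_G(\emptyset)$ for all $n$, which by the definition of survival is exactly the statement that every strategy surviving BD in $\tilde{\Gamma}$ also survives in $\Gamma$. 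The base case $n = 0$ is immediate, since both sides reduce to reaching sets and the intersection with $\mathcal{S}(\mathbf{f})$ absorbs precisely the enlargement $\mathcal{S}(\mathbf{h}'_{i_t}) \subsetneq \mathcal{S}(\tau(\mathbf{h}'_{i_t}))$ caused by Coalescing (the second observation above).

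For the inductive step I would take $s \in \Psi^n_{\tilde{G}}(\tau(\mathbf{f})) \cap \mathcal{S}(\mathbf{f})$; the induction hypothesis yields $s \in \Psi^{n-1}_G(\mathbf{f})$, so it suffices to show $s$ is not deleted at round $n$ in $G$. Suppose, toward a contradiction, that $s \in sd(\Psi^{n-1}_G(\mathbf{g}))$ for some $\mathbf{g} \gtrsim \mathbf{f}$ owned by a player $i$. The first ingredient is that the eliminating information set survives with the right position: combining Lemma 7 and Lemma 9, $\mathbf{g} \gtrsim \mathbf{f}$ in $G$ forces $\tau(\mathbf{g}) \gtrsim \tau(\mathbf{f})$ in $\tilde{G}$, so $\tau(\mathbf{g})$ is a legitimate deleting set for $\tau(\mathbf{f})$ at round $n$ of $\tilde{G}$. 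It then remains to transfer the strict domination from $\Psi^{n-1}_G(\mathbf{g})$ to $\Psi^{n-1}_{\tilde{G}}(\tau(\mathbf{g}))$: once $s \in sd(\Psi^{n-1}_{\tilde{G}}(\tau(\mathbf{g})))$ is established, it contradicts $s \in \Psi^n_{\tilde{G}}(\tau(\mathbf{f}))$ and closes the induction.

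This domination transfer is the heart of the argument and the step I expect to be the main obstacle. Writing $\Psi^{n-1}_G(\mathbf{g}) = D_i \times D_{-i}$ and $\Psi^{n-1}_{\tilde{G}}(\tau(\mathbf{g})) = \tilde{D}_i \times \tilde{D}_{-i}$, I must convert a mixture $\mu_i \in \Delta(D_i)$ strictly beating $s_i$ on $D_{-i}$ into one in $\Delta(\tilde{D}_i)$ strictly beating $s_i$ on $\tilde{D}_{-i}$. Two features of a complete ICO make this possible. First, the Coalescing enlargement of reaching sets is confined to the mover's \emph{own} strategies: since reaching $\mathbf{h}'_{i_t}$ from $\mathbf{h}_{i_t}$ constrains only player $i$ (to the action $a^{\ast}_i$ with $Z(\mathbf{h}_{i}a^{\ast}_i) = Z(\mathbf{h}'_{i})$) and leaves the opponents' feasible continuations free, one has $\mathcal{S}_{-i}(\mathbf{h}_{i}) = \mathcal{S}_{-i}(\mathbf{h}'_{i})$; projecting the induction hypothesis onto the opponents then gives $\tilde{D}_{-i} \subseteq D_{-i}$, so $\mu_i$ already beats $s_i$ on all of $\tilde{D}_{-i}$ and no new opponent profile can obstruct domination. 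Second, any strategy in $D_i \setminus \tilde{D}_i$ has been deleted earlier in $\tilde{G}$, hence is itself strictly dominated; substituting each such strategy in the support of $\mu_i$ by its dominator and iterating — which terminates by finiteness and only widens $s_i$'s payoff gap — yields a dominating mixture supported on $\tilde{D}_i$.

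The delicate point, where I would concentrate the care, is that the strategies of $D_i \setminus \tilde{D}_i$ may be eliminated in $\tilde{G}$ at several distinct information sets weakly following $\tau(\mathbf{g})$, each carrying its own opponent set, whereas the substitution must deliver domination against the single set $\tilde{D}_{-i}$. Reconciling these is again the office of Lemmas 7 and 9: they ensure the information sets governing those eliminations are simultaneous with, or follow, $\tau(\mathbf{g})$, so that their opponent sets restrict consistently to $\tilde{D}_{-i}$ and the substitution stays valid there. Making this reconciliation precise — in particular the case $\mathbf{g} = \mathbf{f}$ in which $\mathbf{g}$ is itself a Coalescing mover and its reaching set genuinely grows — is the technical core; everything else is bookkeeping supported by Lemmas 7--9 and by the invariance $\mathcal{S}(\mathbf{g}) = \mathcal{S}(\sigma(\mathbf{g}))$ of reaching sets under IS.
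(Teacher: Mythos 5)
Your proposal follows essentially the same route as the paper's own proof: your round-by-round induction with the invariant $\Psi^n_{\tilde{G}}(\tau(\mathbf{f})) \cap \mathcal{S}(\mathbf{f}) \subseteq \Psi^n_G(\mathbf{f})$ is just the explicit form of the paper's minimal-round-$t$ argument, and your three ingredients — Lemmas 7 and 9 for (weak) preservation of the ordering, the observation that Coalescing enlarges reaching sets only in the mover's own component so the opponents' side of each decision problem does not grow, and the substitution of eliminated support strategies by their dominators — correspond exactly to the paper's Cases 2, 3, and 1, respectively. The delicate point you flag at the end (eliminations in $\tilde{G}$ occurring at other information sets, each with its own opponent set) is left at the same level of informality in the paper's proof (its appeal to ``the choice of $t$'' plus Case 1), so your sketch is faithful to the published argument.
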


\begin{proof}
Let $\Psi^{0}, \Psi^{1},...,\Psi^{m}$ be the BD of $\Gamma$. Let $t$ be the smallest number such that some strategy eliminated from $\Psi^{t-1}$ to $\Psi^{t}$ but survives in the BD of $\tilde{\Gamma}$. Let $s_{i} \in \mathcal{S}_{i}$ and $\mathbf{h}_{i} \in \mathbf{H}_{i}$, $i \in I$, such that $s_{i} \in$ Proj$_{\mathcal{S}_{i}} \Psi^{t-1}(\mathbf{h}_{i})$, $s_{i} \notin$ Proj$_{\mathcal{S}_{i}} \Psi^{t}(\mathbf{h}_{i})$, and for some $\mu_{i} \in  \Delta($ Proj$_{\mathcal{S}_{i}} \Psi^{t-1}(\mathbf{h}_{i}))$, $u_{i}(s_{i},s_{-i}) < u_{i} (\mu_{i},s_{-i})$ for all $s_{-i} \in$ Proj$_{\mathcal{S}_{-i}}\Psi^{t-1}(\mathbf{h}_{i})$ but $s_{i}$ survives the BD in $\tilde{\Gamma}$.
We discuss three possible causes of it.

Case 1. Some $s_{i}^{\prime} \in$ supp$\mu_{i}$ is eliminated somewhere. Yet this does not cause any problem, since the mixed strategy obtained by replacing $s_{i}^{\prime}$ in $\mu_{i}$ by the (mixed) strategy dominating $s_{i}^{\prime}$ still dominates $s_{i}$.

Case 2. The ordering of information sets changed. By the choice of $t$, all strategy eliminated before round $t$ in $\Gamma$ is also eliminated in the BD of $\tilde{\Gamma}$. Hence some information set at which the elimination of a strategy provides the base for the elimination of $s_{i}$ is now shifted up and is before $\tau (\mathbf{h}_{i})$ in $\tilde{\Gamma}$. Yet by Lemmas 7 and 9, this is impossible.

Case 3. $\mathcal{S}_{-i}(\tau (\mathbf{h}_{i}))$ now contains some $s^{\ast}_{-i}$ such that $u_{i}(s_{i},s^{\ast}_{-i}) \geq u_{i}(s_{i}^{\prime}, s^{\ast}_{-i})$. According to the second observation above, it may happen when some $\mathbf{g}_{j} \in \mathbf{\Theta}$ simultaneous with $\mathbf{h}_{i}$ participates in an immediate Coalescing $(\mathbf{g}_{j}^{\prime}, \mathbf{g}_{j})$ in $\tau$. Note that $i \neq j$. We need to discuss the following possibilities.

3.1. $\mathbf{h}_{i}$ participates in an immediate coalescing $(\mathbf{h}_{i}^{\prime}, \mathbf{h}_{i})$ in $\tau$. In case 1 of the proof of Lemma 7,  we have shown that $\mathbf{h}_{i}^{\prime} \sim \mathbf{g}_{i}^{\prime}$, i.e., there is some $h \in \mathbf{h}_{i}^{\prime} \cap \mathbf{g}_{i}^{\prime}$ which is before $\mathbf{h}_{i}$. Since actions at each history is independent, it follows that each strategy that reaches $\mathbf{g}_{i}^{\prime} = \tau (\mathbf{g}_{i})$ also reaches $\mathbf{h}_{i}$. Hence the decision problem at $\mathbf{h}_{i}$ (or $\tau (\mathbf{h}_{i})$) is not enlarged by $(\mathbf{g}_{j}^{\prime}, \mathbf{g}_{j})$.

3.2. $\mathbf{h}_{i}$ participates in an immediate IS $(h, \mathbf{d}_{i})$ with $\mathbf{d}_{i} \subseteq \mathbf{h}_{i}$. Since we have shown in case 2 in the proof of Lemma 7 that $h \in \mathbf{g}_{j}^{\prime}$, and $Z(h) = Z(\mathbf{d}_{i})$, every strategy reaching $\tau (\mathbf{g}_{j}) = \mathbf{g}_{j}^{\prime}$ also reaches $\mathbf{d}_{i} \subseteq \mathbf{h}_{i}$. Hence the decision problem at $\mathbf{h}_{i}$ (or $\tau (\mathbf{h}_{i})$) is not enlarged by $(\mathbf{g}_{j}^{\prime}, \mathbf{g}_{j})$, either.

So far we have shown that $s_{i}$ should also be eliminated in the BD of $\tilde{\Gamma}$, a contradiction. Therefore, each strategy that does not survive the BD of $\Gamma$ does not survive the BD of $\tilde{\Gamma}$.  
\end{proof}

We may expect that iterated compactifications lead to a minimal game in which we can eliminate ``as many strategies as possible''. Yet here is another problem: compactification processes are not order-independent. For example, in Figure \ref{fig:G76} (1), if $\mathbf{h}_{2}$ is simultanized with $\emptyset$, we obtain the structure in (2), while if $\mathbf{h}_{3}$ is simultanized with $B$, we obtain (3). Both have no complete ICO and henceforth minimal in that sense, but they are not isomorphic and may lead to different strategies surviving  BD.

\begin{figure}
\centering
  \includegraphics[width=0.7\linewidth]{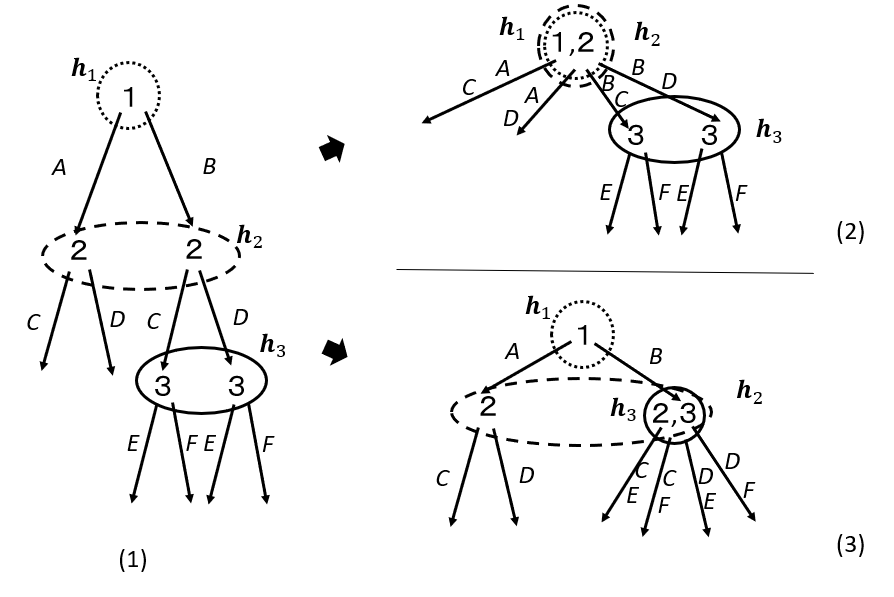}
  \caption{Equal-length absorption in a von Neumann structure}
  \label{fig:G76}
\end{figure}

Here we introduce a special complete ICO compactification called the  \emph{backward compactification},\footnote{The author thanks an anonymous reviewer for suggesting the name.} which compactifies the game ``from the leaves to the root''. To be specific, since transitive simultaneity partitions $\mathbf{H}$ into equivalence classes, we can check whether there is some complete ICO at the equivalence classes that containing immediate predecessors of terminal histories; if there is, we do the compactification, if not, we move to the classes that next to them and check the existence of complete ICO, etc.\footnote{Note that if the length of the longest history in $G$ is at least 2, then $G$ have at least two equivalence classes with respect to transitive simultaneity. Because, due to perfect recall, the root always form a singleton information set for any player active at it. An insightful discussion of this property of information sets is given in Battigalli et al \cite{blm20}, the proof of Lemma 4.} For example, Figure \ref{fig:G76} (3) is the backward compactification.

Backward compactification maximizes the number of information sets that may have some new strategy to eliminate after compactifications. Indeed, after a transformation based on some complete ICO, the information sets simultaneous to and before the participating information sets may have some new strategies to eliminate, while those following them does not. In this sense, backward compactification can be seen as a benchmark.

\section{Discussion}\label{sec:dis}

\subsection{The necessity of complete ICO for monotonicty}

Here we show by examples that the immediacy in the Definitions 1, 2, and 4 are needed for monotonicity.

Both conditions in Definitions 1 and 2 are to avoid a interpolating history ``dragging down'' some information sets and destroying simultaneity, which is illustrated in Figure \ref{fig:imb}. In (1), $(\mathbf{h}_{21}, \mathbf{h}_{22})$ is a Coalescing opportunity and $BD$ is a history between them. The transformation on $\langle (BD, \mathbf{h}_{3}), (\mathbf{h}_{21}, \mathbf{h}_{22})\rangle$ destroys the simultaneity between $\mathbf{h}_{21}$ and $\mathbf{h}_{3}$. Similarly, in Figure\ref{fig:imb} (2), transformations on $\langle (R, \mathbf{h}_{4}), (RA, \mathbf{h}_{5}) \rangle$ destroys the simultaneity between $\mathbf{h}_{4}$ and $\mathbf{h}_{5}$ due to the interpolating history $RA$ between $R$ and $\mathbf{h}_{4}$.

\begin{figure}
\centering
  \includegraphics[width=0.9\linewidth]{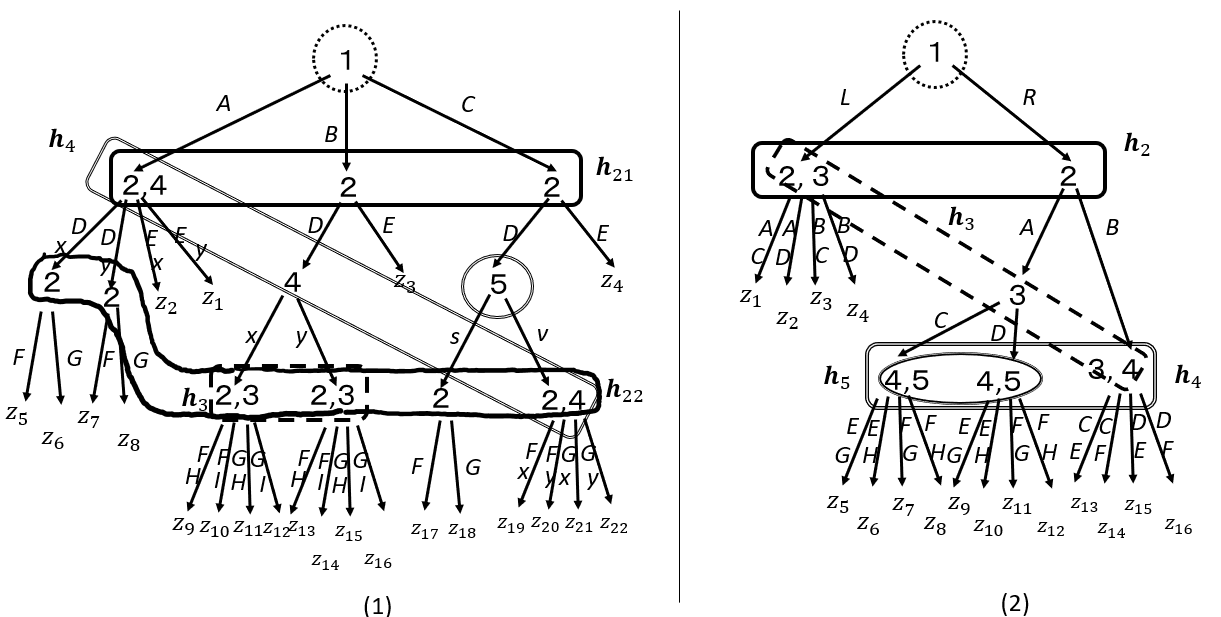}
  \caption{Simultaneity may be broken without the conditions for immediacy }
  \label{fig:imb}
\end{figure}

Condition (b) in Definition 4 is needed to preserve UO. Figure \ref{fig:ICD} (1) gives an example. The transformation on $\langle  (A, \{AC, AD\}), (BC, \{BCx, BCy\}) \rangle$ with $\{AC, AD\} \subseteq \mathbf{h}_{4}$ and $\{BCx, BCy\} \subseteq \mathbf{h}_{5}$ lead to $\mathbf{h}_{4} > \mathbf{h}_{5}$ and $\mathbf{h}_{4} < \mathbf{h}_{5}$. Here the problem is that the history in $\mathbf{h}_{4} \cap \mathbf{h}_{5}$ does not move with  them, i.e., Condition (b) in Definition 4 is violated.

Based on those examples, we may claim that the conditions are necessary for monotonicity in a weak sense. However, they are not in the strong sense. In Figure \ref{fig:ICD} (2), the IS opportunity $(RA, \{RAC, RAD\})$ with $\{RAC, RAD\} \subseteq \mathbf{h}_{22}$ seems desirable since no simultaneity is destroyed there. Yet $\langle (RA, \{RAC, RAD\})\rangle$ is not a complete ICO since $\mathbf{h}_{3}$, which is simultaneous with $\mathbf{h}_{22}$, does not participate. Similarly, there is no complete ICO in Figure \ref{fig:ICD} (2). Even though both $\mathbf{h}_{2}$ and $\mathbf{h}_{3}$ can shift up to $\emptyset$ without destroying weak following, $(\emptyset, \{BC,BD\})$ is not an immediate IS opportunity since $B$ is between them. Perhaps some  weaker conditions still satisfy monotonicity. Research in this direction is needed.
\begin{figure}
\centering
 \includegraphics[width=0.9\linewidth]{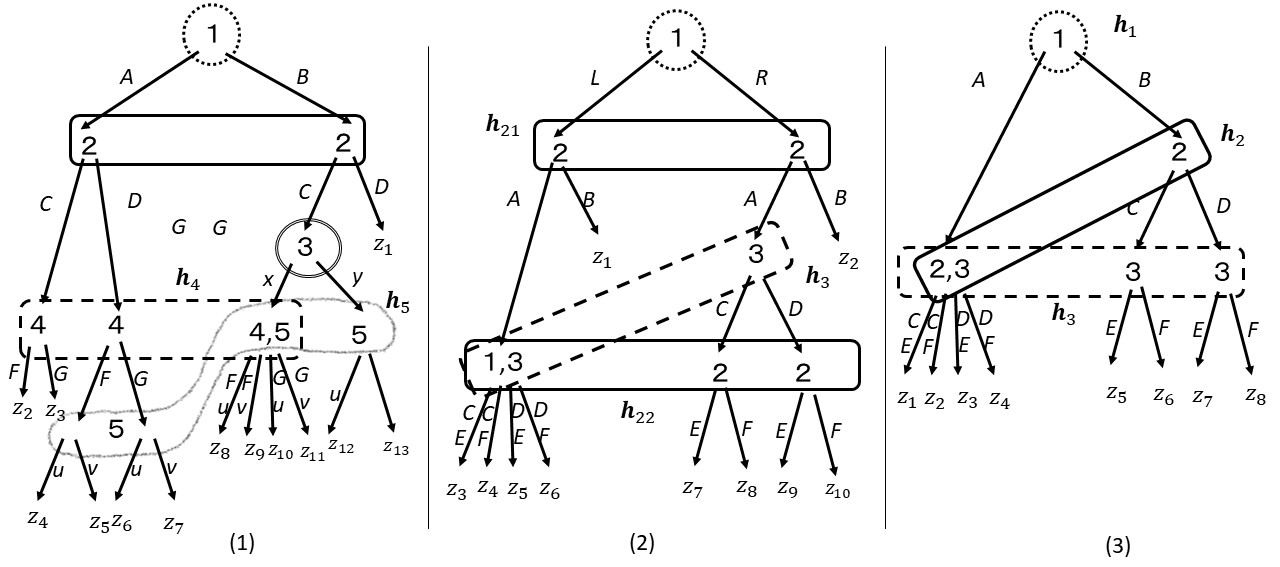}
  \caption{Immediacy are weakly but not strongly necessary for monotonicity}
  \label{fig:ICD}
\end{figure}

\subsection{Preserving the equal-length property}

As mentioned in Section \ref{sec:pre}.2, the equal-length property (EL) and von Neumann structures (vNMs) are crucial for Bonanno \cite{bo14}'s generalized backward induction. To preserve EL, we have to modify Coalescing and IS to ``collectively'' shift up information sets, which is similar to complete ICO. 

Consider a vNM $G$. For histories $h_{1}, ... , h_{n}$ $(n \geq 1)$, and an information set $\mathbf{h}_{i}$, we say that $\mathbf{h}_{i}$ \emph{completely controls} $h_{1}, ... , h_{n}$, denoted by $\{h_{1}, ... , h_{n}\}\lessdot_{i} \mathbf{h}_{i}$, iff:
\begin{enumerate}[label=\arabic*)]
\item $h_{s}$ and $h_{t}$ are not in a chain if $s \neq t$; also, $i \not \in I(h_{t})$ for $t = 1,...,n$.

\item There is a partition $\{\mathbf{d}_{i}^{1},...,\mathbf{d}_{i}^{n}\}$ of $\mathbf{h}_{i}$ such that $h_{t} \lessdot_{i} \mathbf{d}_{i}^{t}$ for $t = 1,...,n$.

\end{enumerate}
The complete control satisfies the \emph{equal-length property} (EL) iff $\ell (h_{1}) = ...=\ell (h_{n})$.

 A tuple $\Lambda = \langle (\mathbf{h}_{i_{1}}^{\prime}, \mathbf{h}_{i_{1}}),...,(\mathbf{h}_{i_{m}}^{\prime}, \mathbf{h}_{i_{m}}); (g_{1}^{1},g_{2}^{1},...,g_{L(1)}^{1}; \mathbf{g}_{j_{1}}), ..., (g_{1}^{n},...,g_{L(n)}^{n}; \mathbf{g}_{j_{n}})\rangle$ with $L: \mathbb{N} \rightarrow \mathbb{N}$ and $n,m \in \mathbb{N}_{0}$ (at least one in $m,n$ is positive). We call $\Lambda$ a \emph{synthesized opportunity} iff the following conditions are satisfied:

\begin{enumerate}[label=(\roman*)]

\item $\mathbf{h}_{i_{1}},...,\mathbf{h}_{i_{m}}$, $\mathbf{g}_{j_{1}},...,\mathbf{g}_{j_{n}}$ are distinct information sets.

\item $\mathbf{h}_{i_{t}}^{\prime} \ll_{i_{t}} \mathbf{h}_{i_{t}}$ for $t = 1,...,m$ and $\{g_{t}^{n},...,g_{L(t)}^{t}\} \lessdot_{j_{t}} \mathbf{g}_{j_{t}}$ satisfying EL for $t = 1,...,n$.

\item We use $\varphi$ to denote the composed transformation applied on $\Lambda$; for each history $h$, we define $L(h, \Lambda) = |\{g\prec h: $ each information set containing $g$ is a mover in $\Lambda\}|$. We require that for each information set $\mathbf{h} \in \mathbf{H}$ which does not participate in $\Lambda$, $L(h,\Lambda) = L(g, \Lambda)$ for each histories $h, g \in \mathbf{h}$.

\end{enumerate}

Condition (iii) means that (I) each information set in $G$ either completely follows $\mathbf{h}_{i_{1}},...,\mathbf{h}_{i_{m}}$, $\mathbf{g}_{j_{1}},...,\mathbf{g}_{j_{n}}$ or does not follow them at all, and (II) for an information set that completely following $\mathbf{h}_{i_{1}},...,\mathbf{h}_{i_{m}}$, $\mathbf{g}_{j_{1}},...,\mathbf{g}_{j_{n}}$, all histories in it move the same length upwardly.

For example, the vNM in Figure \ref{fig:GA} (1) has two synthesized opportunities,  $\Lambda_{1} = \langle (\mathbf{h}_{21}, \mathbf{h}_{22})$, $(\mathbf{h}_{31}, \mathbf{h}_{32})\rangle$ and $\Lambda_{2} = \langle (\mathbf{h}_{21}, \mathbf{h}_{22}), (\mathbf{h}_{31}, \mathbf{h}_{32}), (B, \mathbf{h}_{31}) \rangle$. Figure \ref{fig:GA} (2) is the structure obtained by applying the transformation on $\Lambda_{2}$. Note that if we want to move $\mathbf{h}_{22}$, we have to move $\mathbf{h}_{32}$, and \emph{vice versa}. Otherwise the EL of $\mathbf{h}_{5}$ would be destroyed. If we want to move $\mathbf{h}_{31}$, we have to move $\mathbf{h}_{22}$, otherwise the EL of $\mathbf{h}_{22}$ would be destroyed. But the move of $\mathbf{h}_{22}$ does not need that of $\mathbf{h}_{31}$.

The inverses of $\varphi$ should also be restricted on vNMs. A result parallel to Theorem 1 can be proved. Namely, two vNMs are behaviorally equivalent if and only if they can be transformed into each other, up to isomorphisms, through (possibly empty) finite sequences of $\varphi$ and its inverse. The proof is similar to that of Theorem 1 and is omitted here.

\medskip
A complete ICO  in a vNM is a row of transitively simultaneous information sets which can be completely shifted up to their immediate predecessors. It is straightforward to see that the monotonicity of Bonanno \cite{bo14}'s generalized backward induction holds in the transformation on a complete ICO.

 \begin{figure}
\centering
  \includegraphics[width=0.9\columnwidth]{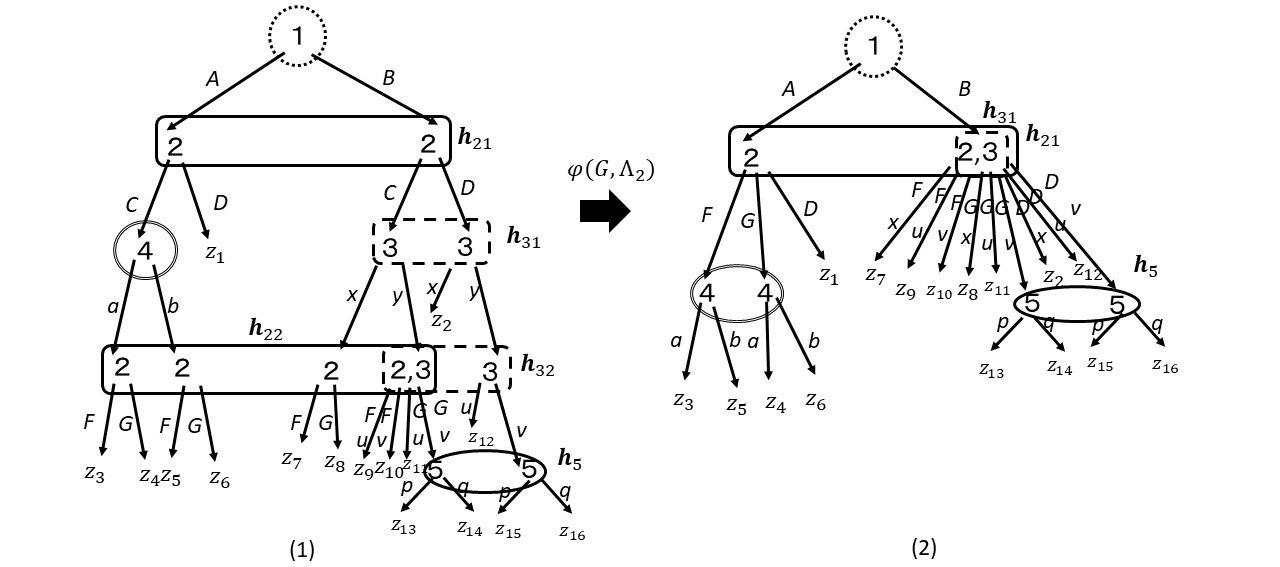}
  \caption{A synthesized opportunity and the transformation applied on it}
  \label{fig:GA}
\end{figure}

\end{document}